\documentclass{siamltex}


\usepackage{amsmath}
\usepackage{amssymb}
\usepackage{cite}
\usepackage{enumitem}
\usepackage{graphicx}
\usepackage{ifthen}
\usepackage{subfigure}
\usepackage{tabularx}
\usepackage{booktabs}
\usepackage{microtype}



\newcommand{\newparentheses}[3]{%
  \expandafter\newcommand\csname #1\endcsname[1]{#2##1#3}%
  \expandafter\newcommand\csname #1L\endcsname[1]{\bigl#2##1\bigr#3}%
  \expandafter\newcommand\csname #1XL\endcsname[1]{\Bigl#2##1\Bigr#3}%
  \expandafter\newcommand\csname #1V\endcsname[1]{\left#2##1\right#3}}

\newparentheses{parens}{(}{)}
\newparentheses{set}{\{}{\}}
\newparentheses{size}{|}{|}


\makeatletter
\newcommand{\onenewattribute}[3]{%
  \@ifundefined{#1}{\let\@@def\newcommand}{\let\@@def\renewcommand}%
  \expandafter\@@def\csname #1\endcsname[2][]{%
    \ifthenelse{\equal{##1}{}}%
    {#2\csname #3\endcsname{##2}}%
    {#2_{##1}\csname #3\endcsname{##2}}}}
\newcommand{\newattribute}[2]{%
  \onenewattribute{#1}{#2}{parens}%
  \onenewattribute{#1L}{#2}{parensL}%
  \onenewattribute{#1XL}{#2}{parensXL}%
  \onenewattribute{#1V}{#2}{parensV}}
\makeatother


\newattribute{Oh}{\mathrm{O}}
\newattribute{Thehta}{\Theta}
\newattribute{oh}{\mathrm{o}}

\newattribute{alg}{\textsc{Maf}}
\newattribute{aalg}{\textsc{Maaf}}
\newattribute{balg}{\textsc{Refine}}

\newattribute{interval}{I}


\newcommand{\subtree}[2][]{%
  \ifthenelse{\equal{#1}{}}%
  {T(#2)}%
  {#1(#2)}}
\newcommand{\induced}[2][]{%
  \ifthenelse{\equal{#1}{}}%
  {T|#2}%
  {#1|#2}}
\newattribute{maf}{m}
\newattribute{maaf}{\tilde{m}}
\newattribute{ecut}{e}
\newattribute{aecut}{\tilde{e}}
\newcommand{\reach}[1][]{%
  \ifthenelse{\equal{#1}{}}%
  {\sim}%
  {\sim_{#1}}}
\newcommand{\noreach}[1][]{%
  \ifthenelse{\equal{#1}{}}%
  {\nsim}%
  {\nsim_{#1}}}
\newcommand{\edge}[1]{e_{#1}}
\newcommand{\triple}[2]{#1|#2}
\newcommand{\parent}[1]{p_{#1}}
\newcommand{\cyc}[1]{G_{#1}}
\newcommand{\ecyc}[1]{G_{#1}^\ast}


\newattribute{dspr}{d_{\textit{SPR}}}
\newattribute{dhyb}{\textit{hyb}}

\newattribute{indeg}{\mathrm{deg}_{\mathrm{in}}}


\setcounter{topnumber}{2}


\newtheorem{observation}{Observation}{\bfseries}{\itshape}

\begin{document}


\title{Fixed-Parameter Algorithms for Maximum Agreement Forests}

\author{%
  Chris Whidden\footnotemark[2]\ \footnotemark[5]%
  \and%
  Robert G. Beiko\footnotemark[3]\ \footnotemark[5]%
  \and%
  Norbert Zeh\footnotemark[4]\ \footnotemark[5]}

\renewcommand{\thefootnote}{\fnsymbol{footnote}}

\footnotetext[2]{Supported by doctoral scholarships from NSERC and the Killam
Trust.}
\footnotetext[3]{Canada Research Chair; supported in part by NSERC and Genome
Atlantic.}
\footnotetext[4]{Canada Research Chair; supported in part by NSERC.}
\footnotetext[5]{Address: Faculty of Computer Science, Dalhousie University,
  Halifax, Nova Scotia, Canada.  Email: \{whidden, beiko, nzeh\}@cs.dal.ca.}

\renewcommand{\thefootnote}{\arabic{footnote}}

\maketitle

\pagestyle{myheadings}
\thispagestyle{plain}
\markboth{C.~WHIDDEN, R.~G.~BEIKO, AND N.~ZEH}{FIXED-PARAMETER ALGORITHMS
FOR MAXIMUM AGREEMENT FORESTS}


\begin{abstract}
  We present new and improved fixed-parameter algorithms for computing
  maximum agreement forests (MAFs) of pairs of rooted binary phylogenetic trees.
  The size of such a forest for two trees corresponds to their
  subtree prune-and-regraft distance and, if the agreement forest is acyclic,
  to their hybridization number.
  These distance measures are essential tools for understanding reticulate
  evolution.
  Our algorithm for computing maximum acyclic agreement forests is
  the first depth-bounded search algorithm for this problem.
  Our algorithms substantially outperform the best previous algorithms for
  these problems.
\end{abstract}

\begin{keywords}
  fixed-parameter tractability, phylogenetics, subtree prune-and-regraft
  distance, lateral gene transfer, hybridization, agreement forest.
\end{keywords}

\begin{AMS}
  68W05, 05C05, 05C76
\end{AMS}


\section{Introduction}

Phylogenetic trees are a standard model to represent
the evolutionary relationships among a set of species and are an indispensable
tool in evolutionary biology~\cite{hillis96}.
Early methods of building phylogenetic trees used morphology, or structural
characteristics of species, to determine their relatedness.
However, advances in molecular biology have allowed the widespread use of DNA
and protein sequence data to build phylogenies.
Molecular phylogenetics is particularly useful in the study of microscopic
organisms, due to their high rate of evolution and subtle
differences in appearance.
However, even good phylogenetic inference methods
cannot guarantee that a constructed tree correctly represents evolutionary
relationships---and there may not even exist such a tree---because
not all groups of species follow a simple
tree-like evolutionary pattern.
Collectively known as reticulation
events, nontree-like evolutionary processes, such as hybridization,
lateral gene transfer (LGT), and recombination, result in species being
composites of genes derived from different ancestors.
These processes allow species to rapidly acquire useful traits and
adapt to new environments.
This includes harmful
traits of pathogenic bacteria, such as antibiotic resistance, and LGT
appears to have contributed to the emergence of pathogens such as
\emph{Mycobacterium tuberculosis}~\cite{rosasmagallanes2006}.

Due to reticulation events, phylogenetic trees representing the
evolutionary history of different genes found in the same set of species
may differ.
To reconcile these differing evolutionary histories, one
can use phylogenetic distance measures that determine how well the
evolutionary hypotheses of two or more phylogenetic trees agree and
often allow us to discover reticulation events that explain the
differences.
To simultaneously represent these
discordant topologies, one can use a hybridization network,
which is a generalization of a phylogenetic tree that allows species
to inherit genetic material from more than one parent.

A number of distance measures are commonly used for comparing
phylogenies.
The Robinson-Foulds distance~\cite{robinson81} is
popular, as it can be calculated in linear time~\cite{day85}.
Other measures, such as the \emph{subtree prune-and-regraft} (SPR)
distance~\cite{hillis96} and the \emph{hybridization number}~\cite{baroni05},
are more biologically meaningful but are NP-hard to
compute~\cite{allen01,bordewich05,hickey2008sdc,bordewich07}.
The SPR distance is equivalent to the minimum number of lateral gene transfers
required to transform one tree into the
other~\cite{baroni05,beiko2006pil} and thus provides a lower bound on the
number of reticulation events needed to reconcile the two phylogenies.
The hybridization number of two phylogenies is the number of hybridization
events necessary to explain their differences.
These distance measures have been regularly used to model reticulate
evolution~\cite{maddison97,nakhleh05}, as
the minimum number of reticulation events required to reconcile two trees
provides the simplest explanation for the differences between the trees.
The close relationship between
SPR operations and reticulation events has also led to advances in
network models of evolution~\cite{baroni05,bordewich07,nakhleh05}.

Numerous researchers have focused and continue to focus on the development of
efficient algorithms to compute the distance between two trees using these
measures (see \S\ref{sec:related-work}).
In this paper, we present the currently fastest fixed-parameter
algorithms for computing the SPR distance and hybridization number
of two rooted binary phylogenies.
Our algorithms substantially outperform the best previous algorithms.
Similarly to previous algorithms for these problems, we
model these distance measures using
maximum agreement forests (MAFs) and maximum
acyclic agreement forests (MAAFs), respectively.
These are forests that can be obtained from either tree by cutting an
appropriate set of edges.
The edges that are not cut capture evolutionary relationships that
agree between both trees.
An agreement forest is maximal if it maximizes the number of these
agreeing relationships, that is, if it minimizes the number of edges that need
to be cut in either tree to obtain it.
Given an agreement forest obtained by removing $k$ edges from each tree, a set
of $k$ SPR operations that transform one tree into the other can
be recovered easily.
Similarly, if the agreement forest is \emph{acyclic}
(a restriction that disallows the donation of genetic information from
descendant nodes to ancestor nodes), a hybrid network with $k$
hybridization events can be constructed quickly~\cite{bordewich07}.
The core of the problem of computing the SPR distance or hybridization number
of two trees is thus to compute a maximum (acyclic) agreement forest.

\subsection{Related Work}

\label{sec:related-work}

While the SPR distance and hybridization number capture biologically meaningful
notions of similarity between phylogenies, their practical use has been limited
by the fact that they are NP-hard to
compute~\cite{allen01,bordewich05,hickey2008sdc,bordewich07}.
This has led to numerous efforts to develop approximation and fixed-paramater
algorithms, as well as heuristics, for computing these distances.

Hein et~al.~\cite{hein96} introduced the notion of a maximum agreement forest
and used it as the main tool underlying a proposed NP-hardness proof and
$3$-approximation algorithm for computing the SPR distance between
\emph{unrooted} phylogenies.
The central claim was that the number of components in an MAF of two phylogenies
is one more than the minimum number of SPR operations needed to transform one
into the other.
Unfortunately, there were subtle mistakes in the proofs, and Allen and
Steel~\cite{allen01} proved that the number of components in an MAF is in fact
one more than the closely related tree bisection and reconnection (TBR) distance
between the two trees.
Rodrigues et~al.~\cite{rodrigues2007maf} provided instances where the algorithm
of~\cite{hein96} provides an approximation guarantee no better than $4$ for the
size of an MAF, thereby disproving the $3$-approximation claim of~\cite{hein96}.
They also proposed a modification to the algorithm, which they claimed to
produce a $3$-approximation for the TBR distance.
A counterexample to this claim
was provided by Bonet et~al.~\cite{bonet06}, who showed, however, that
both the algorithms of~\cite{hein96} and~\cite{rodrigues2007maf}
compute $5$-approximations of the SPR distance between two \emph{rooted}
phylogenies, and that the algorithms can be implemented in linear time.
The approximation ratio was improved to 3 by Bordewich et
al.~\cite{bordewich08}, but at the expense of an increased running
time of $\OhL{n^5}$.\footnote{Using nontrivial but standard data
  structures, the running time can be reduced to $\OhL{n^4}$.}
  A second 3-approximation algorithm presented in~\cite{rodrigues2007maf}
achieves a running time of $\OhL{n^2}$.
Using entirely different ideas, Chataigner~\cite{chataigner05} obtained an
$8$-approximation algorithm for TBR distances of two or more trees.
There is currently no constant-factor approximation algorithm for the
hybridization number of two rooted phylogenies.
Kelk et~al.~\cite{kelk11} recently provided an explanation for the difficulty of
obtaining such an algorithm by proving that the hybridization number of two
phylogenetic trees and the size of a minimum feedback vertex set of a
directed graph are equally hard to approximate.

Given that the identification of meaningful putative reticulation events
from two phylogenetic trees is possible only if the trees carry a strong
vertical signal, that is, if the number of reticulation events is small compared
to the size of the trees, a promising approach to compute SPR distances and
hybridization numbers exactly is to use fixed-parameter algorithms that use
the distance $k$ between the two trees as parameter.
The previously best such algorithm for rooted SPR distance is due to
Bordewich et~al.~\cite{bordewich08} and runs in $\OhL{4^k \cdot k^4 + n^3}$
time.
For TBR distance, the best previous result is due to Hallett and
McCartin~\cite{hallett07}, who provided an algorithm with running time
$\OhL{4^k \cdot k^5 + p(n)}$, where $p(\cdot)$ is a polynomial function.
An earlier algorithm for this problem by Allen and Steel\cite{allen01} had
running time $\OhL{k^{3k} + p(n)}$.
For unrooted SPR, Hickey et~al.~\cite{hickey06tr} first claimed a
fixed-parameter algorithm, but the correctness proof was flawed.
Recently, Bonet and St.\ John~\cite{bonet-uspr} presented a corrected
proof that unrooted SPR is fixed-parameter tractable.
In \cite{bordewich07chn}, Bordewich and Semple provided a fixed-parameter
algorithm for the hybridization number of two rooted phylogenies
with running time $\OhL{(28k)^k + n^3}$.
Linz and Semple~\cite{linz09hnt} extended these results to nonbinary rooted
phylogenies.
Kelk et~al.~\cite{kelk11} provided an improved analysis of the kernel
size for hybridization number, which reduces the running time of the algorithm
by Bordewich and Semple to $\OhL{(18k)^k + n^3}$.
Chen and Wang~\cite{chen12} recently proposed an algorithm for computing all
MAAFs of two or more binary phylogenies.
Their algorithm combines the $\OhL{3^k n}$ search for agreement forests from
\cite{whidden2009uva} with an exhaustive search based on an observation in the
same paper that a superforest of an MAAF can be refined to an MAAF by cutting
appropriate edges incident to the roots in the current forest.

Numerous heuristic approaches for computing SPR distances have also been
proposed.
LatTrans by Hallet and Lagergen~\cite{hallett2001eal} models lateral
gene transfer events by a restricted version of rooted SPR operations,
considering two ways in which the trees can differ.
It computes the exact distance under this restricted metric in
$\OhL{2^kn^2}$ time.
HorizStory by Macleod et~al.~\cite{macleod2005dpe} supports multifurcating trees
but does not consider SPR operations where the pruned subtree contains more than
one leaf.
EEEP by Beiko and Hamilton~\cite{beiko2006pil} performs a breadth-first
SPR search on a rooted start tree but performs unrooted comparisons
between the explored trees and an unrooted reference tree.
The distance returned is not guaranteed to be exact, due to optimizations
and heuristics that limit the scope of the search, although EEEP provides
options to compute the exact unrooted SPR distance with no
nontrivial bound on the running time.
More recently, RiataHGT by Nakhleh et~al.~\cite{nakhleh2005rhf} computes an
approximation of the SPR distance between rooted multifurcating trees in
polynomial time.

Two algorithms for computing rooted SPR distances,
SPRdist~\cite{wu2009practical} and TreeSAT~\cite{bonet2009efficiently},
express the problem of computing maximum agreement forests as
an integer linear program (ILP) and a satisfiability problem (SAT),
respectively, and employ efficient ILP and SAT solvers to obtain a
solution.
SPRdist has been shown to outperform EEEP and Lattrans~\cite{wu2009practical}.
Although such algorithms draw on the close scrutiny that has been applied to
these problems, experiments show that these algorithms cannot compete with
the rooted SPR algorithm presented in this paper \cite{whidden2010fast}.

\subsection{Contribution}

\begin{table}[t]
  \footnotesize\centering
  \setlength{\tabcolsep}{0pt}
  \caption{Previous and new results on FPT algorithms for rooted SPR distance
    and hybridization number}
  \label{tbl:results}
  \begin{tabular}{l@{\hspace{1em}}l@{\hspace{1em}}l}
    \toprule
                                  & \textbf{Previous}                                      & \textbf{New}\\
    \midrule
    \textbf{Rooted SPR distance}  & $\OhL{4^k k^4 + n^3}$ time \cite{bordewich08}           & $\OhL{2.42^k k + n^3}$ or $\OhL{2.42^k n}$ time\\
    \textbf{Hybridization number} & $\OhL{(18k)^k + n^3}$ time \cite{bordewich07chn,kelk11} & $\OhL{3.18^k k + n^3}$ or $\OhL{3.18^k n}$ time\\
    \bottomrule
  \end{tabular}
\end{table}

Our contribution is to develop substantially more efficient algorithms for
computing the SPR distance and the hybridization number of two rooted binary
phylogenetic trees.
Using a ``shifting lemma'' central to Bordewich et~al.'s $3$-approximation
algorithm~\cite{bordewich08}, one can obtain a depth-bounded search algorithm
for computing the SPR distance with running time $\OhL{3^k n}$
\cite{whidden2009uva}.
We analyze the structure of rooted agreement forests further and identify
three distinct subcases that allow us to improve the algorithm's
running time to $\OhL{2.42^k n}$.
By combinining this result with kernelization rules by Bordewich and
Semple~\cite{bordewich05}, we obtain an algorithm with running time
$\OhL{2.42^k k + n^3}$.
Table~\ref{tbl:results}
shows our new results in comparison to the best previous results.
We note here that the approach discussed in this paper also leads to
linear-time $3$-approximation algorithms for rooted SPR distance and unrooted
TBR distance, as well as to an $\OhL{4^k k + n^3}$-time algorithm for unrooted
TBR distance.
Details can be found in~\cite{whidden2009MCS,whidden2009uva,whidden2011arxiv}.

In~\cite{whidden2009uva,whidden2010fast} we also claimed
results on computing MAAFs, but we used an incorrect definition of an
acyclic agreement forest that considers only cycles of length 2.
The algorithm consisted of two phases.
First we produce an agreement forest that is guaranteed to be a supergraph of
an MAAF.
Then we cut additional edges to eliminate cycles.
The first phase is not affected by our incorrect definition of cycles.
To implement the second phase correctly, we present a novel method in this paper
whose performance is close to the one claimed in~\cite{whidden2010fast}.
Obtaining this solution requires substantial new insights into the structure of
acyclic agreement forests beyond the results already published
in~\cite{whidden2009uva,whidden2010fast} and previous work.
Our algorithm is the first depth-bounded search algorithm for computing
hybridization numbers and substantially outperforms existing methods.

The rest of this paper is organized as follows.
In \S\ref{sec:prelim}, we introduce the necessary terminology and
notation.
in \S\ref{sec:fpt}, we present our algorithm for computing rooted MAFs.
In \S\ref{sec:refinement}, we present our MAAF algorithm.
This section consists of 5 parts, each of which presents one key tool.
We first develop a refined cycle graph,
analyze cycles in agreement forests, and identify subsets of edges that
can be removed from a cyclic agreement forest to give an MAAF.
These methods together provide a simple cycle breaking step that leads to an
MAAF algorithm with running time $\OhL{9.68^k n}$.
We then analyze the tree space explored by our depth-bounded search algorithm to
halve the exponential base in the running time of the cycle breaking algorithm
and thus obtain an MAAF algorithm with running time $\OhL{4.84^k n}$.
We conclude this section with an improved analysis, which shows that
only slight modifications to the cycle breaking procedure in the
$\OhL{4.84^k n}$-time algorithm lead to a greatly improved running time of
$\OhL{3.18^k n}$.
The $\OhL{3.18^k k + n^3}$-time algorithm in Table~\ref{tbl:results} is obtained
once again by combining our algorithm with known kernelization rules
\cite{bordewich07chn}.
In \S\ref{sec:concl}, we present concluding remarks and suggest future
work.

\section{Preliminaries}

\label{sec:prelim}

Throughout this paper, we mostly use the definitions and notation
from~\cite{allen01,bonet06,bordewich05,bordewich08,rodrigues2007maf}.
A \emph{(rooted binary phylogenetic) $X$-tree} is a rooted tree $T$
whose nodes each have zero or two children.
The leaves are bijectively labelled with the members of a label set $X$.
As in~\cite{bonet06,bordewich05,bordewich08,rodrigues2007maf}, we
augment the tree with a labelled root node whose label is distinct
from the labels of all leaves and whose only child is the original root
of $T$; see Figure~\ref{fig:x-tree}.
In the remainder of this paper, we consider $\rho$ to be part of $X$.
For a subset $V$ of $X$, $\subtree{V}$ is the
smallest subtree of $T$ that connects all nodes in~$V$; see
Figures~\ref{fig:subtree};
The \emph{$V$\!-tree induced by $T$} is the smallest tree $\induced{V}$ that
can be obtained from $\subtree{V}$ by suppressing unlabelled nodes with fewer
than two children; see Figure~\ref{fig:induced}.
\emph{Suppressing} a node $v$ deletes $v$ and its incident edges;
if $v$ is of degree $2$ with parent $u$ and child $w$, $u$ and $w$ are
reconnected using a new edge~$(u, w)$.

\begin{figure}[t]
  \footnotesize
  \hspace*{\stretch{1}}%
  \subfigure[\unskip\label{fig:x-tree}]{\includegraphics{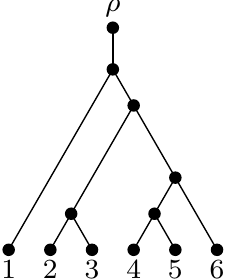}}%
  \hspace*{\stretch{1}}%
  \subfigure[\unskip\label{fig:subtree}]{\includegraphics{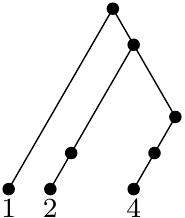}}%
  \hspace*{\stretch{1}}%
  \subfigure[\unskip\label{fig:induced}]{\includegraphics{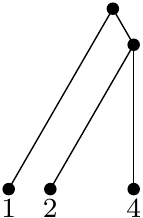}}%
  \hspace*{\stretch{1}}%
  \\
  \hspace*{\stretch{1}}%
  \subfigure[\unskip\label{fig:spr}]{\includegraphics{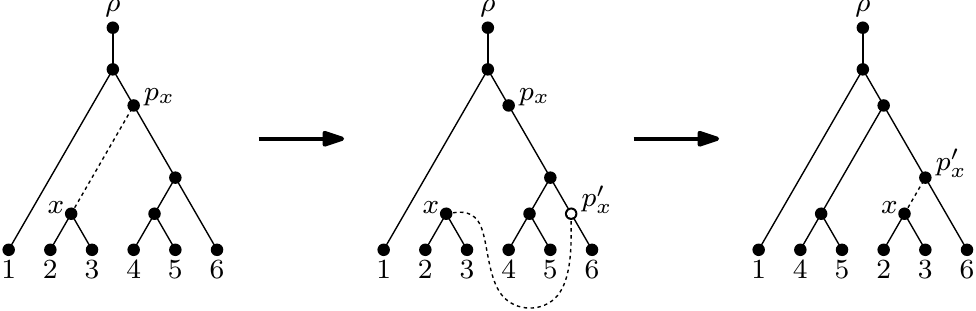}}%
  \hspace*{\stretch{1}}%
  \caption{(a)~An $X$-tree $T$.
    (b)~The subtree $\subtree{V}$ for $V = \set{1, 2, 4}$.
    (c)~$\induced{V}$.
    (d)~An SPR operation.}
\end{figure}

A \emph{subtree prune-and-regraft} (SPR) operation on an
$X$-tree $T$ cuts an edge $\edge{x} := (x, \parent{x})$, where $\parent{x}$
denotes the parent of $x$.
This divides $T$ into subtrees $T_x$ and $T_{\parent{x}}$ containing $x$
and~$\parent{x}$, respectively.
Then it introduces a new node $\parent{x}'$ into $T_{\parent{x}}$ by
subdividing an edge of $T_{\parent{x}}$ and adds an edge~$(x, \parent{x}')$,
thereby making $x$ a child of~$\parent{x}'$.
Finally, $\parent{x}$ is suppressed.
See Figure~\ref{fig:spr}.

SPR operations give rise to a distance measure $\dspr{\cdot, \cdot}$ between
$X$-trees, defined as the minimum number
of SPR operations required to transform one tree into the other.
The trees in Figure~\ref{fig:three-spr}, for example, have SPR distance
$\dspr{T_1, T_2} = 3$.

\begin{figure}[t]
  \footnotesize
  \hspace*{\stretch{1}}%
  \subfigure[\unskip\label{fig:three-spr}]{\includegraphics{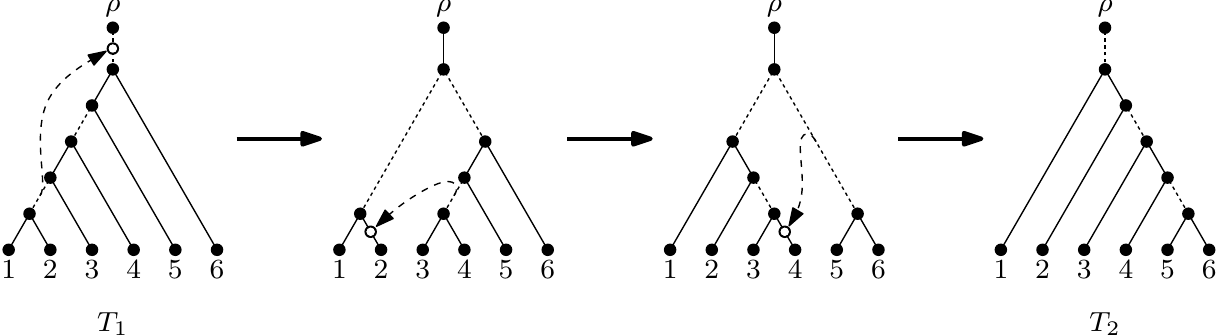}}%
  \hspace*{\stretch{1}}%
  \\
  \hspace*{\stretch{1}}%
  \subfigure[\unskip\label{fig:maf}]{\includegraphics{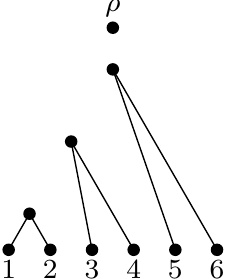}}
  \hspace*{\stretch{1}}%
  \subfigure[\unskip\label{fig:hybrid}]{\includegraphics{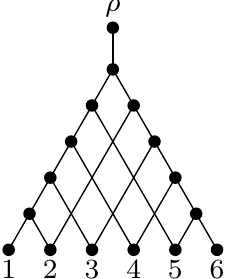}}%
  \hspace*{\stretch{1}}%
  \subfigure[\unskip\label{fig:maaf}]{\includegraphics{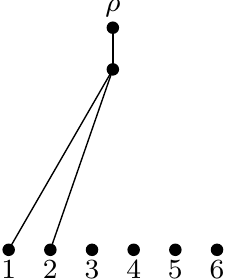}}%
  \hspace*{\stretch{1}}%
  \\
  \hspace*{\stretch{1}}%
  \subfigure[\unskip\label{fig:maaf:g_maf}]{\includegraphics{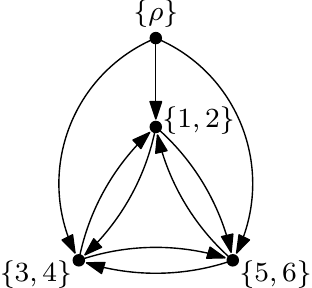}}%
  \hspace*{\stretch{1}}%
  \subfigure[\unskip\label{fig:maaf:g_maaf}]{\includegraphics{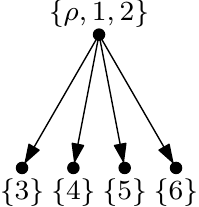}}
  \hspace*{\stretch{1}}%
  \caption{(a) SPR operations transforming $T_1$ into $T_2$.
    Each operation changes the top endpoint of one of the dotted edges.
    (b) The corresponding agreement forest, which can be obtained by
    cutting the dotted edges in both trees.
    This is an MAF with 4 components,
    so $\maf{T_1, T_2} = 4$ and $\ecut{T_1, T_2, T_2} = \dspr{T_1, T_2}$ = 3.
    Note that this is not an MAAF, as its cycle graph, shown in (e), contains
    a cycle.
    (c) A hybrid network of $T_1$ and $T_2$.
    This network has 4 nodes with an extra parent, so the hybridization number
    is 4.
    (d) An MAAF of $T_1$ and $T_2$.
    \mbox{$\aecut{T_1, T_2, T_2} = \dhyb{T_1, T_2} = 4$.}
    Note that this is not an MAF of $T_1$ and $T_2$, as it has one more
    component than the MAF in (b).
    (e)~The cycle graph of the agreement forest in (b), which
    contains a cycle.
    (f) The cycle graph of the agreement forest in (d), which
    does not contains a cycle.}
\end{figure}

A related distance measure for $X$-trees is their
\emph{hybridization number}, $\dhyb{T_1, T_2}$, which is defined in
terms of hybrid networks of the two trees.
A \emph{hybrid network} of two $X$-trees $T_1$ and $T_2$ is a directed acyclic
graph $H$ with a single source $\rho$, whose sinks are labelled bijectively with
the labels in $X \setminus \set{\rho}$, and such
that both $T_1$ and~$T_2$, with their edges directed away from the root, can be
obtained from $H$ by deleting edges and suppressing nodes.
For a vertex $x \in H$, let $\indeg{x}$ be its in-degree.
Then the hybridization number of $T_1$ and $T_2$ is
$\min_H \sum_{x \in H,x \ne \rho} (\indeg{x} - 1)$,
where the minimum is taken over all hybrid networks $H$ of $T_1$ and $T_2$.
This is illustrated in Figure~\ref{fig:hybrid}.

These distance measures are related to the sizes of
appropriately defined agreement forests.
To define these, we first introduce some terminology.
For a forest $F$ whose components are rooted phylogenetic trees
$T_1, T_2, \dots, T_k$ with label sets $X_1, X_2, \dots, X_k$,
we say $F$ \emph{yields} the forest with
components $\induced[T_1]{X_1}, \induced[T_2]{X_2}, \dots,
\induced[T_k]{X_k}$; if $X_i = \emptyset$, then $\subtree[T_i]{X_i} =
\emptyset$ and, hence, $\induced[T_i]{X_i} = \emptyset$.
In other words, the forest yielded by $F$ is the smallest forest that can
be obtained from $F$ by suppressing unlabelled nodes with less than two
children.
For a subset $E$ of edges of $F$, we use $F - E$ to denote the forest obtained
by deleting the edges in $E$ from $F$, and $F \div E$ to denote the
forest yielded by $F - E$.
We say $F \div E$ is a \emph{forest of~$F$}.

\begin{figure}[t]
  \centering
  \includegraphics{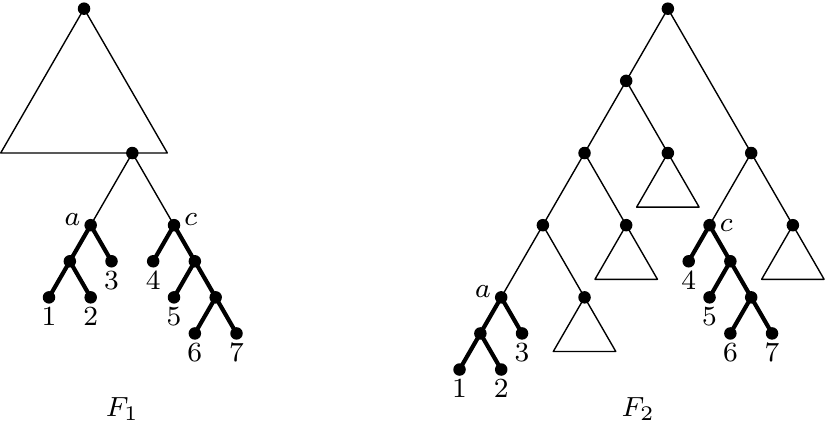}
  \caption{A sibling pair $(a,c)$ of two forests $F_1$ and $F_2$: $a$ and
  $c$ have a common parent in $F_1$, and both $a$ and $c$ exist also in $F_2$.}
	\label{fig:sibling-pair}
\end{figure}

Given $X$-trees $T_1$ and $T_2$ and forests $F_1$ of $T_1$ and $F_2$
of $T_2$, a forest $F$ is an \emph{agreement forest} (AF) of $F_1$ and
$F_2$ if it is a forest of both $F_1$ and $F_2$.
$F$ is a \emph{maximum agreement forest} (MAF) of $F_1$ and $F_2$ if there is
no AF of $F_1$ and $F_2$ with fewer components.
We denote the number of components in an MAF of $F_1$ and~$F_2$ by $\maf{F_1,
  F_2}$.
For a forest $F$ of $F_1$ or $F_2$, we use
$\ecut{F_1, F_2, F}$ to denote the size of the smallest edge set $E$ such that
$F \div E$ is an AF of $F_1$ and $F_2$.
Bordewich and Semple~\cite{bordewich05} showed that, for two
$X$-trees $T_1$ and $T_2$, $\dspr{T_1, T_2} = \ecut{T_1, T_2, T_2} =
\maf{T_1, T_2} - 1$.
An MAF of the trees in Figure~\ref{fig:three-spr} is shown in
Figure~\ref{fig:maf}.

The hybridization number of two $X$-trees $T_1$ and $T_2$ corresponds to an MAF
of $T_1$ and $T_2$ with an additional constraint.
For two forests $F_1$ and $F_2$ of $T_1$ and $T_2$ and an
AF $F = \{C_\rho, C_1, C_2, \ldots, C_k\}$ of $F_1$ and $F_2$, we
define a \emph{cycle graph} $G_F$ of~$F$.
Each node of $G_F$ represents a component of $F$, and there is an edge from
node $C_i$ to node $C_j$ if $C_i$ is an ancestor of $C_j$ in one of the
trees.
Formally, we map every node $x \in F$ to two nodes $\phi_1(x) \in T_1$
and $\phi_2(x) \in T_2$ by defining
$\phi_i(x)$ to be the lowest common ancestor in $T_i$ of all labelled leaves
that are descendants of $x$ in $F$.
We refer to $\phi_1(x)$ and $\phi_2(x)$ simply as $x$ in this paper, except when
this creates confusion.
For two components $C_i$ and $C_j$ of $F$ with roots $r_i$ and $r_j$,
$G_F$~contains the edge $(C_i, C_j)$ if and
only if either $\phi_1(r_i)$ is an ancestor of $\phi_1(r_j)$
or $\phi_2(r_i)$ is an ancestor of $\phi_2(r_j)$.
We say $F$ is \emph{cyclic} if $G_F$ contains a directed cycle.
Otherwise $F$ is an \emph{acyclic} agreement forest (AAF) of $F_1$ and $F_2$.
A \emph{maximum acyclic agreement forest} (MAAF) of $F_1$ and
$F_2$ is an AAF with the minimum number of components.
We denote its size by $\maaf{F_1, F_2}$ and the number of edges
in a forest $F$ of $F_1$ or $F_2$ that must be cut to obtain an AAF of $F_1$ and
$F_2$ by $\aecut{F_1, F_2, F}$.
Baroni et al.~\cite{baroni05} showed that
$\dhyb{T_1, T_2} = \aecut{T_1, T_2, T_2} = \maaf{T_1, T_2} - 1$.
An MAAF of the trees in Figure~\ref{fig:three-spr} is shown in
Figure~\ref{fig:maaf}.
The cycle graphs for the MAF and MAAF of these trees
shown in Figures~\ref{fig:maf} and~\ref{fig:maaf}
are shown in Figures~\ref{fig:maaf:g_maf} and~\ref{fig:maaf:g_maaf},
respectively.

For two nodes $a$ and $b$ of a forest $F$, we write $a \reach[F] b$ if there
exists a path between $a$ and $b$ in~$F$.
An \emph{internal node} of a path $P$ in $F$ is a node of $P$ that
is not an endpoint of $P$; a \emph{pendant node} of $P$ is a node not in $P$
and whose parent is an internal node of $P$.
For a node $x$ of a rooted forest $F$, $F^x$ denotes the subtree of $F$ induced
by all descendants of $x$, including $x$.
For two rooted forests $F_1$ and $F_2$ and a node $a \in F_1$, we say that $a$
\emph{exists} in $F_2$ if there is a node $a' \in F_2$ such that
$F_1^a = F_2^{a'}$.
For simplicity, we refer to both $a$ and $a'$ as $a$.
For forests $F_1$ and $F_2$ and nodes $a, c \in F_1$ with a common parent, we
say $(a,c)$ is a \emph{sibling pair} of $F_1$ if $a$ and $c$ exist in $F_2$.
Figure~\ref{fig:sibling-pair} shows such a sibling pair.

The correctness proofs of our algorithms in the next sections make use
of the following two lemmas.
Lemma~\ref{lem:edge-shift} was shown
by Bordewich et al.~\cite{bordewich08} and is illustrated in
Figure~\ref{fig:edge-shift}.
Suppose we cut a set of edges $E$
from a forest $F$ to obtain $F \div E$, and there is an edge $e$ of $F$
such that $F - (E \cup \set{e})$ has a component without labelled nodes.
This lemma shows that the forest
$F \div (E \setminus \set{f} \cup \set{e})$
obtained by replacing any edge $f \in E$ on the boundary of this
``empty'' component with $e$ is the same as~$F \div E$.

\begin{figure}[bt]
  \subfigure[\unskip\label{fig:isomorphic}]{\includegraphics{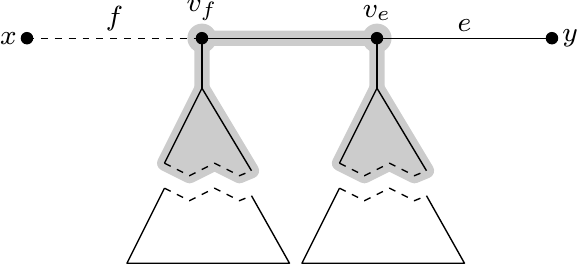}}%
  \hspace{\stretch{1}}%
  \subfigure[\unskip\label{fig:nisomorphic}]{\includegraphics{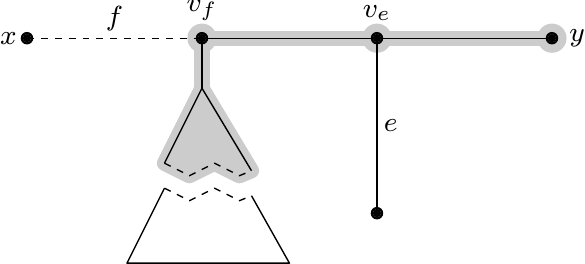}}%
  \caption{Illustration of the Shifting Lemma.
    (a) The lemma applies because $e$ and $f$ are on the boundary of an
    ``empty'' component of $F - (E \cup \set{e})$, shown in grey.
    (b) The lemma does not apply because the component with $e$ and $f$
    on its boundary contains a labelled leaf $y$:
    $v_f \reach[F - (E \cup \set{e})] y$.}
  \label{fig:edge-shift}
\end{figure}

\begin{lemma}[Shifting Lemma]
  \label{lem:edge-shift}
  Let $F$ be a forest of an $X$-tree, $e$ and $f$ edges of~$F$, and
  $E$ a subset of edges of $F$ such that $f \in E$ and $e \notin E$.
  Let $v_f$ be the end vertex of $f$ closest to $e$, and $v_e$ an end
  vertex of $e$.
  If $v_f \reach[F-E] v_e$ and $x \noreach[F - (E \cup
  \set{e})] v_f$, for all $x \in X$, then $F \div E = F \div (E
  \setminus \set{f} \cup \set{e})$.
\end{lemma}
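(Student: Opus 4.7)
The plan is to prove $F \div E = F \div (E')$, where $E' := (E \setminus \set{f}) \cup \set{e}$, by exhibiting a bijection between the components of the two yielded forests under which corresponding components are identical as labelled phylogenetic trees. Since suppression preserves connectivity among labelled leaves and the LCA structure on them, it suffices to compare the vertex sets and induced structures of the underlying components in $F-E$ and $F-E'$.

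First I would set $F^\ast := F - (E \cup \set{e}) = F - (E' \cup \set{f})$ and observe that $F-E$ is obtained from $F^\ast$ by adding back $e$ (which merges the two $F^\ast$-components containing $e$'s endpoints), while $F-E'$ is obtained from $F^\ast$ by adding back $f$ (which merges the two $F^\ast$-components containing $f$'s endpoints). Let $K$ be the $F^\ast$-component containing $v_f$; the hypothesis $x \noreach[F^\ast] v_f$ for all $x \in X$ says that $K$ has no labelled nodes. Let $K_e$ and $K_f$ denote the $F^\ast$-components containing, respectively, the endpoint of $e$ not in $K$ and the endpoint $u_f$ of $f$ opposite $v_f$; because $F$ is a forest, $K \ne K_f$, and the assumption $v_f \reach[F-E] v_e$ forces $K \ne K_e$ as well. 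Every $F^\ast$-component other than $K$, $K_e$, $K_f$ appears unchanged as a component in both $F-E$ and $F-E'$, so its contribution to the yielded forests is identical on both sides.

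The heart of the argument is then the local comparison of the affected components. In $F-E$, the pieces $K$ and $K_e$ are merged via $e$ (giving a single component $K \cup \set{e} \cup K_e$), while $K_f$ is a component on its own. In $F-E'$, the pieces $K$ and $K_f$ are merged via $f$ (giving $K \cup \set{f} \cup K_f$), while $K_e$ is on its own. Because $K$ carries no label, the labelled-leaf set of $K \cup \set{e} \cup K_e$ equals that of $K_e$, and similarly for the other side. I would then argue that attaching an unlabelled tree $K$ at a single vertex to a tree $T$ does not change the phylogenetic forest obtained from $T$ by suppressing unlabelled nodes of degree at most one: every labelled-leaf pair's LCA lies in $T$, and the suppression step collapses $K$ (and the possibly newly degree-$2$ attachment point) to nothing. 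Applying this to both sides yields that $K \cup \set{e} \cup K_e$ and $K_e$ yield the same phylogenetic tree, and $K \cup \set{f} \cup K_f$ and $K_f$ yield the same phylogenetic tree, whence $F \div E = F \div E'$.

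The main obstacle I expect is the careful bookkeeping of degenerate configurations: $v_e$ might itself lie in $K$ (in which case the ``merge via $e$'' still behaves correctly because $K_e$ is well-defined as the component of the \emph{other} endpoint of $e$), and the path between $v_f$ and $v_e$ in $F-E$ may or may not cross $e$, which affects whether $v_e$ is labelled. A clean way to handle this is to state the auxiliary suppression lemma abstractly (``attaching an unlabelled subtree at a single vertex does not alter the yielded phylogenetic tree'') and invoke it without splitting cases; the ``closest endpoint'' condition on $v_f$ is used only implicitly, through the fact that the empty component $K$ lies between $f$ and $e$ and is bounded by these two edges.
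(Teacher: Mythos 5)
The paper does not actually prove this lemma; it imports it from Bordewich et al.~\cite{bordewich08} and only illustrates it (Figure~\ref{fig:edge-shift}), so there is no in-paper argument to compare yours against. Judged on its own, your component-surgery proof is correct and self-contained: writing $F^\ast = F - (E \cup \set{e}) = F - ((E\setminus\set{f}\cup\set{e}) \cup \set{f})$ and observing that the two forests in question are obtained from $F^\ast$ by re-inserting $e$ and $f$ respectively is exactly the right decomposition, and the reduction to the auxiliary fact that grafting an unlabelled subtree at a single vertex does not change the yielded tree (the Steiner tree of the labels is literally the same subgraph, and iterated suppression erases the pendant unlabelled part together with its attachment point) is sound, including in the rooted setting where the component root may land inside $K$. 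Your handling of the two hypotheses is also right: $v_f \reach[F-E] v_e$ is what guarantees that $e$ has exactly one endpoint in $K$ (so that re-inserting $e$ really does merge $K$ with a distinct component $K_e$), acyclicity of $F$ gives $K \ne K_f$, and the ``closest endpoint'' clause is only there to pin down which end of $f$ the hypotheses refer to.

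Two small bookkeeping remarks, neither of which is a gap. First, your sentence ``while $K_f$ is a component on its own'' silently assumes $K_e \ne K_f$; these may coincide (the far endpoint of $f$ can lie in the same $F^\ast$-component as the far endpoint of $e$), in which case both $F-E$ and $F-(E\setminus\set{f}\cup\set{e})$ have the single affected component $K \cup \set{e} \cup K_e$ resp.\ $K \cup \set{f} \cup K_e$, and both still yield the same tree as $K_e$, so the conclusion is untouched---but the case should be acknowledged. Second, when you invoke the auxiliary suppression lemma you should say a word about why the rooting of the yielded tree is unaffected: the Steiner tree of the labels lies entirely in $K_e$ (resp.\ $K_f$), and the parent--child orientation inside $K_e$ is inherited from $F$ independently of whether $K$ is attached above or below it, so the two yields agree as \emph{rooted} trees, which is what the definition of $F \div E$ requires.
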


Let $F_1$ and $F_2$ be forests of $X$-trees $T_1$ and $T_2$, respectively.
Any agreement forest of $F_1$ and $F_2$ is an agreement forest of $T_1$ and
$T_2$.
Conversely, an agreement forest of $T_1$ and $T_2$ is an agreement forest of
$F_1$ and $F_2$ if it is a forest of $F_2$
and there are no two leaves $a$ and $b$ such that $a \reach[F_2]
b$ but $a \noreach[F_1] b$.
This is formalized in the following lemma.
Our algorithms ensure that any intermediate forests
$F_1$ and $F_2$ they produce have this latter property.
Thus, we can reason about agreement forests of $F_1$
and $F_2$ and of $T_1$ and $T_2$ interchangeably.

\begin{lemma}
  \label{lem:forest_ecut}
  Let $F_1$ and $F_2$ be forests of $X$-trees $T_1$ and $T_2$, respectively.
  Let $F_1$ be the union of trees $\dot{T}_1,
  \dot{T}_2, \ldots, \dot{T}_k$ and $F_2$ be the union of forests
  $\dot{F}_1, \dot{F}_2, \ldots, \dot{F}_k$ such that $\dot{T}_i$ and
  $\dot{F}_i$ have the same label set, for all $1 \le i \le k$.
  A forest of $F_2$ is an AF of
  $T_1$ and $T_2$ if and only if it is an AF of $F_1$ and $F_2$.
\end{lemma}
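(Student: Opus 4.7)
My plan is to prove the equivalence in two directions, after first reducing it. The ``forest of'' relation is transitive: any forest of $F_1$ is also a forest of $T_1$, and any forest of $F_2$ is also a forest of $T_2$ (by lifting cuts of $F_i$ to cuts of $T_i$, replacing each $F_i$-edge by any single edge of the corresponding suppressed path in $T_i$). Since $F$ is by hypothesis a forest of $F_2$ (hence of $T_2$), the AF conditions on both sides of the equivalence collapse to statements comparing ``$F$ is a forest of $T_1$'' with ``$F$ is a forest of $F_1$''. The lemma therefore reduces to: $F$ is a forest of $T_1$ if and only if $F$ is a forest of $F_1$. The $(\Leftarrow)$ direction is immediate from transitivity.

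For the $(\Rightarrow)$ direction I would use a refinement argument on the partitions of $X$ given by label sets of components. Let $\pi(F)$, $\pi(F_1) = \set{X_1, \dots, X_k}$, and $\pi(F_2)$ be these three partitions. Since $F$ is a forest of $F_2$, $\pi(F)$ refines $\pi(F_2)$; and by the lemma's hypothesis each $\dot{F}_i$ consists of components of $F_2$ whose combined label set is $X_i$, so $\pi(F_2)$ refines $\pi(F_1)$. Composing, $\pi(F)$ refines $\pi(F_1)$, and every block $Y \in \pi(F)$ lies in some $X_i$. Because $F$ is a forest of $T_1$, the component of $F$ with label set $Y$ is $\induced[T_1]{Y}$, which, using $\dot{T}_i = \induced[T_1]{X_i}$ and $Y \subseteq X_i$, equals $\induced[\dot{T}_i]{Y}$. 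For each fixed $i$, the subtrees $\induced[\dot{T}_i]{Y}$ as $Y$ ranges over the blocks of $\pi(F)$ contained in $X_i$ are leaf-disjoint induced subtrees whose labels exhaust $X_i$, and hence are obtained from $\dot{T}_i$ by cutting a suitable edge set $E_i$ and suppressing. Setting $E'' = \bigcup_i E_i$ gives $F_1 \div E'' = F$, so $F$ is a forest of $F_1$, completing the proof.

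The main obstacle I anticipate is the careful bookkeeping around the $\div$ operation's suppression of unlabelled degree-at-most-2 nodes: edges of $F_1$ correspond to paths in $T_1$ rather than single edges, so cuts must be lifted between the two models. The two routine facts that need verification are (i) transitivity of ``forest of'' by path lifting, and (ii) the identity $\induced[T_1]{Y} = \induced[\dot{T}_i]{Y}$ whenever $Y \subseteq X_i$, together with the realizability of a leaf-disjoint covering by induced subtrees via a single edge cut in $\dot{T}_i$. Each follows directly from the definitions of $\subtree{\cdot}$ and $\induced{\cdot}$, but they should be stated explicitly rather than left implicit in a formal write-up.
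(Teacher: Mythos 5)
Your overall strategy is sound and, for what it is worth, the paper states Lemma~\ref{lem:forest_ecut} without proof (the paragraph preceding it only motivates the statement informally), so there is no official argument to compare against. Your reduction to ``$F$ is a forest of $T_1$ if and only if $F$ is a forest of $F_1$,'' the use of transitivity of the forest-of relation for the easy direction, and the refinement chain $\pi(F) \preceq \pi(F_2) \preceq \pi(F_1)$ that places each component label set of $F$ inside some $X_i$ are all correct, and the last of these is exactly where the hypothesis that $F$ is a forest of $F_2$ is needed.

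The final step, however, has a genuine gap. You deduce that the subtrees $T_1|Y$, for the blocks $Y \subseteq X_i$, ``are leaf-disjoint induced subtrees whose labels exhaust $X_i$, and hence are obtained from $\dot{T}_i$ by cutting a suitable edge set,'' and in your closing paragraph you list this implication as a routine consequence of the definitions. It is not: leaf-disjointness together with label exhaustion does not imply realizability by edge cuts. For a tree with cherries $(1,2)$ and $(3,4)$ and the partition $\set{1,3}$, $\set{2,4}$ (with $\rho$ attached to either block), the two induced subtrees are leaf-disjoint and exhaust the label set, yet the connecting subtrees $T(\set{1,3})$ and $T(\set{2,4})$ share the central edge, so no edge set $E$ yields this pair as $T \div E$. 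What you actually need is that the connecting subtrees $T_1(Y)$ and $T_1(Y')$ are edge-disjoint for distinct blocks $Y$ and $Y'$ (condition (2) of Lemma~\ref{lem:forest-condition}), together with triple compatibility (condition (3), which is automatic for induced subtrees). Both conditions do hold in your situation, because $F = T_1 \div E$ forces distinct components of $F$ to lie in vertex-disjoint components of $T_1 - E$; but that is the fact you must invoke, and it is precisely the extra information beyond leaf-disjointness. The cleanest repair is to route the forward direction through Lemma~\ref{lem:forest-condition}: each component label set of $F$ lies in some $X_i$ by your refinement argument; no two components of $F$ overlap in $F_1$ because they do not overlap in $T_1$ and each edge of $\dot{T}_i$ corresponds to a nonempty path of $T_1$; and every triple of $F$ with all three leaves in $X_i$ is a triple of $T_1$ and hence of $\dot{T}_i = T_1|X_i$. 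With that substitution your proof is complete.
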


A \emph{triple} $\triple{ab}{c}$ of a rooted forest $F$ is defined by a set
$\set{a, b, c}$ of three leaves in the same component of $F$ and such
that the path from $a$ to $b$ in $F$ is disjoint from the path from
$c$ to the root of the component.
A triple of a forest $F_1$ is
\emph{compatible} with a forest $F_2$ if it is also a triple of $F_2$;
otherwise it is \emph{incompatible} with $F_2$.
An agreement forest of two forests $F_1$ and $F_2$ cannot contain a
triple incompatible with either of the two forests.
Thus, we have the following observation.

\begin{observation}
  \label{obs:incompatible-triple}
  Let $F_1$ and $F_2$ be forests of rooted $X$-trees $T_1$ and $T_2$,
  and let $F$ be an agreement forest of $F_1$ and $F_2$.
  If $\triple{ab}{c}$ is a
  triple of $F_1$ incompatible with $F_2$, then $a \noreach[F] b$ or
  $a \noreach[F] c$.
\end{observation}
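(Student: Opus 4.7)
The plan is to argue by contradiction, assuming that $a \reach[F] b$ and $a \reach[F] c$ simultaneously, and then derive a clash between the fact that $F$ must be compatible with the triple structures of both $F_1$ and $F_2$ on the leaves $\set{a, b, c}$.

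First I would observe that, by the assumption $a \reach[F] b$ and $a \reach[F] c$, all three leaves $a$, $b$, and $c$ lie in a single component $C$ of $F$. Since $F$ is obtained from $F_1$ by deleting some edges and then suppressing unlabelled nodes with fewer than two children, the subtree of $C$ induced by $\set{a, b, c}$ is exactly the tree $\induced[F_1]{\set{a, b, c}}$, because the triple structure among three leaves is determined by their pairwise lowest common ancestors, which are preserved under both edge cuts that leave the three leaves connected and under suppression of degree-two nodes. Consequently, since $\triple{ab}{c}$ is a triple of $F_1$, it is also a triple of $F$.

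Next I would apply the symmetric argument on the $F_2$ side: because $F$ is also a forest of $F_2$, the same reasoning shows that whichever triple on $\set{a, b, c}$ appears in $F$ must coincide with the one appearing in $F_2$ (or $\set{a,b,c}$ must fail to span a triple in $F_2$ at all, i.e.\ not be in the same component there). Combining the two conclusions, $\triple{ab}{c}$ must be a triple of $F_2$, which contradicts the hypothesis that $\triple{ab}{c}$ is incompatible with $F_2$. Hence the assumption fails, and we must have $a \noreach[F] b$ or $a \noreach[F] c$.

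The only delicate point is step one, namely verifying that the triple relationship on three specific leaves is preserved when passing from $F_1$ to the yielded forest $F$, provided those leaves remain in a common component. This is essentially a routine consequence of the definition of $\induced[F_1]{\cdot}$ via suppression of unlabelled degree-one and degree-two nodes: suppressing such a node never changes whether the path between two labelled leaves avoids or meets the path from a third labelled leaf to the root. I expect this to be the main bookkeeping obstacle, but it is straightforward once one spells out how lowest common ancestors of the three leaves transform under edge deletion and suppression.
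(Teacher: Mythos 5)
Your proof is correct and follows essentially the same reasoning the paper uses: since $F$ is a forest of both $F_1$ and $F_2$, any triple it contains must be a triple of both, so it cannot contain a triple of $F_1$ that is incompatible with $F_2$ (the paper states this in one sentence immediately before the observation and offers no further proof). Your additional care about the triple relation being preserved under edge deletion and suppression of degree-two nodes is a reasonable elaboration of the same point.
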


For two forests $F_1$
and $F_2$ with the same label set, two components $C_1$ and $C_2$ of
$F_1$ are said to \emph{overlap} in $F_2$ if there exist leaves $a, b
\in C_1$ and $c, d \in C_2$ such that the paths from $a$ to $b$ and
from $c$ to $d$ in $F_2$ exist and are nondisjoint.
Since we consider only binary trees in this paper, this means
the two paths share an edge.
The following lemma is an easy extension of a lemma of \cite{bordewich08}, which
states the same result for a tree $T_2$ instead of a forest $F_2$.

\begin{lemma}
  \label{lem:forest-condition}
  Let $F_1$ and $F_2$ be forests of two $X$-trees $T_1$ and $T_2$, and
  denote the label sets of the components of $F_1$ by $X_1, X_2,
  \dots, X_k$ and the label sets of the components of $F_2$ by $Y_1,
  Y_2, \dots, Y_l$.
  $F_2$~is a forest of $F_1$ if and only if (1) for
  every $Y_j$, there exists an $X_i$ such that $Y_j \subseteq X_i$, (2)
  no two components of $F_2$ overlap in $F_1$, and (3) no triple of
  $F_2$ is incompatible with $F_1$.
\end{lemma}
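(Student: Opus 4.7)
\emph{Plan.}  I would prove the two directions separately: unfold the definition of ``forest of'' in the forward direction, and reduce the reverse direction to the tree version of the lemma already proved in \cite{bordewich08}.

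For the forward direction, fix an edge set $E$ with $F_2 = F_1 \div E$.  Each connected piece of $F_1 - E$ lies inside a single component of $F_1$, and this remains true after suppressing unlabelled degree-$2$ nodes, giving~(1).  Two components of $F_2$ obtained from the same component $T_i$ of $F_1$ correspond to edge-disjoint connected pieces of $T_i$ left by removing $E$, so the subtrees of $T_i$ induced by their label sets are edge-disjoint; this gives~(2).  For~(3), a triple $\triple{ab}{c}$ of $F_2$ places $a,b,c$ in a component $C$ of $F_2$ whose root, viewed back in $F_1$, is a node $r_C$ that lies on the path from $c$ to the root of the ambient $T_i$; the path from $a$ to $b$ in $T_i$ lies strictly below $r_C$ and therefore does not meet the path from $c$ to the root of $T_i$, so $\triple{ab}{c}$ is also a triple of $F_1$.

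For the reverse direction, I would use condition~(1) to decompose $F_2$ according to the components of $F_1$ and then apply the tree-case lemma componentwise.  Since $F_1$ and $F_2$ share the label set~$X$, condition~(1) forces the $Y_j$'s to refine the $X_i$'s, so for each $i$ the sub-forest $F_2^{(i)} := \bigcup\setL{Y_j : Y_j \subseteq X_i}$ has label set exactly~$X_i$.  Overlap and triple incompatibility are defined entirely within a single connected component, so conditions~(2) and~(3) of our lemma restrict verbatim to the hypotheses of the tree-case lemma applied to the component $T_i$ of $F_1$ and the forest $F_2^{(i)}$.  That lemma yields that each $F_2^{(i)}$ is a forest of $T_i$, and taking the union of the witnessing edge sets exhibits $F_2$ as a forest of $F_1$.

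The main (but mild) obstacle is the triple argument in the forward direction: one must identify the root of a component of $F_2$ with the correct node of~$F_1$ and verify that paths below this node behave as expected under edge deletion and suppression.  Once this is in hand, the remainder is straightforward bookkeeping, which is presumably why the authors call the result an ``easy extension''.
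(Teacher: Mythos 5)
The paper never actually proves this lemma---it is asserted as ``an easy extension'' of the tree-case lemma of \cite{bordewich08}---so there is no official argument to match yours against. Your plan is the natural way to make that assertion precise, and the reverse direction in particular is exactly the intended reduction: condition~(1) lets you partition the components of $F_2$ according to which component $T_i$ of $F_1$ contains their labels, conditions~(2) and~(3) are local to each $T_i$, and the tree-case lemma applied to each pair $(T_i, F_2^{(i)})$ yields edge sets whose union witnesses that $F_2$ is a forest of $F_1$. The forward direction is also fine for~(1) and~(2).

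The one step that does not work as written is your justification of~(3). You argue that the path from $a$ to $b$ in $T_i$ lies strictly below $r_C$ and ``therefore'' avoids the path from $c$ to the root of $T_i$; but the segment of that path running from $c$ up to $r_C$ also lies strictly below $r_C$, so lying below $r_C$ by itself proves nothing about disjointness. The correct argument is that passing to an induced subtree and suppressing degree-two nodes preserves the LCA relations among the surviving leaves, so a triple $\triple{ab}{c}$ of the component $C$ of $F_2$ forces $\mathrm{lca}_{T_i}(a,b)$ to be a proper descendant of $\mathrm{lca}_{T_i}(a,c) = \mathrm{lca}_{T_i}(b,c)$; any vertex shared by the $a$--$b$ path and the $c$--root path in $T_i$ would then be an ancestor of $c$ and of one of $a, b$, hence an ancestor of $\mathrm{lca}_{T_i}(a,c)$, while also being a descendant of $\mathrm{lca}_{T_i}(a,b)$, a contradiction. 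You flag this as the delicate step yourself, and the repair is routine, so I would call this a repairable imprecision rather than a wrong approach.
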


\section{Computing the SPR Distance}

\label{sec:fpt}

In this section, we present our algorithm for computing the SPR distance
of two $X$-trees.
It will be obvious from the description of the algorithm that it also produces a
corresponding MAF.
We do not discuss this further in the remainder of this section and focus only
on computing $\dspr{T_1, T_2}$.

As is customary for FPT algorithms, we focus on the decision version of the
problem: ``Given two $X$-trees $T_1$ and $T_2$ and a parameter $k$,
is $\dspr{T_1, T_2} \le k$?''
To compute the distance between two trees, we start with $k = 0$ and increase it
until we receive an affirmative answer.
This does not increase the running time of the algorithm by more than a constant
factor, as the running time depends exponentially on $k$.
The following theorem states the main result of this section.

\begin{theorem}
  \label{thm:maf:fpt}
  For two rooted $X$-trees $T_1$ and $T_2$ and a parameter $k$, it
  takes $\OhXL{\parensL{1+\sqrt{2}}^k n} = \OhL{2.42^k n}$ time to decide
  whether $\ecut{T_1, T_2, T_2} \le k$.
\end{theorem}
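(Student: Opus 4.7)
The plan is to design a depth-bounded branch-and-bound algorithm that maintains two forests $F_1$ and $F_2$ of $T_1$ and $T_2$ (initially $T_1$ and $T_2$) together with a remaining cut budget $k$, and decides whether $F_2$ can be refined to an agreement forest with $F_1$ by cutting at most $k$ further edges. Before each branching step I would apply two polynomial-time reductions that preserve $\ecut{T_1, T_2, F_2}$: (R1) delete any component of $F_1$ or $F_2$ that carries no labelled leaves, justified by the Shifting Lemma (Lemma~\ref{lem:edge-shift}); and (R2) if $(a, c)$ is a sibling pair of both $F_1$ and $F_2$, contract it to a single leaf. When neither rule applies and $F_1$ and $F_2$ still disagree, some sibling pair $(a, c)$ of $F_1$ is not a sibling pair of $F_2$, and the algorithm branches on how to separate $a$ from $c$ in $F_2$.

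The branching examines the position of $a$ and $c$ in $F_2$ and splits into three subcases: (i)~$a$ and $c$ lie in different components of $F_2$; (ii)~$a$ and $c$ lie in the same component of $F_2$ and the $a$-to-$c$ path $P$ has exactly one pendant subtree; (iii)~$a$ and $c$ lie in the same component and $P$ has at least two pendant subtrees. In subcase (i), only two candidate cuts are needed---$e_a$ or $e_c$---giving the local recurrence $T(k) \le 2T(k-1)$. In subcases (ii) and (iii), Observation~\ref{obs:incompatible-triple} together with Lemma~\ref{lem:forest-condition} forces any AF to cut at least one edge among $e_a$, $e_c$, and the pendant edges $e_{b_1}, \dots, e_{b_m}$ on $P$; I would then use the Shifting Lemma to collapse the pendant options to a single distinguished edge $e_b$ incident to $\parent{a}$ or $\parent{c}$, leaving three candidate branches $\{e_a, e_c, e_b\}$.

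The refinement that pushes the exponential base down from $3$ to $1+\sqrt{2}$ is to show that in subcases (ii) and (iii) the $e_b$-branch can be combined with a second forced cut. After removing $e_b$ from $P$, the newly-empty pendant position lies on the boundary of $e_a$ or $e_c$, and a second application of Lemma~\ref{lem:edge-shift} lets us assume without loss of generality that an MAF of the same size also cuts this nearer edge. Thus two of the branches decrease $k$ by $1$ and the third decreases $k$ by $2$, yielding the recurrence $T(k) \le 2T(k-1) + T(k-2)$, whose characteristic root is $1+\sqrt{2}$. With standard bookkeeping (a sibling-pair queue and incrementally maintained LCAs and pendant lists under cuts and contractions) each recursive node runs in $O(n)$ time, for a total of $\OhL{(1+\sqrt{2})^k n} = \OhL{2.42^k n}$. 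The hard part will be the case analysis verifying the second forced cut: one must check that no labelled leaf reaches the shift source through $F - (E \cup \set{e})$, and the distinction between subcases (ii) and (iii) based on the pendant multiplicity of $P$ is precisely what makes this hypothesis of the Shifting Lemma go through in each branch.
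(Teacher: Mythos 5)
Your overall architecture (sibling-pair driven branching, the three subcases, the $2I(k-1)+I(k-2)$ recurrence, linear work per node) matches the paper, and your subcase~(i) is exactly Case~\ref{case:spr:sc}. But the way you handle the pendant-node cases contains a genuine error. For subcase~(ii) (exactly one pendant node $b$), you propose three branches, with the $\edge{b}$-branch forced to make a \emph{second} cut so that it charges $k-2$. That second forced cut does not exist. Take $T_1 = \rho(((a,c),d),b)$ and $T_2 = \rho(((a,b),c),d)$: here $(a,c)$ is a sibling pair of $T_1$, the $a$-to-$c$ path in $T_2$ has the single pendant node $b$, and cutting $\edge{b}$ alone already yields an agreement forest, so $\ecut{T_1,T_2,T_2}=1$. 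Cutting $\edge{a}$ alone leaves the incompatible triple $\triple{bc}{d}$ and cutting $\edge{c}$ alone leaves $\triple{ab}{d}$, so neither of those branches succeeds with budget $0$; meanwhile your $\edge{b}$-branch insists on a second cut and drops the budget below $0$. Your algorithm would answer ``no'' on a yes-instance with $k=1$. The Shifting Lemma cannot rescue this: after cutting $\edge{b}$ the component containing $b$'s subtree has labelled leaves, so the ``empty component'' hypothesis of Lemma~\ref{lem:edge-shift} fails, and there is in general no equal-size optimal set containing $\edge{a}$ or $\edge{c}$.

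What the paper proves instead (Lemma~\ref{lem:spr:cob}) is that in the one-pendant-node case some \emph{optimal} edge set always contains $\edge{b}$, so a \emph{single} branch at $k-1$ suffices; this case contributes only $I(k-1)$ to the recurrence. The $2I(k-1)+I(k-2)$ term comes entirely from subcase~(iii) with $q\ge 2$ pendant nodes, where Lemma~\ref{lem:spr:cab} shows that either $\edge{a}$ or $\edge{c}$ lies in an optimal set or \emph{all} $q$ pendant edges do, so the third branch cuts all of them and charges $k-q\le k-2$. Your plan to ``collapse the pendant options to a single distinguished edge'' in subcase~(iii) is also unjustified for the same reason (the pendant subtrees are not label-free), and cutting one pendant edge leaves the remaining $q-1$ pendant nodes still separating $a$ from $c$. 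So the missing ingredient is precisely the paper's two pendant-node lemmas; without them your branching is either incorrect (as above) or degrades to three $(k-1)$-branches and a $3^k$ bound.
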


Using reduction rules by Bordewich et al.~\cite{bordewich05}, we can improve
the running time in Theorem~\ref{thm:maf:fpt} for values of $k$ such that
$k \ge 2\log_{2.42} n$ and $k = \oh{n}$.
Given two trees $T_1$ and $T_2$,
these reduction rules take $\OhL{n^3}$ time to produce two trees
$T_1'$ and $T_2'$ of size at most $c \cdot \ecut{T_1, T_2, T_2}$ each,
for some constant $c > 0$ (determined by Bordewich et al.),
and such that $\ecut{T_1', T_2', T_2'} = \ecut{T_1, T_2, T_2}$.
If one of the trees has size greater than~$ck$, then $\ecut{T_1, T_2, T_2} > k$,
and we can answer ``no'' without any further processing.
If both trees have size at most $ck$, we can apply Theorem~\ref{thm:maf:fpt}
to $T_1'$ and $T_2'$ to decide in $\OhL{2.42^k k}$ time
whether $\ecut{T_1', T_2', T_2'} \le k$.
Thus, we obtain the following corollary.

\begin{corollary}
  \label{cor:maf}
  For two rooted $X$-trees $T_1$ and $T_2$ and a parameter $k$, it
  takes $\OhL{2.42^k k + n^3}$ time to decide whether $\ecut{T_1, T_2,
    T_2} \le k$.
\end{corollary}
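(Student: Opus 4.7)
The plan is to follow exactly the kernelize-then-search strategy sketched in the paragraph preceding the corollary, combining the Bordewich--Semple reduction rules from~\cite{bordewich05} with Theorem~\ref{thm:maf:fpt}.

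First I would invoke the kernelization of~\cite{bordewich05}: in $\OhL{n^3}$ time, transform $(T_1, T_2)$ into an equivalent instance $(T_1', T_2')$ with $\ecut{T_1', T_2', T_2'} = \ecut{T_1, T_2, T_2}$ and with a guarantee that $|T_1'|, |T_2'| \le c \cdot \ecut{T_1, T_2, T_2}$ for some absolute constant $c$. The rules themselves (subtree reduction and chain reduction) and their correctness and cubic running time are established in~\cite{bordewich05}, so I would simply cite them rather than reproving anything.

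Next I would branch on the size of the reduced instance. If either $|T_1'| > ck$ or $|T_2'| > ck$, then by the kernel size guarantee we must have $\ecut{T_1, T_2, T_2} > k$, and we return ``no'' in $\OhL{n^3}$ time total. Otherwise both reduced trees have size $\OhL{k}$, and I would apply Theorem~\ref{thm:maf:fpt} to $(T_1', T_2')$ with parameter $k$; since the bound in that theorem is $\OhL{2.42^k n}$ and here $n = \OhL{k}$, this runs in $\OhL{2.42^k k}$ time. The answer it returns is correct for the original instance because kernelization preserves $\ecut{\cdot, \cdot, \cdot}$.

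Summing the two phases gives the claimed $\OhL{2.42^k k + n^3}$ bound. There is no real obstacle here: the only substantive ingredient, Theorem~\ref{thm:maf:fpt}, has already been established in this section, and the kernelization is entirely black-box from~\cite{bordewich05}. The only thing to be careful about is the logical order of the dichotomy, namely that exceeding the kernel size bound certifies a ``no'' answer rather than requiring further search, which follows directly from the kernel's size guarantee being stated in terms of $\ecut{T_1, T_2, T_2}$ itself.
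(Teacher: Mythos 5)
Your proposal is correct and follows exactly the paper's own argument: kernelize with the Bordewich--Semple reduction rules in $\OhL{n^3}$ time, answer ``no'' immediately if either reduced tree exceeds size $ck$, and otherwise run the algorithm of Theorem~\ref{thm:maf:fpt} on the $\OhL{k}$-size kernel to get the $\OhL{2.42^k k}$ term. Nothing is missing, and the one point you flag as needing care (that exceeding the kernel bound certifies a ``no'') is handled the same way in the paper.
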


In the remainder of this section, we prove Theorem~\ref{thm:maf:fpt}.
Our algorithm is recursive.
Each invocation takes two forests $F_1$ and $F_2$ of $T_1$ and $T_2$
and a parameter~$k$ as inputs, and decides whether $\ecut{T_1, T_2, F_2} \le k$.
We denote such an invocation by $\alg{F_1, F_2, k}$.
The forest $F_1$ is the union of a tree $\dot{T}_1$ and a forest~$F$
disjoint from~$\dot{T}_1$, while $F_2$ is the union of the same forest $F$ and
another forest $\dot{F}_2$ with the same label set as $\dot{T}_1$.
We maintain two sets of labelled nodes: $R_d$ (roots-done) contains the roots of
$F$, and $R_t$ (roots-todo) contains roots of (not necessarily maximal) subtrees
that agree between $\dot{T}_1$ and $\dot{F}_2$.
We refer to the nodes in these sets by their labels.
For the top-level invocation, $F_1 = \dot{T}_1 = T_1$, $F_2 =
\dot{F}_2 = T_2$, and $F = \emptyset$; $R_d$ is empty, and $R_t$
contains all leaves of $T_1$.

$\alg{F_1, F_2, k}$ identifies a small collection
$\set{E_1, E_2, \dots, E_q}$ of subsets of edges of $\dot{F}_2$ such
that $\ecut{T_1, T_2, F_2} \le k$ if and only if $\ecut{T_1, T_2, F_2
  \div E_i} \le k - \size{E_i}$, for at least one $1 \le i \le q$.
It makes a recursive call $\alg{F_1, F_2 \div E_i, k - \size{E_i}}$, for
each subset $E_i$, and returns ``yes'' if and only if one of these
calls does.
The steps of this procedure are as follows.

\begin{enumerate}[leftmargin=*]
\item \label{case:abort} (Failure) If $k < 0$, there is no subset $E$
  of at most $k$ edges of $F_2$ such that $F_2 - E$ yields an AF of
  $T_1$ and $T_2$: $\ecut{T_1, T_2, F_2} \ge 0 > k$.
  Return ``no'' in this case.
\item \label{case:success} (Success) If $|R_t| \le 2$, then $\dot{F}_2
  \subseteq \dot{T}_1$.
  Hence, $F_2 = \dot{F}_2 \cup F$ is an AF of $F_1$ and $F_2$ and, by
  Lemma~\ref{lem:forest_ecut}, also of $T_1$ and $T_2$.
  Thus, $\ecut{T_1, T_2, F_2} = 0 \le k$.
  Return ``yes'' in this case.
\item \label{case:spr:singleton} (Prune maximal agreeing subtrees) If
  there is a node $r \in R_t$ that is a root in $\dot{F}_2$, remove $r$ from
  $R_t$ and add it to $R_d$, thereby moving the corresponding subtree of
  $\dot{F}_2$ to $F$; cut the edge $\edge{r}$ in $\dot{T}_1$ and
  suppress $r$'s parent in $\dot{T}_1$; return to Step~\ref{case:success}.
  This does not alter $F_2$ and, thus, neither $\ecut{T_1, T_2, F_2}$.
  If no such root $r$ exists, proceed to Step~\ref{item:choose-sib-pair}.
\item \label{item:choose-sib-pair}Choose a sibling pair $(a,c)$ in
  $\dot{T}_1$ such that $a, c \in R_t$.
\item \label{case:spr:sibling} (Grow agreeing subtrees) If $(a,c)$ is
  a sibling pair of $\dot{F}_2$, remove $a$ and $c$ from
  $R_t$; label their parent in both forests with $(a,c)$ and add it to
  $R_t$; return to Step~\ref{case:success}.
  If $(a,c)$ is
  not a sibling pair of $\dot{F}_2$, proceed to
  Step~\ref{case:spr:non-sibling}.
\item \label{case:spr:non-sibling} (Cut edges) Distinguish three
  cases (see Figure~\ref{fig:spr:cases}):
  \begin{enumerate}[label=\theenumi.\arabic{*}.,leftmargin=*,ref=\theenumi.\arabic{*}]
  \item \label{case:spr:sc} If $a \noreach[F_2] c$,
      call $\alg{F_1, F_2 \div \set{\edge{a}}, k-1}$
      and $\alg{F_1, F_2 \div \set{\edge{c}}, k-1}$ recursively.
  \item \label{case:spr:cob} If $a \reach[F_2] c$ and
    the path from $a$ to $c$ in $\dot{F}_2$ has only one pendant node $b$,
    call $\alg{F_1, F_2 \div \set{\edge{b}}, k-1}$ recursively.
  \item \label{case:spr:cab} If $a \reach[F_2] c$ and
      the path from $a$ to $c$ in $\dot{F}_2$ has $q \ge 2$ pendant nodes
      $b_1, b_2, \ldots, b_q$, call $\alg{F_1, F_2
	\div \set{\edge{b_1}, \edge{b_2}, \ldots, \edge{b_q}}, k-q}$,
      $\alg{F_1, F_2 \div \set{\edge{a}}, k-1}$, and
      $\alg{F_1, F_2 \div \set{\edge{c}}, k-1}$ recursively.
  \end{enumerate}
  Return ``yes'' if one of the recursive calls does; otherwise return
  ``no''.
\end{enumerate}

\begin{figure}[t]
  \centering
  \includegraphics{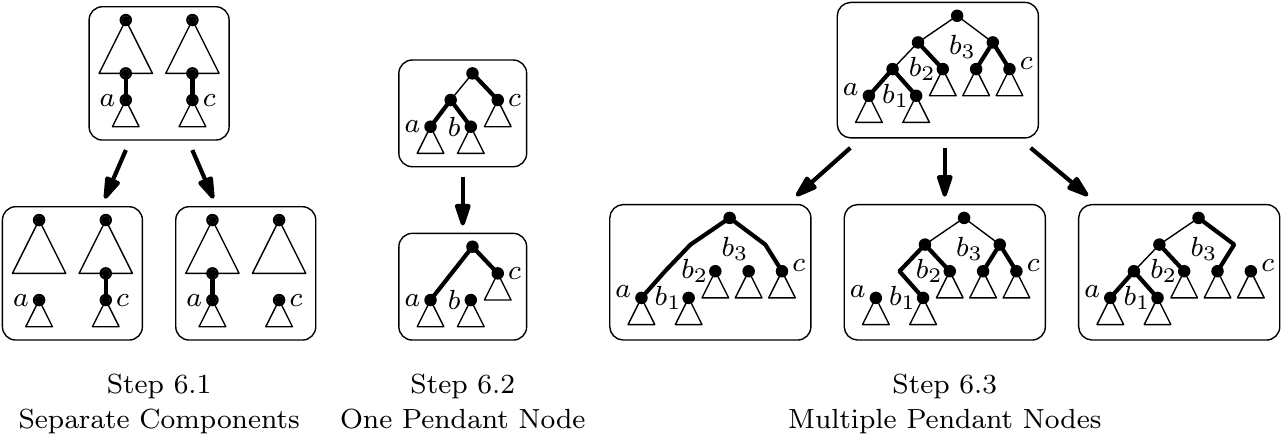}%
  \caption{The cases in Step~\ref{case:spr:non-sibling} of the
    MAF algorithm.
    Only $\dot{F}_2$ is shown.
    Each box represents a recursive call.}
  \label{fig:spr:cases}
\end{figure}

To prove that the algorithm achieves the running time stated in
Theorem~\ref{thm:maf:fpt}, we show that each invocation takes linear
time (Lemma~\ref{lem:maf:time}) and that the algorithm makes
$\OhXL{\parensL{1 + \sqrt{2}}^k}$ recursive calls (Lemma~\ref{lem:invocations}).

\begin{lemma}
  \label{lem:maf:time}
  Each invocation $\alg{F_1, F_2, k}$, excluding recursive calls it
  makes, takes linear time.
\end{lemma}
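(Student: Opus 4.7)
The plan is to describe the data structures maintained across invocations and then bound the work of each step, noting that Steps~\ref{case:success}--\ref{case:spr:sibling} may loop within a single invocation while Step~\ref{case:spr:non-sibling} runs at most once. The key data structures I would maintain are: explicit parent pointers in both $\dot{T}_1$ and $\dot{F}_2$; boolean membership flags and a counter for $R_t$ and $R_d$; and a FIFO queue $Q$ holding candidate sibling pairs $(a,c)$ of $\dot{T}_1$ with $a,c \in R_t$. These are assumed to be set up by the caller (the top-level invocation initializes them in time linear in $|X|$) and updated incrementally as the algorithm executes.

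First I would dispose of the easy steps. Step~\ref{case:abort} is a single comparison and Step~\ref{case:success} reads the counter for $|R_t|$, both $O(1)$. Step~\ref{item:choose-sib-pair} is a dequeue operation on $Q$. In Step~\ref{case:spr:singleton}, detecting that some $r \in R_t$ is a root in $\dot{F}_2$ is $O(1)$ per candidate (check the parent pointer of $r$ in $\dot{F}_2$), and when such an $r$ is found the edge $\edge{r}$ is cut and $r$'s parent in $\dot{T}_1$ is suppressed using $O(1)$ parent-pointer updates; the sibling of $r$ in $\dot{T}_1$ becomes a candidate for a new sibling pair, which is enqueued in $O(1)$. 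Step~\ref{case:spr:sibling} tests whether $(a,c)$ is a sibling pair of $\dot{F}_2$ via parent pointers (again $O(1)$) and, if so, relabels the common parent and enqueues at most one new sibling pair. Each iteration of the loop through Steps~\ref{case:success}--\ref{case:spr:sibling} either removes a node from $R_t$ or merges two nodes of $R_t$ into one, so the number of iterations is at most $|R_t| = O(n)$, giving $O(n)$ total work before Step~\ref{case:spr:non-sibling} is reached.

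Step~\ref{case:spr:non-sibling} is the only place where a subpath of $\dot{F}_2$ of nonconstant length must be examined. To distinguish the three subcases, I would walk up from $a$ and from $c$ in $\dot{F}_2$ in parallel until either both walks reach a root (Case~\ref{case:spr:sc}) or the two walks meet, in which case the union of the two walks is exactly the path from $a$ to $c$ in $\dot{F}_2$ and its pendant nodes $b_1, \dots, b_q$ can be read off by inspecting the children of the internal nodes of the path (Cases~\ref{case:spr:cob} and~\ref{case:spr:cab}). This entire process takes time proportional to the length of the walks, which is $O(n)$. The edges $\edge{a}$, $\edge{c}$, and $\edge{b_i}$ can be cut and their incident nodes suppressed in $O(1)$ time per cut, and because Step~\ref{case:spr:non-sibling} hands off to recursive calls rather than looping, this contributes $O(n)$ once.

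The main obstacle is maintaining the sibling-pair queue $Q$ consistently under the modifications in Steps~\ref{case:spr:singleton} and~\ref{case:spr:sibling} without introducing duplicates or missing newly exposed pairs: whenever a node $v$ is added to $R_t$ (as a newly exposed root of $\dot{T}_1$ after a cut, or as a merged pair), its sibling in $\dot{T}_1$ must be located via $v$'s parent pointer and, if that sibling is already in $R_t$, the pair $(v, \text{sibling})$ must be enqueued; the flags on $R_t$ let us check membership in $O(1)$ to avoid enqueuing invalid or duplicate pairs. Once this bookkeeping is in place, summing the $O(1)$ per-iteration cost of Steps~\ref{case:abort}--\ref{case:spr:sibling} over the at most $O(n)$ iterations with the one-shot $O(n)$ cost of Step~\ref{case:spr:non-sibling} yields the claimed linear-time bound for each invocation, excluding the work performed by any recursive call it makes.
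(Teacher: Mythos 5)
Your proposal is correct and follows essentially the same strategy as the paper's proof: maintain constant-time-accessible pointer structures (parent/child pointers, a sibling-pair list or queue, and root bookkeeping) so that each iteration of the loop formed by Steps~\ref{case:success}--\ref{case:spr:sibling} costs $\Oh{1}$, bound the number of such iterations by $\Oh{n}$ via a decreasing quantity ($\size{R_t}$ in your argument, edge cuts plus $\size{R_t}$ in the paper's), and observe that Step~\ref{case:spr:non-sibling} runs once and costs $\Oh{n}$. The extra detail you give on the parallel upward walk and on queue maintenance is a valid elaboration of what the paper leaves implicit.
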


\begin{proof}
  We represent each forest as a collection of nodes, each of which points
  to its parent, left child, and right child.
  In addition, every labelled node (i.e., each node in $R_t$ or $R_d$) stores a
  pointer to its counterpart in the other forest.
  For $\dot{T}_1$, we maintain a list of sibling pairs of labelled nodes.
  Every labelled node of $\dot{T}_1$ stores a pointer to the pair it
  belongs to, if any.
  For $\dot{F}_2$, we maintain a list $R'_d \subseteq R_t$ of nodes that
  are roots of $\dot{F}_2$.
  This list is used to move these roots from $R_t$ to~$R_d$ in
  Step~\ref{case:spr:singleton}.

  It is easily verified that, using this representation of $F_1$ and $F_2$,
  each execution of Steps~1--5 takes constant time and that Step~6, excluding
  recursive calls it spawns, takes linear time.
  Steps~\ref{case:abort} and~\ref{case:spr:non-sibling} are executed
  only once per invocation.
  Steps~\ref{case:success}--\ref{case:spr:sibling} form a loop, and
  each iteration, except the first one, is the result of finding a root
  of $\dot{F}_2$ in Step~\ref{case:spr:singleton} or merging
  a sibling pair in Step~\ref{case:spr:sibling}.
  In the former case, Step~\ref{case:spr:singleton} cuts an edge in~$F_1$,
  which can happen only $\Oh{n}$ times because $F_1$ has $\Oh{n}$ edges.
  In the latter case, the number of nodes in $R_t$ decreases by one,
  which cannot happen more than $n$ times because the algorithm
  starts with the $n$ leaves of $T_1$ in $R_t$ and the number of
  nodes in $R_t$ never increases.
  Thus, Steps~\ref{case:success}--\ref{case:spr:sibling} are executed
  $\Oh{n}$ times, and the cost of the entire invocation is linear.\qquad
\end{proof}

\begin{lemma}
  \label{lem:invocations}
  An invocation $\alg{F_1, F_2, k}$ spawns $\OhXL{\parensL{1 + \sqrt{2}}^k}$
  recursive calls.
\end{lemma}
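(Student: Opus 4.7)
The plan is to set up a recurrence for the branching factor. Let $T(k)$ denote an upper bound on the total number of invocations made when $\alg{F_1, F_2, k}$ is called (counting the top-level invocation). We want to show $T(k) = \OhXL{(1+\sqrt{2})^k}$.

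First I would observe that Steps~\ref{case:abort}--\ref{case:spr:sibling} never spawn recursive calls: Step~\ref{case:abort} and Step~\ref{case:success} terminate the invocation, and Steps~\ref{case:spr:singleton}--\ref{case:spr:sibling} only reshape the forests and the sets $R_t, R_d$ before looping back. Branching happens exclusively in Step~\ref{case:spr:non-sibling}. Therefore the invocation count is determined entirely by the three subcases of Step~\ref{case:spr:non-sibling}, which give the recurrences
\begin{align*}
  T(k) &\le 1 + 2T(k-1) & &\text{(Case~\ref{case:spr:sc})},\\
  T(k) &\le 1 + T(k-1) & &\text{(Case~\ref{case:spr:cob})},\\
  T(k) &\le 1 + T(k-q) + 2T(k-1) \text{ for some } q\ge 2 & &\text{(Case~\ref{case:spr:cab})},
\end{align*}
together with the base cases $T(k) = 1$ for $k < 0$ (Step~\ref{case:abort}) and for invocations that terminate in Step~\ref{case:success}.

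Next I would identify Case~\ref{case:spr:cab} with $q = 2$ as the binding recurrence. Case~\ref{case:spr:cob} is strictly dominated. Case~\ref{case:spr:sc} gives $T(k) = \OhL{2^k}$, which is absorbed in $\OhXL{(1+\sqrt{2})^k}$ since $1+\sqrt{2} > 2$. Among the instances of Case~\ref{case:spr:cab}, the worst is $q = 2$, since $T$ is nondecreasing (a larger budget cannot reduce the number of reachable leaves in the recursion tree) and hence $T(k-q) \le T(k-2)$ for all $q \ge 2$. So it suffices to bound the solution of
\[
  T(k) \le 1 + T(k-2) + 2T(k-1).
\]
The associated characteristic equation $x^2 - 2x - 1 = 0$ has dominant root $x = 1 + \sqrt{2}$, exactly matching the $\parensL{1+\sqrt{2}}^k$ bound.

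Finally I would close the argument with a clean induction. Pick a constant $C$ large enough to cover the base cases and the additive $+1$ term, and show $T(k) \le C(1+\sqrt{2})^k$ by induction on $k$. In the critical case, the inductive step reduces, after dividing by $C(1+\sqrt{2})^{k-2}$, to the inequality
\[
  (1+\sqrt{2})^{2-q} + 2(1+\sqrt{2}) \le (1+\sqrt{2})^2,
\]
which holds for every $q \ge 2$ because $(1+\sqrt{2})^{2-q} \le 1$ and $(1+\sqrt{2})^2 = 2(1+\sqrt{2}) + 1$. The remaining cases are easier. The only subtlety is the monotonicity of $T$ and the bookkeeping for the additive constant; neither is a serious obstacle, so the proof is essentially a careful recurrence analysis with the characteristic root computation at its core.
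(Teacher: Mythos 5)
Your proof is correct and follows essentially the same route as the paper: the same case analysis of Step~\ref{case:spr:non-sibling} yielding the recurrence $1 + 2I(k-1) + I(k-2)$ as the dominant case, with the characteristic root $1+\sqrt{2}$ giving the bound. The only point to tidy is the one you already flag—since $(1+\sqrt{2})^2 = 2(1+\sqrt{2})+1$ exactly, the naive induction hypothesis $T(k) \le C(1+\sqrt{2})^k$ leaves no slack for the additive $+1$ when $q=2$, so you need a hypothesis of the form $T(k) \le C(1+\sqrt{2})^k - \tfrac{1}{2}$ (or a leaf-counting argument); the paper glosses over this with "simple substitution."
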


\begin{proof}
  Let $I(k)$ be the number of recursive calls spawned by an invocation
  with parameter $k$.
  By inspecting the different cases of Step~\ref{case:spr:non-sibling}, we
  obtain
\begingroup\allowdisplaybreaks
\begin{align*}
  I(k) &=
  \begin{cases}
    1                    & \text{if Step~\ref{case:spr:non-sibling} is not executed}\\
    1 + 2I(k-1)          & \text{Case~\ref{case:spr:sc}}\\
    1 + I(k-1)           & \text{Case~\ref{case:spr:cob}}\\
    1 + 2I(k-1) + I(k-q) & \text{Case~\ref{case:spr:cab}}
  \end{cases}\\
  &\le 1 + 2I(k-1) + I(k-2)
\end{align*}\endgroup
because Case~\ref{case:spr:cab} dominates the other two cases and $q
\ge 2$ in this case.
Simple substitution shows that this recurrence
solves to $I(k) = \OhXL{\parensL{1 + \sqrt{2}}^k}$.\qquad
\end{proof}

It remains to prove the
correctness of the algorithm, which we do by induction on $k$.
An invocation $\alg{F_1, F_2, k}$ with $k < 0$ correctly returns ``no''
in Step~\ref{case:abort}, so assume $k \ge 0$.
In this case, the invocation produces its answer in Step~\ref{case:success}
or~\ref{case:spr:non-sibling}.
If it produces its answer (``yes'') in Step~\ref{case:success}, this is correct
because $F_2$ is an MAF of $T_1$ and $T_2$.
If it produces its answer in Step~\ref{case:spr:non-sibling}, it suffices to
prove that $\ecut{T_1, T_2, F_2} \le k$ if and only if
$\ecut{T_1, T_2, F_2 \div E_i} \le k - \size{E_i}$, for at least one of the
recursive calls $\alg{F_1, F_2 \div E_i, k - \size{E_i}}$ the invocation makes
in Step~\ref{case:spr:non-sibling}.
This in turn follows if $\ecut{T_1, T_2, F_2 \div E_i} \ge \ecut{T_1, T_2, F_2}
- \size{E_i}$, for all recursive calls
$\alg{F_1, F_2 \div E_i, k - \size{E_i}}$, which is trivial,
and $\ecut{T_1, T_2, F_2 \div E_i} = \ecut{T_1, T_2, F_2} - \size{E_i}$,
for at least one recursive call $\alg{F_1, F_2 \div E_i, k - \size{E_i}}$.
Lemmas~\ref{lem:spr:sc}, \ref{lem:spr:cob}, and~\ref{lem:spr:cab} below
prove the latter for each case of Step~\ref{case:spr:non-sibling}.
For Cases~\ref{case:spr:sc} and~\ref{case:spr:cab}, we prove also that
$\aecut{T_1, T_2, F_2 \div E_i} = \aecut{T_1, T_2, F_2} -
\size{E_i}$, for at least one recursive call
$\alg{F_1, F_2 \div E_i, k - \size{E_i}}$.
This will be used in the correctness proof of the MAAF algorithm
in \S\ref{sec:refinement}.

In Step~\ref{case:spr:non-sibling}, $(a,c)$ is a sibling pair of $\dot{T}_1$
but not of $F_2$---otherwise Step~\ref{case:spr:sibling} would have replaced
$a$ and $c$ with their parent in $R_t$---and neither $F_2^a$ nor $F_2^c$ is a
component of~$F_2$---otherwise Step~\ref{case:spr:singleton} would have
removed $a$ or $c$ from $R_t$.
Note that $a$ and $c$ belong to~$\dot{F}_2$ because $\dot{T}_1$ and $\dot{F}_2$
have the same label set.
Let $b$ be $a$'s sibling in $F_2$.
If $a$ and $c$ belong to the same component of $F_2$, we assume w.l.o.g.\ that
$a$'s distance from the root of this component is no less than $c$'s.
Since $a$ and $c$ are not siblings in $F_2$, this implies that $c \notin F_2^b$.
If $a \noreach[F_2] c$, we also have $c \notin F_2^b$ because $a \reach[F_2] b$.

Our first lemma shows that we can always cut one of $\edge{a}$, $\edge{b}$, and
$\edge{c}$ to make progress towards an MAF or MAAF of $T_1$ and $T_2$ in
Step~\ref{case:spr:non-sibling}.
In \cite{whidden2009uva}, we used this as a basis for a simple $\OhL{3^kn}$-time
MAF algorithm.
Here, we need this lemma as a basis for the proofs of
Lemmas~\ref{lem:spr:sc}, \ref{lem:spr:cob}, \ref{lem:spr:cab},
and~\ref{lem:hyb:cob}.

\begin{lemma}
  \label{lem:spr:3-way}
  If $(a,c)$ is a sibling pair of $F_1$ and (i) $a \noreach[F_2] c$ and neither
  $F_2^a$ nor $F_2^c$ is a component of $F_2$ or (ii) $a \reach[F_2] c$ but $a$
  and $c$ are not siblings in $F_2$, then there exists an edge set $E$ of size
  $\ecut{T_1, T_2, F_2}$ (resp.\ $\aecut{T_1, T_2, F_2}$) and such that
  $F_2 \div E$ is an AF (resp.\ AAF) of $T_1$ and $T_2$ and $E \cap
  \set{\edge{a}, \edge{b}, \edge{c}} \ne \emptyset$.
\end{lemma}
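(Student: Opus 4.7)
The plan is to prove the lemma by starting from a minimum-size edge set $E^*$ such that $F_2 \div E^*$ is an AF (respectively, AAF) of $T_1$ and $T_2$, and producing an $E$ of the same size with $E \cap \set{\edge{a}, \edge{b}, \edge{c}} \ne \emptyset$. If $E^*$ already satisfies this, take $E := E^*$. Otherwise, I will either invoke the Shifting Lemma (Lemma~\ref{lem:edge-shift}) to swap an edge $f \in E^*$ for one of $\edge{a}, \edge{b}, \edge{c}$, or, when a direct shift is blocked, construct a different minimum-size cut producing a structurally different AF. A swap via the Shifting Lemma preserves $F_2 \div E = F_2 \div E^*$, hence the yielded forest and its cycle graph, so the AAF variant needs no separate treatment in that case.

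The direct-shift portion is driven by the structure of the component $C$ of $F_2 - E^*$ containing $a$, with label set $L_a$, and the analogous component $C'$ of $c$ with $L_c$. Writing $b$ for $a$'s sibling in $F_2$, a direct shift is available in each of three subcases. If $L_a = \set{a}$, I shift a boundary $E^*$-edge of $C$ to $\edge{a}$; such a boundary edge exists because $F_2^a$ is not a component of $F_2$, so $C$ must contain labels of the ambient $F_2$-component beyond $\set{a}$ that are cut off. Symmetrically, if $L_c = \set{c}$, I shift to $\edge{c}$. If instead $L_a$ is disjoint from the labels of $F_2^b$, I shift an $E^*$-edge inside $F_2^b$ to $\edge{b}$; such an edge must exist since $F_2^b$ contains labels (its leaves) that are cut off from $C$. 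Note in particular that in Case~(ii), the hypothesis $c \in L_a$ forces $(a,c)$ to be siblings in the MAF, which in turn forces every pendant on the $a$-to-$c$ path in $F_2$, including $b$, to have its subtree labels outside $L_a$, putting us into the third subcase.

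The main obstacle is the residual configuration in which $|L_a|, |L_c| \ge 2$ and $L_a$ contains some label $\ell$ of $F_2^b$, because none of the three direct Shifting-Lemma swaps above is legal. Here I plan to construct $E$ directly: introduce the new cut $\edge{b}$, so that the new component of $a$ becomes $\set{a}$ together with the $L_a$-labels above $a$'s parent in $F_2$; using that $(a, c)$ is a sibling pair of $F_1$ and $c \notin L_a$ (so $a$'s parent in $F_1$ is suppressed in $F_1|L_a$), one checks that this new component is still an induced subtree of $F_1$, and the labels in $L_a \cap F_2^b$ split into their own valid component(s) of the new AF. The cut budget is balanced by removing from $E^*$ an edge made redundant by $\edge{b}$; the existence of such a redundant edge follows by applying Observation~\ref{obs:incompatible-triple} to the triple $\triple{ac}{\ell}$ of $F_1$, which is incompatible with $F_2$ in both Case~(i) (since $a \noreach[F_2] c$ rules out the triple in $F_2$) and Case~(ii) (since $F_2$'s triple on $\set{a, c, \ell}$ is $\triple{a\ell}{c}$, not $\triple{ac}{\ell}$), forcing $E^*$ to already cut at least one edge that becomes superfluous once $\edge{b}$ separates $a$ from $F_2^b$. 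For the AAF variant, an analogous structural check verifies that replacing the old $a$-component by the new one---which is an induced subtree of $F_1$ lying ``above'' the old one in both $T_1$ and $T_2$---does not create new ancestor relationships in the cycle graph, so acyclicity is preserved.
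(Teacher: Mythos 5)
Your overall strategy---take a minimum cut set $E^*$ and, when it avoids $\set{\edge{a}, \edge{b}, \edge{c}}$, use the Shifting Lemma to trade an edge of $E^*$ for one of these three---is exactly the paper's, and your first three subcases are essentially the paper's opening moves (stated there in terms of whether any \emph{leaf} of $F_2^a$, $F_2^b$, $F_2^c$ remains connected to $a$, $b$, $c$ in $F_2 - E^*$; treating $a,b,c$ as atomic labels, as you do, skips the shifts needed when $E^*$ cuts inside these subtrees, but that is a minor imprecision). The real gap is your residual configuration. First, the ``redundant edge'' $f$ is never identified: Observation~\ref{obs:incompatible-triple} applied to $\triple{ac}{\ell}$ only re-establishes $c \notin L_a$, which you already knew; it does not exhibit an edge of $E^*$ that becomes removable once $\edge{b}$ is cut. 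Deleting an arbitrary $f$ from the cut set merges two labelled components of $F_2 \div (E^* \cup \set{\edge{b}})$, and you would have to verify that the merged component is an induced subtree of both $F_1$ and $F_2$ that overlaps nothing---none of which is addressed. Second, because your construction genuinely changes the yielded forest (it splits $a$'s component and merges two others), the claim that acyclicity is preserved is substantive, not ``analogous'': compare the lengthy cycle-by-cycle argument the paper needs in Lemma~\ref{lem:hyb:cob} for a single such exchange. The paper's proof of the present lemma is engineered to use \emph{only} Shifting-Lemma swaps precisely so that $F_2 \div E$ never changes and the AAF variant is inherited verbatim.

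The missing idea that closes the residual case is the no-overlap condition of Lemma~\ref{lem:forest-condition}. If $a$'s component of $F_2 - E^*$ contains a label $\ell \in F_2^b$, then in $F_1$ the path realizing that component crosses $\edge{p_a}$, since $\ell \notin F_1^{p_a}$. Consequently $c$'s component cannot reach any label $x \notin F_2^c$: the path from $c$ to $x$ in $F_1$ would also cross $\edge{p_a}$ and the two components would overlap in $F_1$, contradicting that $F_2 \div E^*$ is an AF. Since $F_2^c$ is not a component of $F_2$ (given in Case~(i), automatic in Case~(ii)), there is a leaf $x \notin F_2^c$ with $c \reach[F_2] x$, hence some edge of $E^*$ lies on the path from $c$ to $x$ strictly above $\edge{c}$; the first such edge satisfies the hypotheses of Lemma~\ref{lem:edge-shift} with $e = \edge{c}$, because cutting $\edge{c}$ leaves a label-free component above it. So the residual case is resolved by a fourth shift onto $\edge{c}$---the correct trigger for your $\edge{c}$-subcase is ``$c$'s component reaches no label outside $F_2^c$,'' not ``$L_c = \set{c}$''---and no direct reconstruction is needed.
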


\begin{proof}
  Consider an edge set $E$ of size $\ecut{T_1, T_2, F_2}$ and such that
  $F_2 \div E$ is an AF of $F_1$ and $F_2$, and assume $E$ contains the maximum
  number of edges from $\set{\edge{a}, \edge{b}, \edge{c}}$ among all edge sets
  satisfying these conditions.
  Assume for the sake of contradiction that $E \cap \set{\edge{a}, \edge{b},
    \edge{c}} = \emptyset$.

  If $a' \noreach[F_2 - E] a$, for all leaves $a' \in F_2^a$, then we choose
  an arbitrary such leaf $a' \in F_2^a$ and the first edge $f$ on the path
  from $a$ to $a'$.
  Lemma~\ref{lem:edge-shift} now implies that $F_2 - E$ and
  $F_2 - (E \setminus \set{f} \cup \set{\edge{a}})$ yield the same forest,
  which contradicts our choice of $E$.
  The same argument leads to a contradiction if $b' \noreach[F_2 - E] b$,
  for all leaves $b' \in F_2^b$, or $c' \noreach[F_2 - E] c$, for all leaves
  $c' \in F_2^c$.
  Thus, there exist leaves $a' \in F_2^a$, $b' \in F_2^b$, and $c' \in F_2^c$
  such that $a' \reach[F_2 - E] a$, $b' \reach[F_2 - E] b$, and
  $c' \reach[F_2 - E] c$.

  Since $(a,c)$ is a sibling pair of $\dot{T}_1$, $\triple{a'c'}{b'}$ is a
  triple of $F_1$, while $c \notin F_2^b$ implies that either
  $\triple{a'b'}{c'}$ is a triple of $F_2$ or $a' \noreach[F_2] c'$.
  In either case, the triple $\triple{a'c'}{b'}$ is incompatible with~$F_2$ and,
  by Observation~\ref{obs:incompatible-triple} and because $a' \reach[F_2 - E]
  b'$, we have $a' \noreach[F_2 - E] c'$ and, hence, $a'' \noreach[F_2 - E] c'$,
  for every leaf $a'' \in F_2^a$.
  Now, if there existed a leaf $x \notin F_2^c$ such that
  $c' \reach[F_2 - E] x$, then the components of $F_2 \div E$ containing $a'$
  and $c'$ would overlap in $F_1$: they would both include $\edge{p_a}$ because
  $b', x \notin F_1^{p_a}$.
  By Lemma~\ref{lem:forest-condition}, this would contradict that $F_2 \div E$
  is an AF of $F_1$ and $F_2$.
  Thus, no such leaf $x$ exists.
  On the other hand, since $F_2^c$ is not a component of $F_2$, there exists a
  leaf $x \notin F_2^c$ such that $c' \reach[F_2] x$.
  Since $x \noreach[F_2 - E] c'$, at least one edge on the path from $c'$ to
  $x$ belongs to $E$.
  Let $f$ be the first such edge.
  Since $c \reach[F_2 - E] c'$, $f$ does not belong to $F_2^c$.
  Hence, edges $\edge{c}$ and $f$ satisfy the conditions of
  Lemma~\ref{lem:edge-shift}, and $F_2 - E$ and $F_2 - (E \setminus
  \set{f} \cup \set{\edge{c}})$ yield the same forest, contradicting the choice
  of~$E$.

  The second claim of the lemma follows using the same arguments after choosing
  $E$ of size $\aecut{T_1, T_2, F_2}$ and such that $F \div E$ is an AAF of
  $T_1$ and $T_2$.\qquad
\end{proof}

The last three lemmas of this section now establish the correctness of each case
in Step~\ref{case:spr:non-sibling} of the algorithm and conclude the proof of
Theorem~\ref{thm:maf:fpt}.

\begin{lemma}[Case~\ref{case:spr:sc}---Separate Components]
  \label{lem:spr:sc}
  If $(a,c)$ is a sibling pair of $F_1$, $a \noreach[F_2] c$, and neither
  $F_2^a$ nor $F_2^c$ is a component of $F_2$, then
  there exists an edge set $E$ of size
  $\ecut{T_1, T_2, F_2}$ (resp.\ $\aecut{T_1, T_2, F_2}$) and such that
  $F_2 \div E$ is an AF (resp.\ AAF) of $T_1$ and $T_2$ and $E \cap
  \set{\edge{a}, \edge{c}} \ne \emptyset$.
\end{lemma}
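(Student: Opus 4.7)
The plan is to sharpen the argument of Lemma~\ref{lem:spr:3-way} by tightening the maximization criterion to exclude $\edge{b}$. Choose an edge set $E$ of size $\ecut{T_1, T_2, F_2}$ such that $F_2 \div E$ is an AF of $T_1, T_2$ and $\size{E \cap \set{\edge{a}, \edge{c}}}$ is maximum among all such $E$, breaking ties in favour of larger $\size{E \cap \set{\edge{b}}}$. Assume for contradiction that this primary maximum is zero, so $\edge{a}, \edge{c} \notin E$. The goal is to produce an edge set $E' = E \setminus \set{f} \cup \set{\edge{c}}$ of the same size with $F_2 \div E' = F_2 \div E$, contradicting the maximality of $E$.

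Following the structure of Lemma~\ref{lem:spr:3-way}'s proof, I would first establish the existence of leaves $a' \in F_2^a$, $b' \in F_2^b$, $c' \in F_2^c$ with $a \reach[F_2-E] a'$, $b \reach[F_2-E] b'$, and $c \reach[F_2-E] c'$. The existence of $a'$ and $c'$ follows from Lemma~\ref{lem:edge-shift}: were there no such leaf, the first edge in $E$ on the path from $a$ (resp.\ $c$) to any leaf of $F_2^a$ (resp.\ $F_2^c$) could be shifted to $\edge{a}$ (resp.\ $\edge{c}$), contradicting the primary maximality. For $b'$ the argument splits: if $\edge{b} \in E$, the absence of $b'$ would make $b$'s component of $F_2-E$ label-free, rendering $\edge{b}$ a redundant cut and contradicting the optimality of $\size{E}$; if $\edge{b} \notin E$, the analogous shift adds $\edge{b}$ to $E$ without changing $F_2 \div E$, contradicting the secondary maximality.

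Since $a \noreach[F_2] c$ by hypothesis, $a' \noreach[F_2-E] c'$ holds automatically. Let $A$ and $C$ denote the components of $F_2 \div E$ containing $a'$ and $c'$. I would then show that no leaf $x \notin F_2^c$ can satisfy $c' \reach[F_2-E] x$: such an $x$ lies outside $F_1^{p_a}$, since $x \notin F_2^c = F_1^c$ and $x$'s $F_2$-component differs from $a'$'s, so $x$'s label is outside $F_1^a = F_2^a$; consequently the $F_1$-path from $c'$ to $x$ crosses $\edge{p_a}$. Analogously, $A$ contains a leaf outside $F_1^{p_a}$---namely $b'$ when $\edge{b} \notin E$ (as the labels of $F_2^b$ are disjoint from those of $F_1^a$ and $F_1^c$), or a leaf of $A$ coming from $F_2$'s portion above the parent of $a$ in $F_2$ when $\edge{b} \in E$ (such a leaf lies outside $F_2^a$ and $F_2^b$ trivially, and outside $F_2^c$ because $a \noreach[F_2] c$ confines $a$'s $F_2$-component to avoid $F_2^c$)---so an $F_1$-path between two leaves of $A$ also crosses $\edge{p_a}$. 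The shared edge forces $A$ and $C$ to overlap in $F_1$, contradicting Lemma~\ref{lem:forest-condition}.

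Since $F_2^c$ is not a component of $F_2$, some leaf $x \notin F_2^c$ satisfies $c' \reach[F_2] x$; by the previous paragraph $c' \noreach[F_2-E] x$, so some edge on the $c'$-to-$x$ path in $F_2$ lies in $E$. Taking $f$ to be the first such edge from $c'$, the fact that $c \reach[F_2-E] c'$ ensures $f$ is not inside $F_2^c$. Lemma~\ref{lem:edge-shift} applied to $f$ and $\edge{c}$ yields $E' = E \setminus \set{f} \cup \set{\edge{c}}$ with $F_2 \div E' = F_2 \div E$, producing the desired contradiction with $\edge{c} \in E'$. The AAF version follows by the identical argument with ``AAF'' replacing ``AF''. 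The main obstacle is the overlap argument when $\edge{b} \in E$: exhibiting a leaf of $A$ outside $F_1^{p_a}$ relies crucially on the hypothesis that $F_2^a$ is not a component of $F_2$, which guarantees that $A$ extends strictly above the parent of $a$ in $F_2$, combined with $a \noreach[F_2] c$ to keep such leaves away from $F_2^c$.
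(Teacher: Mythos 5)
Your overall strategy is the paper's: maximize $\size{E \cap \set{\edge{a}, \edge{c}}}$, use the Shifting Lemma to produce representative leaves $a'$ and $c'$, invoke the no-overlap condition of Lemma~\ref{lem:forest-condition} across the edge $\edge{p_a}$ of $F_1$, and finish by shifting an edge of $E$ onto $\edge{c}$. The difference is that the paper only argues that \emph{at least one} of the two components (the one containing $a'$ or the one containing $c'$) is confined to its subtree and then proceeds w.l.o.g., whereas you try to confine $C$ specifically, which forces you to prove that $A$ is \emph{not} confined, i.e., that $A$ contains a labelled leaf outside $F_1^{p_a}$.

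That is where the gap is. In the case $\edge{b} \in E$ you justify the existence of such a leaf by saying that the hypothesis ``$F_2^a$ is not a component of $F_2$'' guarantees that $A$ extends strictly above $a$'s parent $v$ in $F_2$. This is a non sequitur: that hypothesis constrains $F_2$, not $F_2 - E$. Nothing prevents $E$ from containing both $\edge{b}$ and $\edge{v}$, in which case every labelled leaf of $A$ lies in $F_2^a$ (any portion of $A$ above $a$ is label-free and is suppressed when yielding $F_2 \div E$). In that situation your overlap argument collapses, and the conclusion you are after --- that $c' \noreach[F_2 - E] x$ for all $x \notin F_2^c$ --- can simply be false; the correct move there is to shift an edge onto $\edge{a}$, not $\edge{c}$. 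The claim you need is in fact true under your maximality assumption, but it requires its own shifting argument: if all labelled leaves of $A$ lay in $F_2^a$, then the component of $v$ in $F_2 - (E \cup \set{\edge{a}})$ would be label-free, some edge $f \in E$ would lie on its boundary (e.g., the first edge of $E$ on the path from $v$ towards a leaf of $F_2^b$, which exists because $b$'s leaves do not reach $v$), and Lemma~\ref{lem:edge-shift} would let you replace $f$ by $\edge{a}$, contradicting the maximality of $\size{E \cap \set{\edge{a}, \edge{c}}}$. Note that this is a different statement from the existence of $a'$ (a leaf of $F_2^a$ reaching $a$), which you do establish correctly; non-confinement of $A$ concerns leaves \emph{outside} $F_2^a$. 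Either supply this extra shifting argument or adopt the paper's symmetric ``at least one of the two is confined, w.l.o.g.\ it is $C$'' formulation; as written, the proof does not go through.
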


\begin{proof}
  Consider an edge set $E$ of size $\ecut{T_1, T_2, F_2}$ and such
  that $F_2 \div E$ is an AF of $F_1$ and $F_2$, and assume $E$
  contains the maximum number of edges from $\set{\edge{a}, \edge{c}}$
  among all edge sets satisfying these conditions.
  Assume for the sake of contradiction that
  $E \cap \set{\edge{a}, \edge{c}} = \emptyset$.

  By the arguments in the proof of Lemma~\ref{lem:spr:3-way},
  there exist leaves $a' \in F_2^a$ and $c' \in F_2^c$ such that
  $a' \reach[F_2 - E] a$ and $c' \reach[F_2 - E] c$.
  Since $(a,c)$ is a
  sibling pair of $F_1$ but $a \noreach[F_2] c$ and, hence,
  $a' \noreach[F_2 - E] c'$, we must have $a' \noreach[F_2 - E'] x$, for every
  lea $x \notin F_2^a$, or $c' \noreach[F_2 - E] x$,
  for every leaf $x \notin F_2^c$.
  W.l.o.g.\ assume the latter.
  As shown in the proof of Lemma~\ref{lem:spr:3-way}, this implies that
  $F_2 - E$ and $F_2 - (E \setminus \set{f} \cup \set{\edge{c}})$ yield the same
  forest, where $f$ is the first edge on the path from $c'$ to a leaf
  $x \notin F_2^c$ and such that $c' \reach[F_2] x$.
  This contradicts the choice of $E$.

  The second claim of the lemma follows using the same arguments after choosing
  $E$ of size $\aecut{T_1, T_2, F_2}$ and such that $F \div E$ is an AAF of
  $T_1$ and $T_2$.\qquad
\end{proof}

\begin{lemma}[Case~\ref{case:spr:cob}---One Pendant Node---MAF]
  \label{lem:spr:cob}
  If $(a, c)$ is a sibling pair of~$F_1$, $a \reach[F_2] c$, and the path
  from $a$ to $c$ in $F_2$ has only one pendant node $b$, then there exists an
  edge set $E$ of size $\ecut{T_1, T_2, F_2}$ and such that $F_2 \div E$ is an
  AF of $T_1$ and $T_2$ and $\edge{b} \in E$.
\end{lemma}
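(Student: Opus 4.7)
The plan is to follow the structure of the proof of Lemma~\ref{lem:spr:3-way}, but with a more refined choice of the edge set $E$. We consider a minimum AF edge set $E$ (so $F_2 \div E$ is an AF of $F_1$ and $F_2$) chosen to maximize $|E \cap \set{\edge{b}}|$, with ties broken by maximizing $|E \cap \set{\edge{a}, \edge{b}, \edge{c}}|$. If $\edge{b} \in E$ we are done; otherwise, by Lemma~\ref{lem:spr:3-way} applied together with the tie-breaking rule, we must have $\edge{a} \in E$ or $\edge{c} \in E$, and we aim to derive a contradiction.

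The structural observation underlying the proof is that since $b$ is the unique pendant on the path from $a$ to $c$ in $F_2$ and $a$ is at least as deep as $c$, $b$ must be $a$'s sibling in $F_2$. Setting $v_1 := p_a = p_b$ and $v_2 := p_{v_1} = p_c$, the edges $\edge{a}$ and $\edge{b}$ share the endpoint $v_1$, and $\edge{c}$ and $\edge{v_1}$ share the endpoint $v_2$. We treat the case $\edge{a} \in E$ in detail; the case $\edge{c} \in E$ follows by a symmetric argument that uses the analogous sharing at $v_2$.

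For the case $\edge{a} \in E$, we apply the Shifting Lemma (Lemma~\ref{lem:edge-shift}) with $f = \edge{a}$ and $e = \edge{b}$. Since these edges share $v_1$, we have $v_f = v_1$ and Condition~1 is trivial. Condition~2 requires that $v_1$'s component in $F_2 - (E \cup \set{\edge{b}})$ contain no labelled leaf. When $\edge{v_1} \in E$ this is immediate, so the shift produces $E' = E \setminus \set{\edge{a}} \cup \set{\edge{b}}$ with $F_2 \div E = F_2 \div E'$, contradicting the maximization of $|E \cap \set{\edge{b}}|$.

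The main obstacle is the case $\edge{v_1} \notin E$, in which $v_1$'s component still contains labelled leaves reachable through $v_2$ and the Shifting Lemma does not apply directly. Here we argue directly that $E' = E \setminus \set{\edge{a}} \cup \set{\edge{b}}$ still yields a valid AF $F_2 \div E'$ of $F_1$ and $F_2$ by verifying the three conditions of Lemma~\ref{lem:forest-condition}. The swap exchanges $a$ and $b$ at the relevant position in the component containing $c$ (obtained by suppressing the resulting degree-$2$ node $v_1$). Because $(a, c)$ is a sibling pair of $F_1$ and $F_2^a = F_1^a$ (since $a \in R_t$), the triple $ac|x$ is a triple of $F_1$ for every other labelled leaf $x$ in the affected component, ruling out any triple incompatibility introduced by the swap; the label-subset and non-overlap conditions are preserved by essentially the same argument, yielding the desired contradiction.
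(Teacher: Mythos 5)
Your setup is fine: choosing $E$ extremally, invoking Lemma~\ref{lem:spr:3-way} to get $E \cap \set{\edge{a},\edge{c}} \ne \emptyset$ when $\edge{b} \notin E$, and dispatching the sub-case $\edge{v_1} \in E$ with the Shifting Lemma all match the paper. The gap is your main case: the claim that $E' = E \setminus \set{\edge{a}} \cup \set{\edge{b}}$ still yields an AF when $\edge{v_1} \notin E$ is false in general, and the justification offered does not address the triples that actually go wrong. The observation that $\triple{ac}{x}$ is a triple of $F_1$ only constrains triples containing leaves of both $F_2^a$ and $F_2^c$. The dangerous triples have \emph{one} leaf in $F_2^a$ and \emph{two} leaves outside $F_2^a \cup F_2^c$: after the swap, $a$ attaches to the surrounding component at $v_1$ in $F_2$ but at $\parent{a}$ in $F_1$, and these attachment points can sit on different sides of the relevant paths. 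Concretely, take $T_1 = \rho((4,(1,2)),(5,3))$ and $T_2 = \rho((((1,3),2),5),4)$, so $a=1$, $c=2$, $b=3$. Then $E = \set{\edge{1},\edge{2}}$ is a minimum AF edge set (it yields $\set{1}$, $\set{2}$, $\rho((3,5),4)$, which agrees with $T_1$), with $\edge{a} \in E$, $\edge{b} \notin E$, $\edge{v_1} \notin E$. Your swap gives $\set{\edge{2},\edge{3}}$, whose large component $\rho((1,5),4)$ contains the triple $\triple{15}{4}$, incompatible with $T_1$'s $\triple{14}{5}$ --- so $F_2 \div E'$ is not an AF. (The lemma is still true here: $\set{\edge{3},\edge{4}}$ and $\set{\edge{3},\edge{u}}$ work, where $u = \parent{c}$, but neither is a one-edge modification of $E$.)

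This is precisely why the paper's proof is more elaborate: it replaces $E$ by $E \setminus \set{\edge{a},\edge{c},\edge{v}} \cup \set{\edge{b}}$, and when that fails to be an AF it additionally cuts $\edge{u}$ to sever the leaves of $F_2^a \cup F_2^c$ from the leaves above $u$, which kills exactly the problematic triples and overlaps. Affording the extra edge $\edge{u}$ then requires the separate counting argument that $\sizeL{E \cap \set{\edge{a},\edge{c},\edge{v}}} \ge 2$ in this situation, which has no analogue in your write-up. A secondary concern is your treatment of $\edge{c} \in E$ as ``symmetric'': shifting $\edge{c}$ toward the shared endpoint $v_2$ would introduce $\edge{v_1}$, not $\edge{b}$, so that case is not symmetric to the $\edge{a}$ case and needs its own argument. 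As written, the proposal does not establish the lemma.
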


\begin{proof}
  Again, consider an edge set $E$ of size $\ecut{T_1, T_2, F_2}$ and
  such that $F_2 \div E$ is an AF of $F_1$ and~$F_2$, and assume
  $E$ contains the maximum number of edges
  from $\set{\edge{a}, \edge{b}, \edge{c}}$ among all edge sets
  satisfying these conditions.
  By Lemma~\ref{lem:spr:3-way},
  $E \cap \set{\edge{a}, \edge{b}, \edge{c}} \ne \emptyset$.
  If $\edge{b} \in E$, there is nothing to prove, so assume
  $\edge{b} \notin E$.
  Let $v = \parent{a} = \parent{b}$, and $u = \parent{v} = \parent{c}$.

  If $F_2 \div (E \setminus \set{\edge{a}, \edge{c}, \edge{v}} \cup
  \set{\edge{b}})$ is an AF of $F_1$ and $F_2$, we are done because
  $E \cap \set{\edge{a}, \edge{c}} \ne \emptyset$ and, hence,
  $\size{E \setminus \set{\edge{a}, \edge{c}, \edge{v}} \cup \set{\edge{b}}}
  \le \size{E} = \ecut{T_1, T_2, F_2}$.
  So assume $F_2 \div (E \setminus \set{\edge{a}, \edge{c}, \edge{v}} \cup
  \set{\edge{b}})$ is not an AF of $F_1$ and $F_2$.
  We prove that
  $F_2 \div (E \setminus \set{\edge{a}, \edge{c}, \edge{v}} \cup
  \set{\edge{b}, \edge{u}})$ is an AF of $F_1$ and $F_2$ and that
  $\size{E \cap \set{\edge{a}, \edge{c}, \edge{v}}} \ge\nobreak 2$ in this case.
  The latter implies that $\size{E \setminus \set{\edge{a}, \edge{c}, \edge{v}}
    \cup \set{\edge{b}, \edge{u}}} \le \size{E}$, that is,
  $F_2 \div (E \setminus \set{\edge{a}, \edge{c}, \edge{v}} \cup
  \set{\edge{b}, \edge{u}})$ is an MAF of $F_1$ and $F_2$.

  If $F_2 \div (E \setminus \set{\edge{a}, \edge{c}, \edge{v}} \cup
  \set{\edge{b}})$ is not an AF of $F_1$ and $F_2$, then either two of its
  components overlap in $F_1$ or it contains a triple incompatible with $F_1$.
  First consider the case of overlapping components.
  Observe that $F_2 \div (E \cup \set{\edge{b}})$ is an AF of $F_1$ and $F_2$
  because it is a refinement of $F_2 \div E$.
  The only component of $F_2 \div (E \setminus \set{\edge{a}, \edge{c},
    \edge{v}} \cup \set{\edge{b}})$ that is not a component of $F_2 \div
  (E \cup \set{\edge{b}})$ is the one containing $a$ and~$c$.
  Call this component~$C$.
  Thus, if two components of $F_2 \div (E \setminus \set{\edge{a}, \edge{c},
    \edge{v}} \cup \set{\edge{b}})$ overlap in $F_1$, one of them must
  be $C$.
  Call the other component $C'$.
  For any two leaves $x$ and $y$ in $C$ such that $x, y \notin F_1^{p_a}$,
  the path $P$ between $x$ and $y$ also exists in
  $F_2 \div (E \cup \set{\edge{b}})$ and, thus, cannot overlap $C'$.
  Thus, w.l.o.g.\ $x \in F_1^{p_a}$.
  Now, if the edge $e$ shared by $P$ and $C'$ belonged to $F_1^{p_a}$,
  $P$ and $C'$ would also overlap in $F_2$ because
  $F_1^{p_a}$ is the same as the subtree of $F_2 \div \set{\edge{b}}$ with
  root~$u$.
  This, however, is impossible because
  $F_2 \div (E \setminus \set{\edge{a}, \edge{c}, \edge{v}} \cup
  \set{\edge{b}})$ is a forest of $F_2$.
  Thus, the edge $e$ shared by $P$ and $C'$ cannot belong to $F_1^{p_a}$,
  and we have $y \notin F_1^{p_a}$.
  This implies that the path from $x'$ to $y$, for any leaf
  $x' \in F_2^a \cup F_2^c$, includes $e$.
  Therefore, since $F_2 \div (E \cup \set{\edge{b}})$ is an AF of $F_1$
  and~$F_2$, we have $x' \noreach[F_2 - (E \cup \set{\edge{b}})] y$, for every
  leaf $x' \in F_2^a \cup F_2^c$.
  Since $x \reach[F_2 \div (E \setminus \set{\edge{a}, \edge{c},
    \edge{v}} \cup \set{\edge{b}})] y$, the path from $u$ to $y$ in $F_2$
  contains no edge in $E$.
  Thus, since $x' \noreach[F_2 - (E \cup \set{\edge{b}})] y$, for all
  leaves $x' \in F_2^a \cup F_2^c$, the choice of $E$ and
  Lemma~\ref{lem:edge-shift} imply that $E$ must include $\edge{c}$ and at least
  one of $\edge{a}$ or $\edge{v}$, that is,
  $\size{E \cap \set{\edge{a}, \edge{c}, \edge{v}}} \ge 2$.

  In $F_2 - (E \setminus \set{\edge{a}, \edge{c},\edge{v}} \cup
  \set{\edge{b}, \edge{u}})$, $C$ is split into two components
  $C_1 = C \cap F_2^u$ and $C_2 = C \setminus F_2^u$.
  All other components are the same as in $F_2 - (E \setminus \set{\edge{a},
    \edge{c},\edge{v}} \cup \set{\edge{b}})$.
  Since $x, y \in F_1^{p_a}$, for all leaves $x, y \in C_1$, and $x, y \notin
  F_1^{p_a}$, for all leaves $x, y \in C_2$, the same argument as in the
  previous paragraph shows that neither $C_1$ nor $C_2$ overlaps a component
  $C' \notin \set{C_1, C_2}$.
  $C_1$ and $C_2$ do not overlap either because $C_1 \subseteq F_1^{p_a}$ and
  $C_2 \cap F_1^{p_a} = \emptyset$.
  Thus, no two components of $F_2 - (E \setminus \set{\edge{a}, \edge{c},
    \edge{v}} \cup \set{\edge{b}, \edge{u}})$ overlap in~$F_1$.

  Now assume $F_2 \div (E \setminus \set{\edge{a}, \edge{c},
    \edge{v}} \cup \set{\edge{b}})$ contains a triple incompatible with $F_1$.
  Then, once again, this triple has to be part of $C$ and must involve a leaf
  in $F_2^a \cup F_2^c$ and a leaf not in $F_2^a \cup F_2^c$ because any other
  triple is either a triple of $F_2 \div (E \cup \set{\edge{b}})$ or a triple of
  $F_1^{p_a}$; in either case, it is a triple of $F_1$.
  $F_2 \div (E \setminus \set{\edge{a}, \edge{c}, \edge{v}} \cup \set{\edge{b},
    \edge{u}})$ cannot contain a triple with one leaf in $F_2^a \cup F_2^c$
  and one leaf not in $F_2^a \cup F_2^c$ because the path between any two such
  leaves includes~$\edge{u}$.
  Thus, $F_2 \div (E \setminus \set{\edge{a}, \edge{c}, \edge{v}} \cup
  \set{\edge{b}, \edge{u}})$ contains no triples incompatible with $F_1$.
  Since we have just shown that no two components of
  $F_2 \div (E \setminus \set{\edge{a}, \edge{c},
    \edge{v}} \cup \set{\edge{b}, \edge{u}})$ overlap in $F_1$,
  $F_2 \div (E \setminus \set{\edge{a}, \edge{c},
    \edge{v}} \cup \set{\edge{b}, \edge{u}})$ is an AF of $F_1$ and $F_2$.

  It remains to prove that $\size{E \cap \set{\edge{a}, \edge{c}, \edge{v}}} \ge
  2$ if $F_2 \div (E \setminus \set{\edge{a}, \edge{c},
    \edge{v}} \cup \set{\edge{b}})$ contains a triple $xy|z$ incompatible with
  $F_1$.
  Since this triple needs to involve a leaf in $F_2^a \cup F_2^c$ and one not
  in $F_2^a \cup F_2^c$, we have (i) $x, y \in F_2^a \cup F_2^c$ and
  $z \notin F_2^a \cup F_2^c$, (ii)~$x \in F_2^a \cup F_2^c$ and
  $y, z \notin F_2^a \cup F_2^c$ or (iii) $x, y \notin F_2^a \cup F_2^c$
  and $z \in F_2^a \cup F_2^c$.
  The first case cannot arise because $x, y \in F_1^{p_a}$ and
  $z \notin F_1^{p_a}$ in this case, that is, $xy|z$ is also a triple of $F_1$.

  For the second case, assume for the sake of contradiction that
  $\size{E \cap \set{\edge{a}, \edge{c}, \edge{v}}} = 1$, and assume w.l.o.g.\
  that $x \in F_2^a$.
  Since every triple $x'y|z$ with $x' \in F_2^a$ would also be incompatible
  with $F_1$, $F_2 \div (E \setminus \set{\edge{b}})$ cannot contain such
  a triple.
  Hence, the choice of $E$ and Lemma~\ref{lem:edge-shift} imply that
  $E \cap \set{\edge{a}, \edge{v}} \ne \emptyset$ and, therefore,
  $\edge{c} \notin E$.
  As in the proof of Lemma~\ref{lem:spr:3-way}, this implies that there exists a
  leaf $c' \in F_2^c$ such that $c' \reach[F_2 - E] c \reach[F_2 - E] u$,
  by the choice of $E$ and Lemma~\ref{lem:edge-shift}.
  Since $xy|z$ is a triple of $F_2 \div (E \setminus \set{\edge{a}, \edge{c},
    \edge{v}} \cup \set{\edge{b}})$, we have
  $y \reach[F_2 - E] u \reach[F_2 - E] z$.
  Hence, $c'y|z$ is a triple of $F_2 \div E$ and this triple is incompatible
  with $F_1$ because $xy|z$ is, $x, c' \in F_1^{p_a}$, and
  $y, z \notin F_1^{p_a}$.
  This is a contradiction, that is,
  $\size{E \cap \set{\edge{a}, \edge{c}, \edge{v}}} \ge 2$.

  In the last case, if we assume w.l.o.g.\ that $z \in F_2^a$, an analogous
  argument as for the second case shows that, if
  $\size{E \cap \set{\edge{a}, \edge{c}, \edge{v}}} = 1$, then
  $F_2 \div E$ contains a triple $xy|c'$ with $c' \in F_2^c$ and which is
  incompatible with $F_1$, which is again a contradiction.\qquad
\end{proof}

\begin{lemma}[Case~\ref{case:spr:cab}---Multiple Pendant Nodes]
  \label{lem:spr:cab}
  If $(a,c)$ is a sibling pair of~$F_1$, $a \reach[F_2] c$,
  and the path from $a$ to $c$ in $F_2$ has $q \ge 2$ pendant nodes
  $b_1, b_2, \dots, b_q$, then there exists an edge
  set $E$ of size $\ecut{T_1, T_2, F_2}$ (resp.\ $\aecut{T_1, T_2,
    F_2}$) and such that $F_2 \div E$ is an AF (resp.\ AAF) of $T_1$ and
  $T_2$ and either $E \cap \set{\edge{a}, \edge{c}} \ne \emptyset$ or
  $\set{\edge{b_1}, \edge{b_2}, \dots, \edge{b_q}} \subseteq E$.
\end{lemma}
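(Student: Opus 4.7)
I will start with an edge set $E$ of size $\ecut{T_1, T_2, F_2}$ with $F_2 \div E$ an AF of $F_1$ and $F_2$, chosen to maximize $\size{E \cap S}$, where $S = \set{\edge{a}, \edge{c}, \edge{b_1}, \ldots, \edge{b_q}}$. Suppose for contradiction that the conclusion fails, so $\edge{a}, \edge{c} \notin E$ and some $\edge{b_j} \notin E$. I aim to exhibit an $E'$ of the same size with $\size{E' \cap S} > \size{E \cap S}$, contradicting maximality. The argument splits on whether $a \reach[F_2 - E] c$.

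If $a \reach[F_2 - E] c$, let $C$ be their common component in $F_2 - E$. Suppose a label $b_i^*$ of $F_2^{b_i}$ also lies in $C$ for some $i$. Then in the yielded tree $\tilde{C}$ the node $b_i^*$ branches off the $a$--$c$ path at the pendant attachment point of $b_i$, producing the triple $\triple{ab_i^*}{c}$ or $\triple{cb_i^*}{a}$ (depending on which side of $\mathrm{LCA}_{F_2}(a, c)$ the pendant $b_i$ lies); since $(a, c)$ is a sibling pair in $\dot{T}_1$, the corresponding triple in $F_1$ is $\triple{ac}{b_i^*}$, which is incompatible and contradicts Lemma~\ref{lem:forest-condition}(3). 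Hence for every $i$, no label of $F_2^{b_i}$ is connected to $b_i$ in $F_2 - E$, and Lemma~\ref{lem:edge-shift} lets us shift, for each $i$ with $\edge{b_i} \notin E$, a separating edge (strictly inside $F_2^{b_i}$, hence outside $S$) to $\edge{b_i}$. This iteratively increases $\size{E \cap S}$, contradicting maximality.

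If $a \noreach[F_2 - E] c$, let $\tilde{C}_a$ and $\tilde{C}_c$ be the yielded components containing $a$ and $c$. In $\dot{T}_1$ the subtree rooted at $\parent{a} = \parent{c}$ has $\set{a, c}$ as its only leaves, so for any label $x \in \tilde{C}_a \setminus \set{a}$ and any label $y \in \tilde{C}_c \setminus \set{c}$, both the $\dot{T}_1$-path from $a$ to $x$ and the $\dot{T}_1$-path from $c$ to $y$ traverse the edge above $\parent{a}$; this would make $\tilde{C}_a$ and $\tilde{C}_c$ overlap in $F_1$, contradicting Lemma~\ref{lem:forest-condition}(2). So $\tilde{C}_a = \set{a}$ or $\tilde{C}_c = \set{c}$; by symmetry assume the former. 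Since $\edge{a}, \edge{c} \notin E$ but $a \noreach[F_2 - E] c$, some interior edge $f$ of the $a$--$c$ path in $F_2$ lies in $E$; choose $f$ closest to $a$. In $F_2 - (E \cup \set{\edge{a}})$, the component of $f$'s $a$-side endpoint contains no labeled node because $\tilde{C}_a = \set{a}$, so Lemma~\ref{lem:edge-shift} shifts $f$ to $\edge{a}$. Since $f$ is an interior path edge, $f \notin S$, yielding $E'$ with $\size{E' \cap S} = \size{E \cap S} + 1$, a contradiction.

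The AAF case is essentially identical: choose $E$ of size $\aecut{T_1, T_2, F_2}$ with $F_2 \div E$ an AAF of $T_1$ and~$T_2$, and run the same arguments; the Shifting Lemma preserves $F_2 \div E$ and thus its cycle graph, so each shifted $E'$ still yields an AAF. The principal obstacle is Case~2, where the sibling-pair structure in $\dot{T}_1$ is used through the no-overlap condition to simultaneously constrain both $\tilde{C}_a$ and $\tilde{C}_c$; once this is established, the conclusion follows routinely from the Shifting Lemma.
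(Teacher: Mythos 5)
Your proof is correct, but it takes a genuinely different route from the paper's. The paper proves this lemma by induction on $q$: Lemma~\ref{lem:spr:3-way} supplies an optimal edge set meeting $\set{\edge{a}, \edge{b_1}, \edge{c}}$; if that set contains $\edge{b_1}$, one passes to $F_2 \div \set{\edge{b_1}}$, where the path from $a$ to $c$ has only $q-1$ pendant nodes, and applies the induction hypothesis. You instead argue directly: choose $E$ to maximize $\size{E \cap S}$ for $S = \set{\edge{a}, \edge{c}, \edge{b_1}, \dots, \edge{b_q}}$, split on whether $a \reach[F_2 - E] c$, and in each case hand the Shifting Lemma an edge of $E$ that lies outside $S$ --- found via the incompatible-triple argument (Observation~\ref{obs:incompatible-triple}) in one case and the overlap condition of Lemma~\ref{lem:forest-condition} in the other. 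In effect you have inlined and adapted the proof of Lemma~\ref{lem:spr:3-way} rather than invoking it: your version is self-contained and avoids the induction, at the cost of redoing that machinery; the paper's version is shorter because it reuses the three-way lemma as a black box. Both treatments of the AAF variant are the same observation, namely that the Shifting Lemma produces the identical yielded forest, so acyclicity is preserved.

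One elision should be made explicit. The nodes $a$, $c$, and $b_i$ are in general internal nodes of $F_2$ (roots of contracted agreeing subtrees), while the triple and overlap conditions of Lemma~\ref{lem:forest-condition} are stated for labelled leaves of $X$. So before you can form the triple $\triple{ab_i^*}{c}$ in Case~1 or the components $\tilde{C}_a$ and $\tilde{C}_c$ in Case~2, you need leaves $a' \in F_2^a$ and $c' \in F_2^c$ with $a' \reach[F_2 - E] a$ and $c' \reach[F_2 - E] c$ to serve as representatives. This is precisely the opening step of the paper's proof of Lemma~\ref{lem:spr:3-way}, and it does follow from your own maximality criterion: if no leaf of $F_2^a$ reached $a$ in $F_2 - E$, the Shifting Lemma would let you trade the first blocking edge inside $F_2^a$ (which is not in $S$) for $\edge{a}$, strictly increasing $\size{E \cap S}$. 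The gap is therefore filled by the framework you already set up, but the step needs to be stated rather than assumed.
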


\begin{proof}
  We prove the lemma by induction on $q$.
  For $q = 1$, the claim holds by Lemma~\ref{lem:spr:3-way},
  so assume $q > 1$ and the claim holds for $q-1$.
  Assume further that $b_1$ is the sibling of $a$.
  By Lemma~\ref{lem:spr:3-way}, there exists a set $E''$ of
  size $\ecut{T_1, T_2, F_2}$ and such that $F_2 \div E''$ is an AF of $F_1$
  and $F_2$ and $E'' \cap \set{\edge{a}, \edge{b_1}, \edge{c}} \ne
  \emptyset$.
  If $E'' \cap \set{\edge{a}, \edge{c}} \ne \emptyset$, we are done.
  Otherwise $\edge{b_1} \in E''$ and
  $\ecut{T_1, T_2, F_2'} = \ecut{T_1, T_2, F_2} - 1$, where
  $F_2' := F_2 \div \set{\edge{b_1}}$.
  In $F_2'$, the path from $a$ to $c$ has $q-1$ pendant nodes,
  namely $b_2, b_3, \dots, b_q$.
  Thus, by the induction hypothesis, there exists an edge set $E'$ of size
  $\ecut{T_1, T_2, F_2'}$ and such that $F_2' \div E'$
  is an AF of $F_1$ and $F_2'$ and
  $E' \cap \set{\edge{a}, \edge{c}} \ne \emptyset$ or
  $\set{\edge{b_2}, \edge{b_3}, \dots, \edge{b_q}} \subseteq E'$.
  The set $E := E' \cup \set{\edge{b_1}}$ has size
  $\size{E'} + 1 = \ecut{T_1, T_2, F_2}$, $F_2 \div E = F_2' \div
  E'$ is an AF of $F_1$ and $F_2$, and
  either $E \cap \set{\edge{a}, \edge{c}} \ne \emptyset$ or
  $\set{\edge{b_1}, \edge{b_2}, \dots, \edge{b_q}} \subseteq E$.

  The second claim of the lemma follows using the same arguments,
  since Lemma~\ref{lem:spr:3-way} holds for both AF and AAF.\qquad
\end{proof}

\section{Computing the Hybridization Number}

\label{sec:refinement}

In this section, we present our algorithm for computing the hybridization
number of two $X$-trees.
As in \S\ref{sec:fpt}, we focus on deciding whether
$\dhyb{T_1, T_2} \le k$, as $\dhyb{T_1, T_2}$ can be computed by trying
increasing values of~$k$ and this does not increase the running time by
more than a constant factor.
Also as in \S\ref{sec:fpt}, it will be obvious from the
description of our algorithm that it produces a corresponding AAF when it
answers ``yes''.

Every AAF of $T_1$ and $T_2$ can be computed by first computing an AF $F$ of
$T_1$ and $T_2$ and then cutting additional edges in $F$ as necessary to break
cycles in $F$'s cycle graph $G_F$.
This suggests the following strategy to decide whether $\dhyb{T_1, T_2} \le k$:
We modify the MAF algorithm from \S\ref{sec:fpt} called with
parameter $k$.
Note that this algorithm may find AFs that are not maximum when $k >
\dspr{T_1, T_2}$, so we do not restrict our search to refinements of MAFs.
For every invocation $\alg{F_1, F_2, k''}$ of the
algorithm that would return ``yes''
in Step~\ref{case:success}, $F_2$ is an AF of $T_1$ and $T_2$ obtained by
cutting $k' := k - k''$ edges.
$F_2$ may not be an AAF of $T_1$ and $T_2$, but it may be possible to break
all cycles in $\cyc{F_2}$ by cutting at most $k''$ additional edges,
in which case $\dhyb{T_1, T_2} \le k' + k'' = k$.
Thus, instead of unconditionally returning ``yes'' in Step~\ref{case:success},
we invoke a second algorithm $\balg{F_2, k}$, which decides whether $F_2$
can be refined to an AAF of $T_1$ and $T_2$ with at most $k + 1$ components,
and return its answer.
We use $\aalg{F_1, F_2, k''}$ to denote an invocation of this modified
MAF algorithm.
We refer to the part of the algorithm consisting of these invocations
$\aalg{F_1, F_2, k''}$ as the \emph{branching phase} of the algorithm and
to the part that consists of the invocations $\balg{F_2, k}$ as the
\emph{refinement phase}.
We also refer to a single invocation $\balg{F_2, k}$ as a \emph{refinement
  step}.  Note that this is not a linear process---our
algorithm performs a refinement step for each agreement forest it finds
and thus cycles between the branching phase and refinement phase.

Now let us call an invocation $\aalg{F_1, F_2, k''}$ \emph{viable} if there
exists an MAAF $F$ of $T_1$ and $T_2$ that is a forest of $F_2$.
Below we show how to ensure that there exists a viable invocation
$\aalg{F_1, F_2, k''}$ such that $F_2$ is an (not necessarily maximum)
AF of $T_1$ and $T_2$ if $\dhyb{T_1, T_2} \le k$.
The invocation $\balg{F_2, k}$ made by $\aalg{F_1, F_2, k''}$ returns ``yes'',
so the whole algorithm returns ``yes'' in this case.
If on the other hand $\dhyb{T_1, T_2} > k$, the algorithm either fails
to find an AF of $T_1$ and $T_2$ with at most $k + 1$ components or
none of the AFs it finds can be refined to an AAF with at most $k + 1$
components.
Thus, it returns ``no'' in this case.
In either case, the algorithm produces the correct answer.

So assume $\dhyb{T_1, T_2} \le k$.
We prove that every viable invocation $\aalg{F_1, F_2,\break k''}$ such that
$F_2$ is not an AF of $T_1$ and $T_2$ has a viable child invocation.
This immediately implies that there exists a viable invocation
$\aalg{F_1, F_2, k''}$ such that $F_2$ is an AF of $T_1$ and $T_2$ because
the top-level invocation $\aalg{T_1, T_2, k}$ is trivially viable and
the number of invocations the algorithm makes is finite.
If $F_2$ is not an AF of $T_1$ and $T_2$ in a viable invocation
$\aalg{F_1, F_2, k''}$, this invocation applies one of
Cases~\ref{case:spr:sc}--\ref{case:spr:cab}.
If it applies Case~\ref{case:spr:sc} or~\ref{case:spr:cab},
Lemmas~\ref{lem:spr:sc} and~\ref{lem:spr:cab} show that one of its child
invocations is viable.
In Case~\ref{case:spr:cob}, on the other hand, the child invocation
$\aalg{F_1, F_2 \div \set{\edge{b}}, k''-1}$ is not guaranteed to be viable.
The next lemma shows that either $\aalg{F_1, F_2 \div \set{\edge{b}}, k''-1}$
or $\aalg{F_1, F_2 \div \set{\edge{c}}, k''-1}$ is a viable invocation in this
case.
Thus, we modify the algorithm to make two invocations
$\aalg{F_1, F_2 \div \set{\edge{b}}, k''-1}$
and $\aalg{F_1, F_2 \div \set{\edge{c}}, k''-1}$ in Case~\ref{case:spr:cob}.
Even with two recursive calls made in Case~\ref{case:spr:cob},
the recurrence bounding the number of recursive calls
made by the algorithm in the proof of Lemma~\ref{lem:invocations} remains
dominated by Case~\ref{case:spr:cab}.
Thus, the algorithm continues to make $\OhL{2.42^k}$ recursive calls.

\begin{lemma}[Case~\ref{case:spr:cob}---One Pendant Node---MAAF]
  \label{lem:hyb:cob}
  If $(a, c)$ is a sibling pair of~$F_1$, $a \reach[F_2] c$, and
  the path from $a$ to $c$ in $F_2$ has only one pendant node $b$, then there
  exists an edge set $E$ of size $\aecut{T_1, T_2, F_2}$ and such that $F_2 \div
  E$ is an AAF of $T_1$ and $T_2$ and
  $E \cap \set{\edge{b}, \edge{c}} \ne \emptyset$.
\end{lemma}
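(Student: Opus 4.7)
The plan is to mirror the structure of Lemma~\ref{lem:spr:cob} while also tracking acyclicity, using the freedom that the conclusion allows $\edge{b}$ or $\edge{c}$. I would pick an edge set $E$ of size $\aecut{T_1, T_2, F_2}$ such that $F_2 \div E$ is an AAF of $F_1$ and $F_2$ and $E$ contains the maximum number of edges from $\set{\edge{a}, \edge{b}, \edge{c}}$ among all such sets. By the AAF claim in Lemma~\ref{lem:spr:3-way}, $E \cap \set{\edge{a}, \edge{b}, \edge{c}} \ne \emptyset$; if the lemma fails, I can assume $\edge{a} \in E$ and $\edge{b}, \edge{c} \notin E$ and aim to exhibit an alternative MAAF containing $\edge{b}$ or $\edge{c}$, contradicting the maximality of $E$.

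Write $v = \parent{a} = \parent{b}$ and $u = \parent{v} = \parent{c}$ in $F_2$. First I would dispose of the subcase $\edge{v} \in E$: cutting $\set{\edge{v}, \edge{a}}$ and cutting $\set{\edge{v}, \edge{b}}$ both isolate $v$ with a single remaining child, so after suppressing $v$ both produce the same pair of components $F_2^a$ and $F_2^b$. Consequently $F_2 \div E$ equals $F_2 \div (E \setminus \set{\edge{a}} \cup \set{\edge{b}})$, so the swapped set is an MAAF of the same size containing $\edge{b}$, contradicting the choice of $E$.

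Assuming $\edge{v} \notin E$, I would then examine the two candidate replacements $E_b = E \setminus \set{\edge{a}} \cup \set{\edge{b}}$ and $E_c = E \setminus \set{\edge{a}} \cup \set{\edge{c}}$, both of size $|E|$, and show that at least one of $F_2 \div E_b$ and $F_2 \div E_c$ is an AAF. The AF property follows by transporting the argument of Lemma~\ref{lem:spr:cob}: the main component of $F_2 \div E_b$ replaces the $b, c$-sibling pair by the $a, c$-sibling pair already present in $F_1$, so overlaps and incompatible triples in $F_1$ can be eliminated, if necessary by also absorbing $\edge{u}$ into the cut as in that lemma. The main obstacle will be controlling acyclicity: the swap moves the labels of $F_2^a$ into the main component and expels $F_2^b$, which shifts the images $\phi_1, \phi_2$ of the affected roots and can introduce new edges into the cycle graph. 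The key step I expect to prove is a dichotomy: any cycle newly present in $\cyc{F_2 \div E_b}$ must pass through the new isolated component $F_2^b$ and close via an ancestor edge of the main component that descends past $u$ into $T_2$, and this same descent is severed in $\cyc{F_2 \div E_c}$ because cutting $\edge{c}$ detaches $F_2^c$ from the main component. A symmetric argument for cycles in $\cyc{F_2 \div E_c}$ then shows that at least one of $E_b$, $E_c$ is an MAAF containing $\edge{b}$ or $\edge{c}$, producing the required contradiction.
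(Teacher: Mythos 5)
Your setup matches the paper's: choose an optimal $E$ maximizing $\size{E \cap \set{\edge{a}, \edge{b}, \edge{c}}}$, invoke the AAF half of Lemma~\ref{lem:spr:3-way}, reduce to $\edge{a} \in E$ and $\edge{b}, \edge{c}, \edge{v} \notin E$, and then try the swap $E \setminus \set{\edge{a}} \cup \set{\edge{b}}$. The disposal of the subcase $\edge{v} \in E$ is fine, and the observation that the AF property of the swapped set follows from the machinery of Lemma~\ref{lem:spr:cob} is essentially what the paper does (though note your fallback of ``absorbing $\edge{u}$'' is unavailable here---it would increase $\size{E}$ since only one edge of $\set{\edge{a}, \edge{c}, \edge{v}}$ lies in $E$; fortunately it is also unnecessary, because the case analysis in Lemma~\ref{lem:spr:cob} shows the replacement is already an AF precisely when $\size{E \cap \set{\edge{a}, \edge{c}, \edge{v}}} = 1$).

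The genuine gap is the acyclicity argument, which is the entire substance of this lemma and which you replace with an unproven and, I believe, incorrectly aimed plan. First, your claim that any new cycle in $\cyc{F_2 \div E_b}$ ``must pass through the new isolated component $F_2^b$'' is false: cutting $\edge{b}$ also changes the set of labelled descendants of the component containing $u$ (it loses the leaves of $F_2^b$ and, via the swap, gains those of $F_2^a$), so the image $\phi_1$ of that component's root moves, and new cycles can arise through this component without touching the severed piece of $F_2^b$ at all. The paper's proof is devoted almost entirely to this two-pronged case analysis (the components it calls $C_u$ and $C_z$), supported by five structural observations about how $\phi_1(u)$, $\phi_1'(u)$, and $\phi_1(z)$ relate, and it concludes by converting any hypothetical cycle in $\cyc{F_2 \div E_b}$ into a cycle in $\cyc{F_2 \div E}$---no second candidate set is needed. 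Second, your proposed dichotomy leans on $E_c = E \setminus \set{\edge{a}} \cup \set{\edge{c}}$ as a fallback, but $F_2 \div E_c$ need not even be an agreement forest: with $\edge{a}, \edge{b}, \edge{v} \notin E_c$, its main component keeps $a$ and $b$ as siblings while $(a,c)$ is the sibling pair of $F_1$, which generically produces a triple incompatible with $F_1$; the argument of Lemma~\ref{lem:spr:cob} does not transport to this replacement. So the dichotomy is both unestablished and built on a candidate that fails for reasons unrelated to cycles. As written, the proposal proves the lemma only up to the point where the real work begins.
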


\begin{proof}
  Let $E'$ be an edge set of size $\aecut{T_1, T_2, F_2}$ and such
  that $F_2 \div E'$ is an AAF of $T_1$ and $T_2$.
  Assume further that there is no such set containing more edges from
  $\set{\edge{a}, \edge{b}, \edge{c}}$ than $E'$ and that $b$ is $a$'s sibling
  in $F_2$.
  By Lemma~\ref{lem:spr:3-way}, $E' \cap \set{\edge{a}, \edge{b}, \edge{c}} \ne
  \emptyset$.
  If $E' \cap \set{\edge{b}, \edge{c}} \ne \emptyset$, we are done.
  So assume $E' \cap \set{\edge{b}, \edge{c}} = \emptyset$ and, hence,
  $\edge{a} \in E'$.
  As in the proof of Lemma~\ref{lem:spr:cob}, let $v
  = \parent{a} = \parent{b}$ and $u = \parent{c} = \parent{v}$.
  If $\set{\edge{a}, \edge{v}} \subseteq E'$, Lemma~\ref{lem:edge-shift}
  implies that we can replace $\edge{v}$ with $\edge{b}$ in $E'$
  without changing $F_2 \div E'$.
  This contradicts the choice of $E'$, so $\edge{v} \notin E'$.
  As in the proof of Lemma~\ref{lem:spr:3-way}, the choice of $E'$ and
  Lemma~\ref{lem:edge-shift} imply that there exist leaves
  $b' \in F_2^b$ and $c' \in F_2^c$ such that $b' \reach[F_2 - E'] b$
  and $c' \reach[F_2 - E'] c$ because $E' \cap \set{\edge{b}, \edge{c}} =
  \emptyset$.
  Now let $E := E' \setminus \set{\edge{a}} \cup \set{\edge{b}}$.
  We have $\size{E} = \size{E'} = \aecut{T_1, T_2, F_2}$ and $\edge{b} \in E$.
  Moreover, since $E' \cap \set{\edge{a}, \edge{c}, \edge{v}} = \set{\edge{a}}$,
  the proof of Lemma~\ref{lem:spr:cob} shows that $F_2 \div E$ is an AF of
  $T_1$ and $T_2$.
  Next we show that $F_2 \div E$ is acyclic.

  Since $F_2 \div E$ and $F_2 \div E'$ are agreement forests of $T_1$
  and $T_2$, the mapping $\phi_1(\cdot)$ maps each node of these two
  forests to a corresponding node in $T_1$.
  However, a node $x \in F_2$ that belongs to both $F_2 \div E$ and
  $F_2 \div E'$ may map to different nodes in $T_1$ if it has different
  sets of labelled descendant leaves in $F_2 \div E$ and $F_2 \div E'$.
  For the remainder of this proof, we use $\phi_1(x)$ to denote
  the node in $T_1$ a node $x \in F_2$ maps to based on its labelled
  descendant leaves in $F_2 - E$, and $\phi_1'(x)$ to denote
  the node it maps to based on its labelled descendant leaves in $F_2 - E'$.

  Now assume for the sake of contradiction that $F_2 \div E$
  is not acyclic, and let $O$ be a cycle of $G_{F_2 \div E}$.
  We assume $O$ is as short as possible, which implies in particular that
  $O$ contains every component of $F_2 \div E$ at most once
  and that for any three consecutive components $C_i$, $C_{i+1}$, and
  $C_{i+2}$ in $O$ either $C_i$ is an ancestor of $C_{i+1}$ in $T_1$
  and $C_{i+1}$ is an ancestor of $C_{i+2}$ in $T_2$ or vice versa.
  Since $F_2 \div E'$ is acyclic, the root $r$ of at least one component
  in $O$ either is not a root in $F_2 \div E'$ or satisfies $\phi_1(r) \ne
  \phi_1'(r)$.
  The only root in $F_2 \div E$ that does not exist
  in $F_2 \div E'$ is a result of cutting edge $\edge{b}$ and is a
  descendant $z$ of $b$ in $F_2$.
  Let $C_z$ be the component of $F_2 \div E$ with root $z$.
  The only root in $F_2 \div E'$ that has a different set of labelled descendant
  leaves in $F_2 \div E$ is the root $u'$ of the component $C_u$ that contains
  $u$, and $\phi_1(u') \ne \phi_1'(u')$ only if $u' = u$.
  For any other component root~$x$, we have $\phi_1(x) = \phi_1'(x)$.
  Thus, any cycle $O$ in $G_{F_2 \div E}$ contains at least one of $C_u$ and
  $C_z$.
  Next we prove that no such cycle exists in $G_{F_2 \div E}$,
  by using the following five observations.
  \begin{enumerate}[label=(\roman{*}),leftmargin=0pt,itemindent=42pt]
  \item Since $u \reach[F_2 \div E'] z$ and $z$ is the only root of $F_2 \div E$
    that does not exist in $F_2 \div E'$, there is no root $x \notin \set{u,z}$
    of $F_2 \div E$ on the path from $u$ to $z$ in
    $T_2$.\label{item:T2-uz-no-root}
  \item Since $u \reach[F_2 \div E'] z$ and $z \in F_2^u$, we have
    $\phi_1'(z) \in T_1^{\phi_1'(u)}$.
    Any component $C_x$ with root $x$ such that $x \notin \set{u,z}$ satisfies
    $\phi_1'(x) = \phi_1(x)$.
    If $\phi_1'(x)$ belonged to the path from $\phi_1'(u)$ to $\phi_1'(z)$,
    then $C_x$ would overlap the component of $F_2 \div E'$ containing $u$
    in~$T_1$.
    Since $F_2 \div E'$ is a forest of $T_1$,
    no such component $C_x$ can exist.\label{item:T1-uz-no-root}
  \item Since $u \reach[F_2 \div E'] c'$, we have $c' \in T_1^{\phi_1'(u)}$ and,
    by the same arguments as in~\ref{item:T1-uz-no-root}, there is no root
    $x \notin \set{u,z}$ such that
    $\phi_1'(x) = \phi_1(x)$ belongs to the path from $c'$ to
    $\phi_1'(u)$ in $T_1$.\label{item:T1-uc-no-root}
  \item Since all labelled descendants of $u$ in $F_2 \div E$ belong to
    $F_2^a \cup F_2^c$, with at least one descendant in each of
    $F_2^a$ and $F_2^c$, we have $\phi_1(u) = \parent{a} = \parent{c}$.
    In particular, $c' \in T_1^{\phi_1(u)}$.
    Since $u$ has $c'$ and at least one labelled leaf in $F_2^b$
    as descendants in $F_2 \div E'$, $\phi_1'(u)$ is a proper ancestor
    of~$\phi_1(u)$.\label{item:u-uprime-ancestor}
  \item $\phi_1'(z) = \phi_1(z)$ is neither an ancestor nor a descendant
    of $\phi_1(u)$.
    The latter follows because $z$ has a labelled descendant
    leaf in $F_2 \div E$ that belongs to $F_2^b$, while all labelled
    descendant leaves of $\phi_1(u)$ belong to $F_2^a \cup F_2^c$.
    To see the former, observe that this would imply that $\phi_1'(z)$ is not a
    leaf and, hence, that there are two
    labelled descendant leaves $b_1$ and $b_2$ of $z$ in $F_2 \div E'$ such
    that $b_1, b_2 \in F_2^b$ and
    the path from $b_1$ to $b_2$ in $T_1$ includes $\phi_1'(z)$.
    Since $u \reach[F_2 \div E'] c'$ and $u \reach[F_2 \div E'] z$,
    this would imply that $F_2 \div E'$ contains the triple $b_1 b_2 | c'$,
    while these leaves would form the triple $b_1 c'|b_2$ or $b_2 c'|b_1$
    in~$T_1$.
    This is a contradiction because $F_2 \div E'$ is a forest
    of $T_1$.\label{item:u-z-no-ancestor}
  \end{enumerate}

  We now consider the different possible shapes of $O$.
  We use $C_{x_1}$ and $C_{x_2}$ to denote $C_u$'s predecessor and successor
  in $O$, respectively, and $C_{y_1}$ and $C_{y_2}$ to denote $C_z$'s
  predecessor and successor in $O$, respectively.
  First observe that $y_2 \ne u$ and, hence, $x_1 \ne z$.
  Indeed, $z \in F_2^u$, which implies that $y_2 = u$ only if
  $\phi_1(z)$ is an ancestor of~$\phi_1(u)$.
  By~\ref{item:u-z-no-ancestor}, this is impossible.

  If $y_1 = u$ (and $y_2 \ne u$), then $\phi_1(z) = \phi_1'(z)$ is an ancestor
  of $\phi_1(y_2) = \phi_1'(y_2)$ because, by~\ref{item:u-z-no-ancestor},
  $\phi_1(u)$ is not an ancestor of $\phi_1(z)$ and the edges in $O$ alternate
  between $T_1$ and $T_2$.
  By~\ref{item:T1-uz-no-root}, this implies that $\phi_1'(u)$ is an ancestor
  of $\phi_1'(y_2)$ in $T_1$.
  Also, for the predecessor $C_{x_1}$ of $C_u$ in $O$,
  $\phi_1'(x_1) = \phi_1(x_1)$ is an ancestor of $\phi_1(u)$ and, hence,
  by \ref{item:T1-uc-no-root} and~\ref{item:u-uprime-ancestor},
  an ancestor of $\phi_1'(u)$.
  This implies that we would obtain a cycle in
  $G_{F_2 \div E'}$ by removing $C_z$ from $O$, which contradicts
  that $F_2 \div E'$ is acyclic.
  This shows that $y_1 \ne u$.

  It remains to consider the case when $C_u$ and $C_z$ are not adjacent in $O$.
  In this case, all edges of $O$ except those incident to $C_u$ or $C_z$ exist
  also in $G_{F_2 \div E'}$ because $\phi_1'(x) = \phi_1(x)$, for every root
  $x \notin \set{u, z}$.
  Next we show that, if $C_u \in O$, then the edges $(C_{x_1}, C_u)$ and
  $(C_u, C_{x_2})$ also exist in $G_{F_2 \div E'}$, and if $C_z \in O$,
  then the edges $(C_{y_1}, C_u)$ and $(C_u, C_{y_2})$ exist in
  $G_{F_2 \div E'}$.
  Thus, by replacing $C_z$ with $C_u$ in $O$ (if $C_z \in O$), we obtain a cycle
  in $G_{F_2 \div E'}$, a contradiction because $F_2 \div E'$ is acyclic.

  If $C_u \in O$, then either $\phi_1(x_1) = \phi_1'(x_1)$ is an ancestor of
  $\phi_1(u)$ and $x_2$ is a descendant of $u$, or $x_1$ is an ancestor of $u$
  and $\phi_1(x_2) = \phi_1'(x_2)$ is a descendant of~$\phi_1(u)$.
  In the former case, \ref{item:T1-uc-no-root} and~\ref{item:u-uprime-ancestor}
  imply that $\phi_1'(x_1)$ is an ancestor of $\phi_1'(u)$.
  In the latter case, \ref{item:u-uprime-ancestor}
  implies that $\phi_1'(x_2)$ is also a descendant of $\phi_1'(u)$.
  In both cases, the edges $(C_{x_1}, C_u)$ and $(C_u, C_{x_2})$ exist
  in $G_{F_2 \div E'}$.

  If $C_z \in O$, then either $\phi_1(y_1) = \phi_1'(y_1)$ is an ancestor
  of $\phi_1(z) = \phi_1'(z)$ and $y_2$ is a descendant of $z$, or
  $y_1$ is an ancestor of $z$ and $\phi_1(y_2) = \phi_1'(y_2)$ is a descendant
  of $\phi_1(z) = \phi_1'(z)$.
  In the former case, \ref{item:T1-uz-no-root} implies that $\phi_1'(y_1)$ is
  an ancestor of $\phi_1'(u)$ and $y_2$ is a descendant of $u$.
  In the latter case, \ref{item:T2-uz-no-root}
  and~\ref{item:T1-uz-no-root} imply that $y_1$ is an ancestor of $u$
  and $\phi_1'(y_2)$ is a descendant of $\phi_1'(u)$.
  In both cases, the edges $(C_{y_1}, C_u)$ and $(C_u, C_{y_2})$ exist in
  $G_{F_2 \div E'}$.

  We have shown how to construct a corresponding cycle in $G_{F_2 \div E'}$
  for every cycle $O \in G_{F_2 \div E}$.
  Since $F_2 \div E'$ is acyclic, this shows that $F_2 \div E$ is acyclic.\qquad
\end{proof}

We have thus shown that the branching phase of our algorithm will find
at least one (not necessarily maximal) AF $F$ that can be refined to an MAAF.

In the remainder of this section, we develop an efficient implementation of
$\balg{F, k}$.
To do so, we need several new ideas.
Each of the following sections discusses one of them.
The tools introduced in \S\ref{sec:hyb_graph}--\S\ref{sec:naive_refinement}
suffice to obtain a fairly simple implementation of $\balg{F, k}$ that leads to
an MAAF algorithm with running time $\OhL{9.68^k n}$.
\S\ref{sec:marking} and~\S\ref{sec:combinations} then introduce
two refinements that improve the algorithm's running time first to
$\OhL{4.84^k n}$ and then to $\OhL{3.18^k n}$.

In \S\ref{sec:hyb_graph}, we introduce an expanded cycle graph $\ecyc{F}$.
In $\ecyc{F}$, every node of $\cyc{F}$ is replaced with the component of $F$
it represents.
This allows us to identify exactly which edges in a component $C$ need to be cut
if we want to break a cycle in $\cyc{F}$ by removing $C$ from this cycle.
Moreover, if $F$ has $k'+1$ components, $\ecyc{F}$ contains only $2k'$ of the
edges of $\cyc{F}$.
This ensures that $\ecyc{F}$ has size $\Oh{n}$, which is the key to keeping
the MAAF algorithm's dependence on $n$ linear.

In \S\ref{sec:cycles}, we identify components of $F$
that are \emph{essential} for the cycles in $\ecyc{F}$ in the sense that at
least one essential component of each cycle $O$ in $\ecyc{F}$ has to be
eliminated to break $O$ (as opposed to replacing it with a shorter cycle).
For every essential component $C$ in such a cycle $O$, we identify one node in
$C$, called an \emph{exit node}, and show that there exists a component $C$ in
$O$ such that cutting all edges on the path from $C$'s exit node to $C$'s root
reduces $\aecut{T_1, T_2, F}$ by the number of edges cut.
We call the process of cutting these edges \emph{fixing} the exit node.

In \S\ref{sec:naive_refinement}, we show how to mark a subset of at most
$2k$ nodes in $F$ such that, if $F$ can be refined to an AAF of $T_1$ and $T_2$
with at most $k+1$ components, then fixing an appropriate subset of these
marked nodes produces such an AAF.
We call these marked nodes \emph{potential exit nodes} because they include
the exit nodes of all essential components of all cycles in $\ecyc{F}$.
We obtain a first simple implementation of $\balg{F, k}$ by testing for each
subset of potential exit nodes whether fixing it produces an AAF with at most
$k+1$ components.
Since this test can be carried out in linear time for each subset and there are
$2^{2k} = 4^k$ subsets to test, the running time of this implementation of
$\balg{F, k}$ is $\OhL{4^k n}$.
Since we make at most one invocation $\balg{F_2, k}$ per invocation
$\aalg{F_1, F_2, k''}$ of the MAAF algorithm and the MAAF
algorithm makes $\OhL{2.42^k}$ invocations $\aalg{F_1, F_2, k''}$, the resulting
MAAF algorithm has running time
$\OhL{2.42^k\parensL{n + 4^k n}} = \OhL{9.68^k n}$.

The bound of $2k$ on the number of potential exit nodes is obtained quite
naturally:
We can obtain $F$ from both $T_1$ and $T_2$ by cutting the edges
connecting the roots of the components of $F$ to their parents in these
trees.
There are at most $k$ component roots of $F$ that are not roots in $T_2$.
Each such component has two corresponding parent edges, one in $T_1$ and one in
$T_2$.
The potential exit nodes are essentially the top endpoints of these at most
$2k$ parent edges, and the top endpoints of the two parent edges of each
component root form a pair of potential exit nodes.
In \S\ref{sec:marking}, we augment the search for agreement forests
to annotate the component roots of each found
agreement forest $F$ with information about how $F$ was obtained from $T_2$.
Using this information, we mark one potential exit node in each pair of
potential exit nodes and show that it suffices to test for each subset of
marked potential exit nodes whether fixing it produces an AAF with at most
$k + 1$ components.
Since at most $k$ potential exit nodes get marked,
this reduces the cost of $\balg{F, k}$ to $\OhL{2^k n}$ and, hence, the running
time of the MAAF algorithm to $\OhL{4.84^k n}$.

In \S\ref{sec:combinations}, we tighten the analysis of our algorithm.
So far, we allowed both phases of the algorithm to cut $k$ edges.
However, $k$~is the \emph{total} number of edges we are allowed to cut.
Thus, if the number $k'$ of edges we cut to obtain an AF is large, there
are only $k'' := k - k'$ edges left to cut in the refinement step, allowing
us to restrict our attention to small subsets of marked potential exit nodes
and thereby reducing the cost of the refinement step substantially.
If, on the other hand, $k'$ is small, then there are only few marked potential
exit nodes and even trying all possible subsets of these nodes is not too
costly.
By analyzing this trade-off between the number of edges cut in each phase
of the algorithm, we obtain the claimed running time of $\OhL{3.18^k n}$.

\subsection{An Expanded Cycle Graph}

\label{sec:hyb_graph}

The expanded cycle graph $\ecyc{F}$ of an agreement forest $F$ of two rooted
phylogenies $T_1$ and $T_2$ is a supergraph $\ecyc{F} \supset F$ with the
same vertex set as $F$; see Figure~\ref{fig:g_f_new}.
Let $E_1$ and $E_2$ be minimal subsets of edges of $T_1$ and $T_2$ such that
$F = T_1 \div E_1 = T_2 \div E_2$.
In addition to the edges of~$F$, $\ecyc{F}$~contains one \emph{hybrid edge} per
edge in $E_1 \cup E_2$.
To define these edges, we define mappings from nodes of $F$ to nodes of $T_1$
and $T_2$ and vice versa.
As in the definition of the original cycle graph $\cyc{F}$ in
\S\ref{sec:prelim}, we map each node $x$
in $F$ to nodes $\phi_1(x)$ in $T_1$ and $\phi_2(x)$ in $T_2$
such that $\phi_i(x)$ is the lowest common ancestor of all labelled leaves in
$T_i$ that are descendants of $x$ in $F$.
For the reverse direction, we define a function $\phi_i^{-1}(\cdot)$ mapping
nodes in $T_i$ to nodes in $F$;
$\phi_i^{-1}(x)$ is defined if and only if $x$ is labelled or
belongs to the path between two
labelled nodes $a$ and $b$ in $T_i$ such that $a \reach[F] b$.
In this case, $\phi_i^{-1}(x)$ is the node in $F$ that is the
lowest common ancestor of all labelled leaves $y$ in $T_i^x$ such that the path
between $x$ and $y$ does not contain any edges in $E_i$.
These mappings are well defined
in the sense that $\phi^{-1}_i(\phi_i(x)) = x$, for all $x \in F$ and $i
\in \set{1,2}$.

The hybrid edges in $\ecyc{F}$ are now defined as follows.
There are two such edges per root node $y$ of~$F$, except $\rho$, one
induced by $T_1$ and one induced by $T_2$.
Let $z_i$ be the lowest ancestor of $\phi_i(y)$ in $T_i$ such that
$\phi^{-1}_i(z_i)$ is defined.
Then $\parensL{\phi^{-1}_1(z_1),y}$ is a \emph{$T_1$-hybrid edge}
and $\parensL{\phi^{-1}_2(z_2),y}$ is a \emph{$T_2$-hybrid edge}.
See Figure~\ref{fig:g_f_new} for an illustration of these edges.
Note that neither $\phi_1^{-1}(z_1)$ nor $\phi_2^{-1}(z_2)$ is a root of $F$.
Our first lemma shows that the forest $F$ is an AAF of $T_1$ and $T_2$
if and only if $\ecyc{F}$ contains no cycles, that is, we can use
$\ecyc{F}$ in place of $\cyc{F}$ to test whether $F$ is acyclic.

\begin{figure}[t]
  \centering
  \subfigure[\unskip\label{fig:g_f_trees}]{\includegraphics{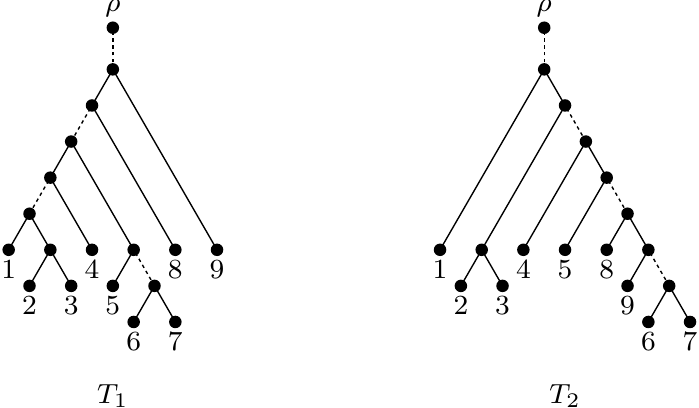}}
  \subfigure[\unskip\label{fig:g_f_old}]{\includegraphics{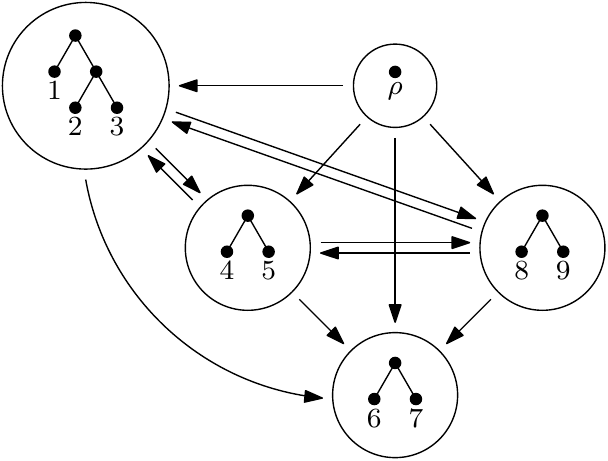}}%
  \hspace*{\stretch{1}}%
  \subfigure[\unskip\label{fig:g_f_new}]{\includegraphics{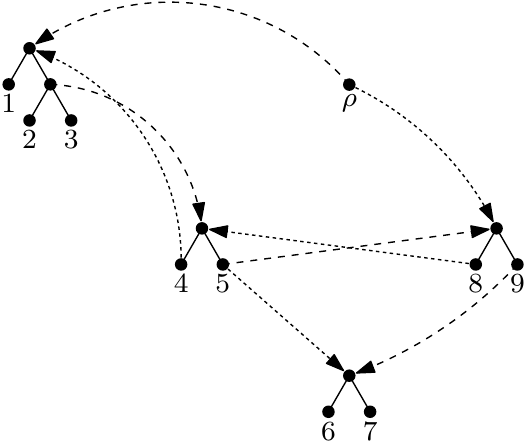}}%
  \caption{(a) Two trees $T_1$ and $T_2$.
    (b) An agreement forest $F$ of $T_1$ and $T_2$ obtained by cutting the
    dotted edges in $T_1$ and $T_2$, and its cycle graph $\cyc{F}$.
    The component of $F$ represented by each node of $\cyc{F}$ is drawn inside
    the node.
    (c) The expanded cycle graph $\ecyc{F}$.
    Dotted edges are $T_1$-hybrid edges, dashed ones are $T_2$-hybrid edges.}
  \label{fig:g_f}
\end{figure}

\begin{lemma}
  \label{lem:cyc=ecyc}
  $\ecyc{F}$ contains a cycle if and only if $\cyc{F}$ does.
\end{lemma}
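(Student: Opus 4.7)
The plan is to establish both implications via a projection $\pi\colon V(\ecyc{F})\to V(\cyc{F})$ that sends each node of $F$ to the component of $F$ containing it.

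For the forward implication, I would verify that (i) every tree edge of $\ecyc{F}$ stays within one component of $F$ (since the tree edges of $\ecyc{F}$ are precisely the edges of $F$), and (ii) every hybrid edge of $\ecyc{F}$ projects under $\pi$ to an actual edge of $\cyc{F}$. For (ii): a $T_i$-hybrid edge has the form $(u, r_j)$ with $u = \phi_i^{-1}(z_i)$ and $z_i$ a proper ancestor of $\phi_i(r_j)$ in $T_i$. Set $C_x := \pi(u)$. Since $u$ is a descendant of $r_x$ in $F$, $\phi_i(r_x)$ is an ancestor in $T_i$ of $\phi_i(u) = z_i$, hence a proper ancestor of $\phi_i(r_j)$. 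Thus $(C_x, C_j)$ is an edge of $\cyc{F}$. Because $F$ itself is a forest, any directed cycle in $\ecyc{F}$ must traverse at least one hybrid edge, so its $\pi$-image is a closed walk in $\cyc{F}$ containing at least one real edge, and therefore contains a directed cycle.

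For the backward implication I would argue the contrapositive. Assume $\ecyc{F}$ is acyclic, fix a topological order $\sigma$ of $V(\ecyc{F})$, and define $\sigma'(C) := \sigma(r_C)$ on vertices of $\cyc{F}$. I would show $\sigma'$ is a topological order of $\cyc{F}$, which forces $\cyc{F}$ to be acyclic. Consider an edge $(C_x, C_y)$ of $\cyc{F}$ coming, without loss of generality, from $T_i$; let $z_i$ be the lowest ancestor of $\phi_i(r_y)$ in $T_i$ with preimage, set $u := \phi_i^{-1}(z_i)$, and let $C_z := \pi(u)$. The tree path in $C_z$ from $r_z$ down to $u$, followed by the hybrid edge $(u, r_y)$, yields $\sigma(r_z) \le \sigma(u) < \sigma(r_y)$ (with the first inequality strict because $u$ is not a root of $F$, so $u \ne r_z$), hence $\sigma'(C_z) < \sigma'(C_y)$. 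It remains to show $\sigma'(C_x) \le \sigma'(C_z)$, which I would establish by induction on the number $d(C_x, C_y)$ of components $C_k$ whose root-image $\phi_i(r_k)$ is a proper descendant of $\phi_i(r_x)$ and a proper ancestor of $\phi_i(r_y)$ in $T_i$.

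The crux, and the step I expect to be the main obstacle, is the structural claim that $\phi_i(r_z)$ is always a descendant of $\phi_i(r_x)$ in $T_i$. Granting it, either $C_z = C_x$ (which also handles the base case $d = 0$, since then $\sigma'(C_x) = \sigma'(C_z)$), or $\phi_i(r_z)$ is a proper descendant of $\phi_i(r_x)$, in which case $(C_x, C_z)$ is itself an edge of $\cyc{F}$ via $T_i$ with $d(C_x, C_z) < d(C_x, C_y)$ (because $C_z$ is counted in the latter but not in the former), and the induction hypothesis gives $\sigma'(C_x) < \sigma'(C_z)$. To prove the structural claim, I would use that the components of $F$, viewed in $T_i$, are exactly the maximal connected pieces of $T_i - E_i$ (for the set $E_i$ of edges of $T_i$ cut to obtain $F$), and hence are pairwise-disjoint subtrees of $T_i$. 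The node $z_i$ lies in $C_z$'s piece and is a descendant of $\phi_i(r_x)$ by the minimality in the choice of $z_i$ (since $\phi_i(r_x)$ is itself an ancestor of $\phi_i(r_y)$ with preimage $r_x$). If $\phi_i(r_z)$ were a proper ancestor of $\phi_i(r_x)$, the unique $T_i$-path from $\phi_i(r_z)$ down to $z_i$ would pass through $\phi_i(r_x)$; but $\phi_i(r_x) \in C_x$ while the entire path must lie inside $C_z$'s subtree, contradicting the disjointness of $C_x$ and $C_z$.
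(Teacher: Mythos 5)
Your proof is correct and follows essentially the same route as the paper's: your forward direction is the paper's observation that $\ecyc{F}$ is obtained from $\cyc{F}$ by keeping a subset of its edges and blowing up each vertex into an acyclic component, and your backward direction supplies the induction behind the paper's one-line claim that every edge of $\cyc{F}$ can be replaced by a directed path in $\ecyc{F}$. One small imprecision: $\phi_i(u)$ need not equal $z_i$ (it can be a proper descendant of $z_i$ when $z_i$ is a node that gets suppressed in forming $F$), but the conclusion that $(C_x,C_j)$ is an edge of $\cyc{F}$ still follows because $z_i$ lies on a $T_i$-path between two labelled leaves of $C_x$ and is therefore a descendant of $\phi_i(r_x)$ --- the same disjoint-pieces argument you already deploy for the structural claim in the other direction.
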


\begin{proof}
  First observe that $\ecyc{F}$ can be obtained from $\cyc{F}$ by
  choosing a subset of the edges of $\cyc{F}$ and then replacing each
  vertex of $\cyc{F}$ with a component of $F$.
  Since the components of $F$ do not contain cycles, this shows that
  $\ecyc{F}$ is acyclic if $\cyc{F}$ is.

  Conversely, for two nodes $u$ and $v$ of $F$, $\ecyc{F}$ contains
  a path from $u$ to $v$ if $\phi_1(u)$ is an ancestor of $\phi_1(v)$
  or $\phi_2(u)$ is an ancestor of $\phi_2(v)$.
  Along with the fact that tree edges are directed away from the root of
  their component, this implies that every edge in $\cyc{F}$ can be
  replaced by a directed path in $\ecyc{F}$, so that $\ecyc{F}$ contains
  a cycle if $\cyc{F}$ does.\qquad
\end{proof}

In the remainder of this subsection, we show that $\ecyc{F}$ can be
constructed in linear time from $T_1$, $T_2$, and $F$, a fact we use
in our algorithms in \S\ref{sec:naive_refinement}, \S\ref{sec:marking},
and~\S\ref{sec:combinations}.

\begin{lemma}
  \label{lem:build-ecyc}
  The expanded cycle graph $\ecyc{F}$ of an agreement forest $F$
  of two rooted phylogenies $T_1$ and $T_2$ can be computed in linear time.
\end{lemma}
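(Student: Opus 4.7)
The plan is first to observe that $\ecyc{F}$ has size $\Oh{n}$ and then to construct it in linear time through a constant number of tree traversals. The vertex set of $\ecyc{F}$ is $V(F)$ with $\Oh{n}$ nodes, and its edge set consists of the $\Oh{n}$ edges of $F$ plus at most two hybrid edges per non-$\rho$ component root of $F$, for $\Oh{n}$ edges in total. Initializing $\ecyc{F}$ with the vertices and edges of $F$ and enumerating the component roots of $F$ both take $\Oh{n}$ time, so it suffices to identify the two hybrid edges of each non-$\rho$ component root in amortized constant time.

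For each $i \in \set{1,2}$, I would precompute three pieces of auxiliary data in $\Oh{n}$ time. First, a bottom-up traversal of $T_i$ tags each node $z$ with either the unique component $C$ of $F$ such that all labelled leaves of $T_i^z$ lie in $X_C$, or with \emph{multiple} otherwise. From these tags I extract $\psi_i \colon V(T_i) \to \set{\text{components of } F} \cup \set{\bot}$: a node $z$ belongs to $T_i(X_C)$ exactly when $z \in X_C$ or both children of $z$ carry tag $C$, so I set $\psi_i(z) := C$ in that case and $\psi_i(z) := \bot$ otherwise. By Lemma~\ref{lem:forest-condition}, no node of $T_i$ belongs to two distinct subtrees $T_i(X_C)$, so $\psi_i$ is well defined and $\psi_i(z) = C$ iff $z \in T_i(X_C)$. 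Second, the topmost node of $T_i$ with $\psi_i = C$ is exactly $\phi_i(r_C)$, identified during the same pass. Third, during the same pass, for each $z$ with $\psi_i(z) = C$ I record a pointer to $\phi_i^{-1}(z) \in C$, which is the LCA in $C$ of the labelled leaves in $X_C \cap T_i^z$ and can be tracked by walking down $C$ in parallel with the traversal of $T_i(X_C)$.

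To identify the hybrid edges, I would perform one more DFS of $T_i$ that maintains a stack whose entries correspond to the components $C$ whose subtrees $T_i(X_C)$ are currently entered on the DFS path, each entry storing the topmost node of $T_i(X_C)$ on the path. Whenever the DFS first enters a node $z$ with $\psi_i(z) = C$ different from the component at the top of the stack, $z$ equals $\phi_i(r_C)$, the desired $z_i$ for $r_C$ is the node stored at the previous stack top, and I add the hybrid edge $\parensL{\phi_i^{-1}(z_i), r_C}$ to $\ecyc{F}$; I then push a new entry for $C$ recording $z$, and on exit from $z$ I pop. The stack is nonempty at such an entry because the DFS starts at the labelled root $\rho$, for which $\psi_i(\rho)$ is defined, and $r_C \ne \rho$. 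Since each DFS step does constant work, the whole construction runs in $\Oh{n}$ time. The main delicate point is this stack bookkeeping; correctness follows immediately from the observation that $z_i$ is by definition the lowest strict ancestor of $\phi_i(r_C)$ with $\psi_i$ defined, which is precisely the previous stack top when the DFS first enters $\phi_i(r_C)$.
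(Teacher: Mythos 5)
Your overall plan---one traversal per tree to compute $\phi_i^{-1}$ and then a pass that attaches each component root to its lowest ancestor at which $\phi_i^{-1}$ is defined---is the same strategy the paper uses, but two of your concrete steps are wrong, and both would produce incorrect hybrid edges.

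First, your membership rule for $\psi_i$ is incorrect. A node $z$ can lie in $T_i(X_C)$ (equivalently, $\phi_i^{-1}(z)$ can be defined) even though one of its children is tagged with a different component or with \emph{multiple}: this happens at every node of $T_i(X_C)$ where a subtree containing leaves of other components hangs off the path between two leaves of $X_C$, which is the typical situation in an agreement forest. Concretely, take $T_1 = T_2$ with labelled root $\rho$, $\rho$'s child $r$, $r$'s children $p$ and $3$, and $p$'s children $1$ and $2$, and let $F$ have components $\set{\rho}$, $\set{2}$, and the cherry on $\set{1,3}$ (cut $\edge{2}$ and $\edge{r}$). Both $p$ and $r$ lie on the path from $1$ to $3$, so $\phi_i^{-1}$ is defined at both, yet your rule assigns them $\bot$ because $2$'s tag differs from the cherry's and $p$ is tagged \emph{multiple}. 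Consequently the ``topmost node with $\psi_i = C$'' fails to identify $\phi_i(r_C)$ (the nodes with $\psi_i = C$ are the incomparable leaves $1$ and $3$), your DFS first meets this component at leaf $1$ rather than at $\phi_i(r_C) = r$, and the hybrid edge for the root $2$ gets tail $\phi_i^{-1}(\rho) = \rho$ instead of the correct tail $\phi_i^{-1}(p) = 1$. The correct test is whether some component $C$ has leaves in both child subtrees of $z$, or has a leaf below $z$ and another leaf outside $T_i^z$.

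Second, even with $\psi_i$ repaired, your stack entry records the \emph{topmost} node of $T_i(X_{C'})$ on the DFS path, whereas $z_i$ is defined as the \emph{lowest} ancestor of $\phi_i(r_C)$ at which $\phi_i^{-1}$ is defined. These differ whenever $\phi_i(r_C)$ sits in a subtree pendant to a non-root node $v$ of $T_i(X_{C'})$: the correct tail is $\phi_i^{-1}(v)$, an internal node of $C'$, while your entry yields $\phi_i^{-1}$ of the point where the path entered $T_i(X_{C'})$, typically the root $r_{C'}$. (The paper notes that the tail of a hybrid edge is never a component root, so this is a systematic error, not a corner case.) Your closing sentence asserts the previous stack top is ``precisely'' this lowest ancestor, but nothing in the bookkeeping makes that true; you would have to update the top entry to the most recently visited node of $T_i(X_{C'})$ and restore it on backtracking. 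Both problems disappear if, instead of a membership function plus a stack of entry points, you propagate upward from each node the list of component roots still awaiting a hybrid edge and discharge that list at the first ancestor where $\phi_i^{-1}$ becomes defined, which is what the paper's single postorder pass does via the lists $L_x$.
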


\begin{proof}
  Our construction of $\ecyc{F}$ starts with $F$ and then adds the hybrid edges.
  To add the hybrid edges induced by $T_1$, we perform a postorder traversal
  of $T_1$ that computes the mappings $\phi_1(\cdot)$ and $\phi_1^{-1}(\cdot)$,
  and the hybrid edges induced by $T_1$.
  A~similar postorder traversal of $T_2$ then computes $\phi_2(\cdot)$,
  $\phi_2^{-1}(\cdot)$, and the hybrid edges induced by $T_2$.

  We can assume each labelled node of $T_1$ or $T_2$
  stores a pointer to its counterpart in $F$ and vice versa.
  Thus, for each leaf $x$, $\phi_1(x)$, $\phi_2(x)$,
  $\phi^{-1}_1(x)$, and $\phi^{-1}_2(x)$ are given.
  In addition, we associate a list $L_x$ with each leaf $x$, where
  $L_x := \set{x}$ if $x$ is a root of~$F$, and $L_x = \emptyset$ otherwise.
  In general, after processing a node $x$, $L_x$ stores the set of roots
  of $F$ that map to descendants of $x$ and have proper ancestors of $x$
  as the tails of their $T_1$-hybrid edges.
  (It is not hard to see that this is the same ancestor of $x$, for every
  root in $L_x$.)

  After setting up this information for the leaves of $T_1$, the postorder
  traversal computes the same information for the nonleaf nodes of $T_1$
  and uses it to compute the $T_1$-hybrid edges in $\ecyc{F}$.
  For a nonleaf node $x$ with children $l$ and $r$, the mappings
  $\phi_1^{-1}(l)$ and $\phi_1^{-1}(r)$ and the root
  lists $L_l$ and $L_r$ of $l$ and $r$ are computed before processing $x$.
  Hence, we can use them to compute the mapping $\phi_1^{-1}(x)$ and the
  root list $L_x$.
  We distinguish four cases.

  If neither $\phi^{-1}_1(l)$ nor $\phi^{-1}_1(r)$ is undefined or a root of
  $F$, then they must have a common parent $p$ in $F$ (because $l$ and $r$
  are siblings in $T_1$ and $F$ is a forest of $T_1$).
  In this case, we set $\phi^{-1}_1(x) = p$ and $\phi_1(p) = x$.
  If $p$ is a root other than $\rho$, we set $L_x = \set{p}$; otherwise
  $L_x = \emptyset$.

  If both $\phi^{-1}_1(l)$ and $\phi^{-1}_1(r)$ are undefined or a root of $F$,
  then $\phi^{-1}(x)$ is undefined (as $x$ can belong to a path between two
  labelled nodes $a$ and $b$ such that $a \reach[F] b$ only if this is true
  for at least one of its children)
  and we set $L_x = L_l \cup L_r$.

  If only $\phi^{-1}_1(l)$ is undefined or a root of $F$, we set
  $\phi_1^{-1}(x) := \phi_1^{-1}(r)$
  and add a $T_1$-hybrid edge
  $\parensL{\phi_1^{-1}(x),y}$ to $\ecyc{F}$, for every root $y$ in $L_l$.
  Then we set $L_x = \emptyset$ ($x$~cannot be the image $\phi_1(x')$ of a root
  $x'$ of $F$ and $L_r = \emptyset$ in this case).

  The final case where only $\phi^{-1}_1(r)$ is undefined or a root of $F$ is
  symmetric to the previous case.

  It is easy to see that this procedure correctly constructs $\ecyc{F}$ because
  it directly follows the definition of $\ecyc{F}$.
  The running time of the algorithm is also easily seen to be linear.
  Indeed, computing the mappings $\phi_1^{-1}(x)$ and possibly $\phi_1(p)$ from
  $\phi_1^{-1}(l)$ and $\phi_1^{-1}(r)$ takes constant time per visited
  node~$x$, linear time in total.
  In the case when $L_x$ is computed as the union of $L_l$ and $L_r$, $L_l$
  and $L_r$ can be concatenated in constant time.
  In the case when we add a hybrid edge to~$\ecyc{F}$, for every node in $L_l$
  or $L_r$, this takes constant time per node, and we then pass an empty list
  $L_x$ to $x$'s parent.
  The latter implies that every
  root added to a list $L_x$ leads to the addition of exactly one hybrid edge
  to~$\ecyc{F}$.
  Since every node adds at most one root to $L_x$ that is not already present
  in $L_l$ or $L_r$, this shows that the addition of hybrid edges to $\ecyc{F}$
  also takes linear time in total for all nodes of $T_1$.
  The running time of the traversal of $T_2$ is bounded by $\Oh{n}$ using the
  same arguments.
  Hence, the entire algorithm takes linear time.\qquad
\end{proof}

One thing to note about the algorithm for constructing $\ecyc{F}$ is that it
does not require knowledge of the edge sets $E_1$ and $E_2$, even though we
used these sets to define $\ecyc{F}$.
This implies in particular that, even though there may be different edge sets
$E_1$ and $E_2$ such that $T_1 \div E_1 = T_2 \div E_2 = F$, all of them
lead to the same cycle graph---$\ecyc{F}$ is completely determined by $F$ alone.

\subsection{Essential Components and Exit Nodes}

\label{sec:cycles}

In this subsection, we define the essential components of a cycle in
$\ecyc{F}$ and their exit nodes.
Our goal is to prove that, if $F$ can be refined to an AAF of $T_1$
and $T_2$ with at most $k+1$ components, this is possible exclusively by cutting
the edges on the paths from exit nodes to the
roots of their components in~$F$.

Let $H_1$ be the set of $T_1$-hybrid edges in $\ecyc{F}$, and $H_2$
the set of $T_2$-hybrid edges in $\ecyc{F}$, and
assume $\ecyc{F}$ contains a cycle $O$.
Let $h_0, h_1, \ldots, h_{m-1}$ be the hybrid edges in $O$, and
consider the components $C_{0}, C_{1},
\ldots, C_{m-1}$ of $F$ connected by these edges.
More precisely, using index arithmetic modulo~$m$,
we assume the tail and head of edge $h_{i}$ belong to
components $C_{i}$ and $C_{i+1}$, respectively.
The cycle $O$ enters each component $C_{i}$ at its root and leaves it
at the tail of the edge $h_{i}$.
We say a component $C_i$ is \emph{essential for $O$} if
$h_{i-1} \in H_1$ and $h_{i} \in H_2$ or vice versa.
We say a component $C$ of $F$ is \emph{essential} if it is essential for
at least one cycle in $\ecyc{F}$.
A node $x$ of a component $C$ of $F$ is an \emph{exit node} of $C$ if
$C$ is an essential component $C_i$ for some cycle $O$ in $\ecyc{F}$ and
$x$ is the tail of edge $h_{i}$ in this cycle.
Figure~\ref{fig:cycle_g_f} illustrates these concepts.
Our first result in this subsection shows that there exists an exit node of an
essential component such that cutting its parent
edge in $F$ reduces $\aecut{T_1,T_2,F}$ by one, that is,
by cutting this edge, we make progress towards an MAAF of $T_1$ and~$T_2$.

\begin{figure}[t]
  \subfigure[\unskip\label{fig:cycle_trees}]
  {\includegraphics{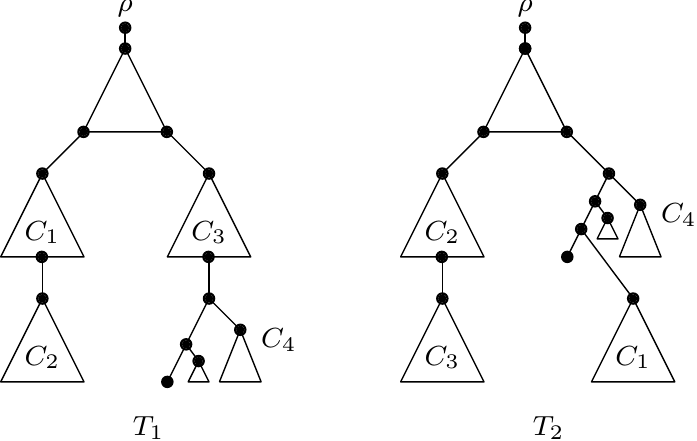}}%
  \hspace*{\stretch{1}}%
  \subfigure[\unskip\label{fig:cycle_forest}]
  {\includegraphics{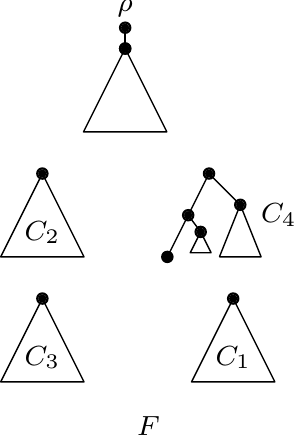}}%

  \subfigure[\unskip\label{fig:cycle_g_f}]
  {\includegraphics{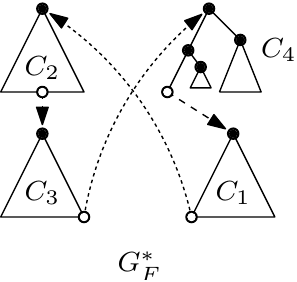}}%
  \hspace*{\stretch{1}}%
  \subfigure[\unskip\label{fig:cycle_fixed}]
  {\includegraphics{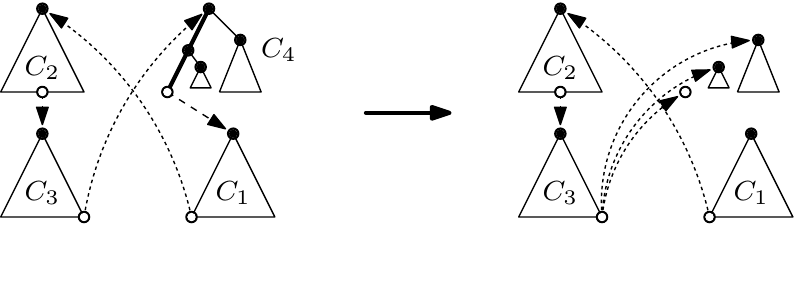}}%

  \caption{(a) Two trees $T_1$ and $T_2$.
    (b) An agreement forest $F$ of $T_1$ and $T_2$.
    (c) $\ecyc{F}$ (with $\rho$'s component removed for clarity)
    contains a cycle of length 4.
    White nodes indicate exit nodes.
    (d)~Fixing the exit node of component $C_4$ (cutting the bold edges)
    removes the cycle because
    none of the resulting subcomponents of $C_4$ is an ancestor of $C_1$
    in $T_2$.}
\end{figure}

\begin{lemma}
  \label{lem:fix_exit}
  Let $O$ be a cycle in $\ecyc{F}$, let $C_0, C_1, \dots, C_{m-1}$
  be its essential components, and let $v_i$ be the exit node of
  component $C_i$ in $O$, for all $0 \le i \le m-1$.
  Then $\aecut{T_1, T_2, F \div \set{\edge{v_i}}} = \aecut{T_1, T_2, F} - 1$,
  for some $0 \le i < m$.
\end{lemma}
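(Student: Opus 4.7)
The plan is to prove the lemma by contradiction, using the Shifting Lemma to move cuts in an optimal edge set toward the exit nodes. The trivial inequality $\aecut{T_1, T_2, F \div \set{\edge{v_i}}} \ge \aecut{T_1, T_2, F} - 1$ holds for every $i$, so it suffices to exhibit an optimal edge set $E$ with $\size{E} = \aecut{T_1, T_2, F}$, $F \div E$ an AAF of $T_1$ and $T_2$, and $\edge{v_i} \in E$ for some essential $C_i$ of $O$.

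The first step is to pick such an optimal $E$ maximizing $\size{E \cap \set{\edge{v_0}, \ldots, \edge{v_{m-1}}}}$ and to assume for contradiction that this intersection is empty. The structural core of the argument is then to show that some essential component $C_i$ must have an edge of $E$ on the path $P_i$ from its entry root $r_i$ to its exit node $v_i$. The motivating intuition is that an essential component of $O$ is precisely a place where $O$ switches between $T_1$- and $T_2$-ancestry; if no $P_i$ is cut and no $\edge{v_i}$ lies in $E$, then in $F \div E$ the descendants of each $v_i$ still share a component with $r_i$, so the mappings $\phi_1^{-1}$ and $\phi_2^{-1}$ still place the tails of $O$'s hybrid edges in the appropriate surviving components. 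A cycle obtained from $O$ by possibly rerouting through these surviving components then persists in $\ecyc{F \div E}$, contradicting its acyclicity by Lemma~\ref{lem:cyc=ecyc}.

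Once such an edge $f \in E \cap P_i$ has been located, I take $f$ to be the $E$-edge on $P_i$ closest to $v_i$. If the component of $F - (E \cup \set{\edge{v_i}})$ containing the lower endpoint of $f$ is not leaf-free, I plan to apply Lemma~\ref{lem:edge-shift} iteratively to shift any intervening cuts of $E$ toward the path and clear off-path labeled leaves; this is done without increasing $\size{E}$ or changing $F \div E$. A final application of Lemma~\ref{lem:edge-shift} with $e = \edge{v_i}$ then yields $F \div E = F \div (E \setminus \set{f} \cup \set{\edge{v_i}})$, producing an equally optimal edge set that contains $\edge{v_i}$ and contradicts the maximality choice of $E$.

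The main obstacle is the structural core: verifying that, under the assumption that no $P_i$ meets $E$, some cycle really does survive in $\ecyc{F \div E}$. This requires a careful analysis of how cuts in $F$ interact with the mappings $\phi_1^{-1}$ and $\phi_2^{-1}$ that determine hybrid edges. In particular, the case analysis must show that cuts inside non-essential components of $O$ can reroute hybrid edges but cannot break the alternation between $T_1$- and $T_2$-ancestry that characterizes the essential components of $O$, so no arrangement of cuts lying entirely outside the $P_i$ can destroy every cycle through the essential components of $O$.
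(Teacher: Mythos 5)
Your overall skeleton matches the paper's: fix an optimal edge set $E$ with $F \div E$ an AAF, assume $E$ misses every $\edge{v_i}$, use acyclicity of $F \div E$ to locate an edge $f \in E$ that can be exchanged for some $\edge{v_i}$ via the Shifting Lemma. However, your structural core targets the wrong intermediate claim, and it is false as stated. You assert that if $E$ avoids every $\edge{v_i}$ and every path $P_i$ from $r_i$ to $v_i$, then a cycle survives in $\ecyc{F \div E}$, "so the descendants of each $v_i$ still share a component with $r_i$." That implication fails: $E$ can sever \emph{every} labelled leaf of $C_i^{v_i}$ from $v_i$ by cutting edges strictly below $v_i$ (e.g., both child edges of $v_i$), touching neither $P_i$ nor $\edge{v_i}$, and this genuinely can break the cycle. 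What acyclicity of $F \div E$ actually yields is the paper's weaker dichotomy: for some essential $C_i$, every labelled leaf of $C_i^{v_i}$ is separated in $F \div E$ from every labelled leaf of $C_i \setminus C_i^{v_i}$. (The paper proves the contrapositive: if for each $i$ some pair $a_i \in C_i^{v_i}$, $c_i \in C_i \setminus C_i^{v_i}$ stays connected, their LCA $l_i$ in $F \div E$ maps in $T_i$ to a descendant of $\phi_i(r_i)$ and an ancestor of $x_i$, hence of $\phi_i(r_{i+1})$, and these ancestry relations chain into a cycle in $\cyc{F \div E}$.) From the separation property one then concludes that all labelled leaves on \emph{one side} of $v_i$ fail to reach $v_i$ in $F - E$, and $f$ is taken as the first $E$-edge on the path from $v_i$ toward a leaf on that side --- which may well lie below $v_i$, not on $P_i$.

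The second gap is your fallback for when the component of $F - (E \cup \set{\edge{v_i}})$ containing the lower endpoint of $f$ is not leaf-free: you propose to "apply Lemma~\ref{lem:edge-shift} iteratively to shift any intervening cuts of $E$ toward the path and clear off-path labeled leaves." The Shifting Lemma cannot do this. Its hypothesis is that the component between $f$ and $e$ is \emph{already} free of labelled leaves; it never evicts a labelled leaf from a component, because any exchange that moved a labelled leaf to a different component would change the yielded forest, which is exactly what the lemma's conclusion forbids. If a labelled leaf remains attached to the path segment between $f$ and $v_i$ in $F - E$, no sequence of exchanges makes the swap legal. The paper sidesteps this entirely: by choosing $f$ on the side of $v_i$ whose labelled leaves are all already disconnected from $v_i$, the relevant component of $F - (E \cup \set{\edge{v_i}})$ is automatically empty and a single application of the Shifting Lemma finishes the proof. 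You would need to replace your alternation argument with the leaf-separation argument to make the proof go through.
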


\begin{proof}
  Let $E$ be an arbitrary edge set of size $\aecut{T_1, T_2, F}$ and such that
  $F' := F \div E$ is an AAF of $T_1$ and $T_2$.
  If $E \cap \set{\edge{v_0}, \edge{v_1}, \dots, \edge{v_{m-1}}} \ne \emptyset$,
  the lemma holds.
  If $E \cap \set{\edge{v_0}, \edge{v_1}, \dots, \edge{v_{m-1}}} = \emptyset$,
  we show that there exists an edge $f \in E$ such that
  $F' = F \div ( E \setminus \set{f} \cup \set{\edge{v_i}})$,
  for some $0 \le i < m$, which again proves the lemma.
  Let $r_i$ be the root of component $C_i$, for all $0 \le i < m$.
  To avoid excessive use of modulo notation in indices,
  we define $T_i$, $\phi_i(\cdot)$, etc.\ to be the same as
  $T_{2 - (i \bmod 2)}$, $\phi_{2 - (i \bmod 2)}(\cdot)$, etc.\ in the remainder
  of this proof.

  First suppose there exist leaves $a_i \in C_i^{v_i}$ and
  $c_i \in C_i \setminus C_i^{v_i}$ such that $a_i \reach[F'] c_i$, for
  all $0 \le i < m$, and let $l_i$ be the LCA of $a_i$ and $c_i$ in $F'$.
  Further, for every node $x \in F'$ and for $i \in \set{1,2}$,
  let $\phi_i(x)$ and $\phi_i'(x)$ be the nodes in $T_i$ $x$ maps to based
  on its descendants in $F$ and $F'$, respectively.
  Since $C_0, C_1, \dots, C_{m-1}$ are the essential components of $O$,
  $m$ is even and, w.l.o.g., the hybrid edge with head $r_i$ is
  $T_{i - 1}$-hybrid and the hybrid edge with tail $v_i$ is
  $T_i$-hybrid.
  This implies that the lowest ancestor $x_i$ of
  $\phi_i(r_{i+1})$ such that $\phi_i^{-1}(x_i)$
  is defined and belongs to $C_i$ satisfies
  $\phi_i^{-1}(x_i) = v_i$.

  Now observe that $\phi_i'(l_i)$ is a descendant of
  $\phi_i(r_i)$ and an ancestor of $x_i$ in $T_i$.
  The former follows because (i) the set of $l_i$'s descendants in $F'$ is a
  subset of $l_i$'s descendants in $F$ and, thus, $\phi_i'(l_i)$ is a descendant
  of $\phi_i(l_i)$, and (ii) $l_i$ is a descendant of $r_i$ in $F$ and, hence,
  $\phi_i(l_i)$ is a descendant of $\phi_i(r_i)$.
  The latter follows because
  $\phi_i'(a_i) = \phi_i(a_i)$ is a descendant
  of $x_i$,
  while $\phi_i'(c_i) = \phi_i(c_i)$ is not.
  Since $x_i$ is an ancestor of $\phi_i(r_{i+1})$, for all $i$,
  this implies that $\phi_i'(l_i)$ is an ancestor of
  $\phi_i'(l_{i+1})$, for all~$i$, which shows that the components of $F'$
  containing these nodes form a cycle in $\cyc{F'}$, contradicting that $F'$
  is acyclic.

  Thus, there exists a component $C_i$ such that $a \noreach[F'] c$,
  for all labelled leaves $a \in C_i^{v_i}$ and $c \in C_i \setminus C_i^{v_i}$.
  This in turn implies that either $a \noreach[F'] v_i$, for all labelled leaves
  $a \in C_i^{v_i}$, or $c \noreach[F'] v_i$, for all labelled leaves $c \in
  C_i \setminus C_i^{v_i}$.
  W.l.o.g., assume the former.
  We choose an arbitrary labelled
  leaf $a' \in C_i^{v_i}$ and let $f$ be the first edge in $E$ on the path from
  $v_i$ to $a'$.
  Since $a \noreach[F'] v_i$, for all $a \in C_i^{v_i}$, this edge $f$ and the
  edge $e = \edge{v_i}$ satisfy the conditions of Lemma~\ref{lem:edge-shift}
  and, hence, $F \div E = F \div (E \setminus \set{f} \cup \set{\edge{v_i}})$
  is an AAF of $T_1$ and $T_2$.\qquad
\end{proof}

The following corollary of Lemma~\ref{lem:fix_exit} shows that we can in fact
make progress towards an AAF by cutting \emph{all} edges on the path from an
appropriate exit node to the root of its component.
We call this \emph{fixing} the exit node.
Removing a cycle by fixing an exit node is illustrated
in Figure~\ref{fig:cycle_fixed}.

\begin{corollary}
  \label{cor:fix_exit}
  Let $O$ be a cycle in $\ecyc{F}$, let $C_0, C_1, \dots, C_{m-1}$ be its
  essential components, let $v_i$ be the exit node of component $C_i$ in $O$,
  let $F_i$ be the forest obtained from $F$ by fixing $v_i$, and let $\ell_i$
  be the length of the path in $C_i$ from $v_i$ to the root of $C_i$,
  for all $0 \le i \le m-1$.
  Then $\aecut{T_1, T_2, F_i} = \aecut{T_1, T_2, F} - \ell_i$, for
  some $0 \le i \le m - 1$.
\end{corollary}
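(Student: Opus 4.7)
The plan is to derive this corollary from Lemma~\ref{lem:fix_exit} by iterating edge cuts along the path from $v_i$ to $r_i$. The inequality $\aecut{T_1, T_2, F_i} \ge \aecut{T_1, T_2, F} - \ell_i$ is immediate, since any AAF-inducing edge set of $F_i$ combined with the $\ell_i$ path edges defining $F_i$ is an AAF-inducing edge set of $F$. Thus the substantive task is to establish the matching upper bound for some $i$.

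Let $e_0 = \edge{v_i}, e_1, \dots, e_{\ell_i-1}$ denote the $\ell_i$ path edges in $C_i$, with $e_j$ incident to the $j$-th ancestor $u_j$ of $v_i$ on this path. I would first apply Lemma~\ref{lem:fix_exit} to the cycle $O$ in $\ecyc{F}$ to obtain an index $i$ satisfying $\aecut{T_1, T_2, F \div \set{e_0}} = \aecut{T_1, T_2, F} - 1$. Writing $F^{(j)} := F \div \set{e_0, \dots, e_{j-1}}$, I would then prove by induction on $j \in \set{1, \dots, \ell_i}$ that, for this same index $i$, $\aecut{T_1, T_2, F^{(j)}} = \aecut{T_1, T_2, F} - j$; setting $j = \ell_i$ yields $F_i = F^{(\ell_i)}$ and the desired equality. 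The inductive step amounts to showing that the original cycle $O$ in $\ecyc{F}$ induces a cycle $O^{(j)}$ in $\ecyc{F^{(j)}}$ in which the component of $F^{(j)}$ containing $r_i$, call it $C_i^{r_i}$, remains essential with exit node at the image of $u_j$ (the suppression of $u_1, \dots, u_{j-1}$ having promoted $u_j$ to the new bottom of the path). Lemma~\ref{lem:fix_exit} applied to $O^{(j)}$ would then yield the desired unit decrease.

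The main obstacle is verifying the persistence of this cycle structure: that $C_i^{r_i}$ remains essential in $\ecyc{F^{(j)}}$ and that its exit node in $O^{(j)}$ tracks one step up the path at each iteration. This relies on the hybrid-edge construction from \S\ref{sec:hyb_graph}: the incoming hybrid edge at $r_i$ from $C_{i-1}$ persists because $r_i$ is still the root of $C_i^{r_i}$, while the outgoing hybrid edge from $v_i$ to $C_{i+1}$ is replaced by one whose tail lies at the image of $u_j$, since $u_j$ becomes the lowest ancestor of $\phi_1(r_{i+1})$ or $\phi_2(r_{i+1})$ whose preimage remains defined in $C_i^{r_i}$. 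A secondary subtlety is that the off-path subtrees $w_1, \dots, w_{j-1}$ become independent components in $F^{(j)}$; a careful check is needed to confirm these do not alter the essential character of $C_i^{r_i}$, but this holds because they lie off the ancestor chain between $C_i^{r_i}$ and $C_{i+1}$ in $T_1$ and $T_2$. A further delicate point is ensuring that Lemma~\ref{lem:fix_exit}, which asserts the existence of some index producing unit progress, can be aligned with the specific index $i$ we are tracking; this may require refining the choice of the minimum AAF-inducing edge set at each step using the Shifting Lemma (Lemma~\ref{lem:edge-shift}) to incorporate $e_j$ without altering $\aecut{T_1, T_2, F^{(j)}}$.
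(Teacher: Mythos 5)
Your outer frame is right---the lower bound $\aecut{T_1, T_2, F_i} \ge \aecut{T_1, T_2, F} - \ell_i$ is indeed immediate, the cycle does persist after each cut with $C_i$ replaced by its part containing $r_i$, and the corollary is proved by iterating Lemma~\ref{lem:fix_exit}---but your induction is set up on the wrong quantity, and the resulting gap is exactly the one you flag at the end without resolving. You fix the index $i$ from the \emph{first} application of Lemma~\ref{lem:fix_exit} and then try to show that every subsequent application of that lemma to the cycle $O^{(j)}$ yields unit progress at the exit node of the \emph{same} component $C_i^{r_i}$. Lemma~\ref{lem:fix_exit} does not give you this: it only asserts that \emph{some} exit node of the current cycle admits unit progress, and after cutting $e_0, \dots, e_{j-1}$ the minimum AAF-inducing edge set for $F^{(j)}$ may genuinely route its remaining cuts through a different component $C_{i'}$ of the cycle. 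The Shifting Lemma cannot repair this, because it only exchanges edges within a fixed minimum edge set when a label-free component appears; it cannot force a minimum edge set to contain the next edge up a particular path. In effect you are trying to prove the strictly stronger claim that the index returned by the first application of Lemma~\ref{lem:fix_exit} is the index for which the corollary holds, and nothing in the paper's toolkit supports that.

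The paper's proof avoids this by inducting on $\aecut{T_1, T_2, F}$ rather than on the number of edges cut along a fixed path, and by letting the witnessing index float. After one application of Lemma~\ref{lem:fix_exit} produces $F' := F \div \set{\edge{v_i}}$ with $\aecut{T_1, T_2, F'} = \aecut{T_1, T_2, F} - 1$, the inductive hypothesis is the \emph{full corollary} applied to $F'$ and the surviving cycle $O'$ (whose exit nodes are $v_j$ for $j \ne i$ and $v_i$'s sibling for the component $B_i$). This returns some index $j$, possibly different from $i$, with $\aecut{T_1, T_2, F_j'} = \aecut{T_1, T_2, F'} - \ell_j'$; the two cases $j \ne i$ (where $F_j' = F_j \div \set{\edge{v_i}}$ and $\ell_j' = \ell_j$) and $j = i$ (where $F_j' = F_j$ and $\ell_i' = \ell_i - 1$) are then checked separately, and in either case the conclusion holds for the index $j$ delivered by the induction---not necessarily for the index produced by the first application of Lemma~\ref{lem:fix_exit}. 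If you restructure your argument this way, letting the component change at each level of the induction, the proof goes through; as written, the inductive step cannot be completed.
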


\begin{proof}
  The proof is by induction on $\aecut{T_1, T_2, F}$.
  By Lemma~\ref{lem:fix_exit}, there exists some exit node $v_i$ such that
  $\aecut{T_1, T_2, F'} = \aecut{T_1, T_2, F} - 1$, where $F' := F \div
	\edge{v_i}$.
  Cutting $\edge{v_i}$ splits $C_i$ into two components $A_i$ and $B_i$
  containing the leaves in $C_i^{v_i}$ and in $C_i \setminus C_i^{v_i}$,
  respectively.

  If $\aecut{T_1, T_2, F} = 1$, then $\aecut{T_1, T_2, F'} = 0$.
  This implies that the path from $v_i$ to $r_i$ in $C_i$ cannot contain any
  edges apart from $\edge{v_i}$ because otherwise
  $C_0, C_1, \ldots, C_{i-1}, B_i,\break C_{i+1}, \ldots, C_{m-1}$
  would form a cycle in $\cyc{F'}$, that is,
  $\aecut{T_1, T_2, F'} > 0$.
  Thus, the corollary holds for $\aecut{T_1, T_2, F} = 1$.

  If $\aecut{T_1, T_2, F} > 1$, we can assume by induction that the corollary
  holds for $F'$.
  If \mbox{$\ell_i = 1$}, then the path from $v_i$ to $r_i$ consists of
  only $\edge{v_i}$, and the corollary holds for~$F$.
  Otherwise $C_0', C_1', \ldots, C_{i-1}', C_{i}', C_{i+1}', \ldots, C_{m-1}' :=
  C_0, C_1, \ldots, C_{i-1}, B_i, C_{i+1}, \ldots,\break C_{m-1}$ is a cycle
  $O'$ in $\cyc{F'}$.
  Note that, for $j \ne i$, the exit node $v_j'$ of $C_j'$ in $O'$ is~$v_j$;
  the exit node $v_i'$ of $C_i'$ is $v_i$'s sibling in~$C_i$.
  By the inductive hypothesis, there exists some $C'_j$, $0 \le j < m$, such
  that $\aecut{T_1, T_2, F_j'} = \aecut{T_1, T_2, F'} - \ell_j'$,
  where $F_j'$ is obtained from $F'$ by fixing $v_j'$
  and $\ell_j'$ is the length of the path from $v_j'$ to the root of~$C_j'$.
  In particular, $\ell_j' = \ell_j$, for $j \ne i$; and $\ell_i' = \ell_i - 1$.
  If $j \ne i$, we have $F_j' = F_j \div \set{\edge{v_i}}$ and
  $\aecut{T_1, T_2, F_j \div \set{\edge{v_i}}} = \aecut{T_1, T_2, F_j'} =
  \aecut{T_1, T_2, F'} - \ell_j' = \aecut{T_1, T_2, F} - \ell_j - 1$.
  Hence, $\aecut{T_1, T_2, F_j} \le \aecut{T_1, T_2, F} - \ell_j$.
  Since $F_j$ is obtained from $F$ by cutting $\ell_j$ edges, we also have
  $\aecut{T_1, T_2, F_j} \ge \aecut{T_1, T_2, F} - \ell_j$.
  Thus, the corollary holds in this case.
  If $j = i$, we have $F_j' = F_j$ and $\aecut{T_1, T_2, F_j} =
  \aecut{T_1, T_2, F_j'} =  \aecut{T_1, T_2, F'} - \ell_j' =
  \aecut{T_1, T_2, F} - \ell_j$.
  Thus, the corollary holds in this case as well.\qquad
\end{proof}

\subsection{Potential Exit Nodes and a Simple Refinement Algorithm}

\label{sec:naive_refinement}

In this subsection, we introduce the concept of potential exit nodes and
show that a first simple refinement algorithm can be obtained by testing for
each subset of potential exit nodes whether fixing these nodes produces
an AAF with at most $k+1$ components.

Given an agreement forest $F$ of $T_1$ and $T_2$, we mark all those nodes in
$F$ that are the tails of hybrid edges in $\ecyc{F}$.
Since this includes all exit nodes of $F$, we call these nodes
\emph{potential exit nodes}.
If $F$ has $k'$ components, there are $2(k'-1)$ potential exit nodes.
If $F$ is a forest produced by the branching phase of our algorithm, it has at
most $k+1$ components and, thus, at most $2k$ potential exit nodes.
The main result in this subsection is Lemma~\ref{lem:same_exit}, which
shows that the set of potential exit nodes of the forest obtained by fixing a
potential exit node in $F$ is a subset of $F$'s potential exit nodes.
We use this lemma to prove that, if $F$ can be refined to an AAF with at most
$k+1$ components, then fixing an appropriate subset of potential exit nodes
produces such a forest.

\begin{lemma}
  \label{lem:same_exit}
  Let $F$ be an agreement forest of two trees $T_1$ and $T_2$, let
  $V$ be the set of potential exit nodes of $F$, and let $v$ be an arbitrary
  node in $V$.
  Let $F'$ be the forest obtained from $F$ by fixing~$v$, and let $V'$ be the
  set of its potential exit nodes.
  Then $V' \subset V$.
\end{lemma}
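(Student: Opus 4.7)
My plan is to establish the strict inclusion $V' \subset V$ by proving both $V' \subseteq V$ and $V \setminus V' \neq \emptyset$. The first step will be to describe precisely what fixing $v$ does to $F$: the edges $\edge{v}, \edge{\parent{v}}, \dots$ along the path in $F$ from $v$ to the root $r$ of its component $C$ are deleted, yielding sub-components rooted at $v$ and at the path siblings $s_0, \dots, s_{\ell-1}$ (with $r$ possibly suppressed). Throughout, I will use two simple structural facts: every root $y$ of $F'$ has the same $F'$-subtree as its $F$-subtree, so $\phi_i'(y) = \phi_i(y)$; and the domain of $(\phi_i')^{-1}$ is contained in the domain of $\phi_i^{-1}$, since $F'$-equivalence of labelled leaves refines $F$-equivalence.

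For $V' \subseteq V$, I will take an arbitrary $x = (\phi_i')^{-1}(z_i')$, the tail of a $T_i$-hybrid edge in $\ecyc{F'}$ with head $y$, and construct a hybrid edge in $\ecyc{F}$ with the same tail $x$. I distinguish whether $y$ was already a root of $F$ or is a new root. If $y$ was a root of $F$, I argue the tail of $y$'s $T_i$-hybrid edge in $\ecyc{F}$ is $x$ when no shift occurs; if the relevant $\phi_i^{-1}$-defined ancestor of $\phi_i(y)$ becomes $(\phi_i')^{-1}$-undefined after fixing, I identify a different root of $F$ (a nested component whose $T_i$-region contains the relevant ancestors) whose $\ecyc{F}$-tail equals $x$. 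If $y$ is a new root (one of $v, s_0, \dots, s_{\ell-1}$), going up from $\phi_i(y)$ in $T_i$ passes through nodes that are $(\phi_i')^{-1}$-undefined, namely the old transit nodes of $C$ that got disconnected, until we either enter a nested component's $T_i$-subtree or exit $T_i^{\phi_i(r)}$ entirely; in either case, the first $(\phi_i')^{-1}$-defined ancestor of $\phi_i(y)$ is a node that already appeared as the lowest $\phi_i^{-1}$-defined strict ancestor of some root of $F$ (namely the corresponding nested component's root or $r$ itself), so $x$ coincides with the tail of that root's hybrid edge in $\ecyc{F}$.

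For the strict containment $V \setminus V' \neq \emptyset$, my witness will come from a counting-style comparison of hybrid edges. Fixing $v$ creates $\ell + 1$ new roots whose $T_i$-hybrid-edge tails all collapse to a single tail inherited from $r$'s original $T_i$-hybrid edge in $\ecyc{F}$ (by the skipping argument above), while the two hybrid edges that were incident to $r$ in $\ecyc{F}$ themselves disappear when $r$ is suppressed. Because the collapsed new tails are exactly $r$'s old tails, the $2\ell$ extra hybrid edges of $\ecyc{F'}$ contribute no new element to $V'$; meanwhile, at least one of $r$'s hybrid-edge tails in $\ecyc{F}$, together with the tails of hybrid edges from nested components whose $\phi_i^{-1}$-defined ancestor lay strictly inside the transit region of $C$, are not reproducible as tails in $\ecyc{F'}$ because their supporting ancestor nodes become $(\phi_i')^{-1}$-undefined. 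This yields at least one tail in $V \setminus V'$.

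The hardest step will be making the case analysis for $V' \subseteq V$ rigorous when nested components lie inside $T_i^{\phi_i(r)}$, because such components' hybrid-edge tails in $\ecyc{F}$ may themselves lie in the transit part of $C$ and shift into $C^v$ or some $C^{s_j}$ under fixing, and I must verify that every such shifted tail is still a tail of some hybrid edge in $\ecyc{F}$. I plan to handle this by using the explicit postorder construction from Lemma~\ref{lem:build-ecyc} to trace, for each hybrid edge in $\ecyc{F'}$, the unique $T_i$-ancestor where it is attached, matching it to the hybrid edge in $\ecyc{F}$ attached at the same ancestor crossing and using the fact that the LCA-in-$F'$ of a connected subset of labelled leaves equals the corresponding LCA-in-$F$ whenever no cuts on the relevant paths are introduced.
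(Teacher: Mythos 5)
There are two genuine problems with your plan. First, the strictness part: the paper's witness for $V \setminus V' \ne \emptyset$ is simply $v$ itself. After fixing, $v$ is a component root of $F'$, and the tail of a hybrid edge is by construction never a component root, so $v \notin V'$ while $v \in V$ by hypothesis; this makes $V' \subseteq V$ the only thing left to prove. Your counting argument is not only unnecessary but does not hold together: you first assert that the tails of the hybrid edges of the $\ell+1$ new roots ``collapse to'' $r$'s old tails, and then that at least one of $r$'s old tails is ``not reproducible as a tail in $\ecyc{F'}$'' --- these claims contradict each other. Moreover, when the component $C$ containing $v$ is the component of $\rho$, there are no hybrid edges with head $r = \rho$ at all, and there need not be any nested components, so your proposed witness simply does not exist in that case.

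Second, for $V' \subseteq V$ the overall shape (take a hybrid edge $(u,w)$ of $\ecyc{F'}$, walk up from $\phi_1'(w)$ in $T_1$, and match $u$ to the tail of a hybrid edge of $\ecyc{F}$) agrees with the paper, but you defer the crux to the phrase ``whenever no cuts on the relevant paths are introduced,'' which is precisely what must be proved. Concretely, with $x$ the lowest proper ancestor of $\phi_1'(w)$ such that $\phi_1'^{-1}(x)$ is defined, one needs $\phi_1^{-1}(x) = \phi_1'^{-1}(x)$, i.e., that every labelled leaf $a \in T_1^x$ with $a \reach[T_1 - E_1] x$ also satisfies $a \reach[T_1 - E_1'] x$. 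The paper's proof of this uses the special structure of the fixing operation: the cut set $E$ is an ancestor-closed path from $v$ to the root of its component, so if $a$ were separated from the $F$-LCA $l$ of two leaves $b \reach[F'] c$ whose connecting path witnesses that $\phi_1'^{-1}(x)$ is defined, then a \emph{child edge of $l$} would lie in $E$ and $b$, $c$ would be separated in $F'$ as well --- a contradiction. Without this argument (and the companion argument that the first $\phi_1^{-1}$-defined node $y$ strictly between $x$ and $\phi_1'(w)$ must satisfy $x \noreach[T_1 - E_1] y$, so that $\phi_1^{-1}(y)$ is a root of $F$ and $(u, \phi_1^{-1}(y))$ is a hybrid edge of $\ecyc{F}$), the matching of tails you describe for the nested-component case remains an unproven assertion rather than a proof.
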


\begin{proof}
  Since fixing $v$ removes $v$'s parent edge, $v$ is a root of $F'$, which
  implies that $v \notin V'$ because potential exit nodes are not component
  roots.
  Thus, $V' \ne V$, and it suffices to prove that $V' \subseteq V$.
  So let $u \in V'$, and let $(u,w)$ be a hybrid edge in $\ecyc{F'}$ with tail
  $u$.
  Assume w.l.o.g.\ that $(u, w)$ is a $T_1$-hybrid edge, let
  $\phi_1(\cdot)$ and $\phi_1^{-1}(\cdot)$ be defined as before with
  respect to $F$, and let $\phi_1'(\cdot)$ and $\phi_1'^{-1}(\cdot)$
  be the same mappings defined with respect to $F'$.
  By the definition of a hybrid edge, $w$ is the root of a
  component of $F'$ and $u = \phi_1'^{-1}(x)$, for the lowest proper ancestor
  $x$ of $\phi_1'(w)$ such that $\phi_1'^{-1}(x)$ is defined.

  Now let $E_1 \subset E_1'$ be edge sets such that $F = T_1 \div E_1$
  and $F' = T_1 \div E_1'$, and let $E$ be the set of edges cut
  in $F$ to fix $v$, that is, $F' = F \div E$.
  We prove that $a \reach[T_1 - E_1] x$ if and only if
  $a \reach[T_1 - E_1'] x$, for every labelled leaf $a \in T_1^x$.
  This implies in particular that $\phi_1'^{-1}(y) = \phi_1^{-1}(y)$,
  for all nodes $y \in T_1^x$ such that $x \reach[T_1 - E_1] y$.

  Clearly, if $a \reach[T_1 - E_1'] x$, then $a \reach[T_1 - E_1] x$
  because $E_1 \subset E_1'$.
  So assume $a \reach[T_1 - E_1] x$ but $a \noreach[T_1 - E_1']\nobreak x$,
  for some labelled leaf $a \in T_1^x$.
  Since $\phi_1'^{-1}(x)$ is defined, there exist labelled nodes $b$ and $c$
  such that $b \reach[F'] c$ and $x$ is on the path from $b$ to $c$ in
  $T_1 - E_1'$.
  This implies that $b \reach[T_1 - E_1'] x \reach[T_1 - E_1'] c$ and, hence,
  $b \reach[T_1 - E_1] x \reach[T_1 - E_1] c$.
  Together with $a \reach[T_1 - E_1] x$, this implies that
  $a$, $b$, and $c$ belong to the same connected component of $T_1 - E_1$ and,
  hence, to the same connected component of $F$, while $a$ belongs to a
  different connected component of $F'$ than $b$ and $c$.
  Now observe that, since $x$ is an ancestor of $a$ and
  is on the path from $b$ to $c$,
  the lowest common ancestor of $b$ and $c$ in $T_1$ is an ancestor of $a$.
  Since $F$ is a forest of $T_1$, this implies that the lowest common ancestor
  $l$ of $b$ and $c$ in $F$ also is an ancestor of $a$.
  Since $b \reach[F'] l \reach[F'] c$ and $a \noreach[F'] c$, the path from $a$
  to $l$ must contain at least one edge in $E$.
  By the choice of $E$, this implies that one of the child edges
  of $l$ also belongs to $E$ and, hence, that $b \noreach[F - E] c$, a
  contradiction because $F' = F \div E$ and $b \reach[F'] c$.

  To finish the proof, let $y$ be the first node after $x$ on the path from $x$
  to $\phi_1'(w)$ and such that $\phi_1^{-1}(y)$ is defined.
  Since $\phi_1'^{-1}(\phi_1'(w)) = w$, $\phi_1'^{-1}(\phi_1'(w))$ and, hence,
  $\phi_1^{-1}(\phi_1'(w))$ is defined, that is, such a node $y$ exists.
  If $x \noreach[T_1 - E_1] y$, then $\phi_1^{-1}(y)$ is a root of $F$ and
  $(\phi_1^{-1}(x), \phi_1^{-1}(y))$ is a hybrid edge in $\ecyc{F}$.
  Since $\phi_1^{-1}(x) = \phi_1'^{-1}(x) = u$, this proves that $u$ is also
  a potential exit node of $F$.
  If $x \reach[T_1 - E_1] y$, then $\phi_1'^{-1}(y) = \phi_1^{-1}(y)$,
  that is, $\phi_1'^{-1}(y)$ is defined.
  By the choice of $x$, this implies that $y = \phi_1'(w)$.
  Since $\phi_1'^{-1}(\phi_1'(w))$ is defined, there exists a leaf
  $a \in T_1^{\phi_1'(w)}$ such that $a \reach[T_1 - E_1'] \phi_1'(w)$
  and, hence, $a \reach[F'] w$ and $a \reach[T_1 - E_1] \phi_1'(w)$.
  Together with $\phi_1'(w) \reach[T_1 - E_1] x$, the latter implies that
  $a \reach[T_1 - E_1] x$, while $(u,w)$ being a hybrid edge implies
  that $u \noreach[F'] w$ and, hence, $a \noreach[F'] u$ and
  $a \noreach[T_1 - E_1'] x$.
  This is a contradiction, that is, the case $x \reach[T_1 - E_1] y$ cannot
  occur.\qquad
\end{proof}

By Corollary~\ref{cor:fix_exit}, if $F$ can be refined to an AAF $F'$ with at
most $k+1$ components, we can do so by fixing an appropriate exit node in
$F_0 = F$, then fixing an appropriate exit node in the resulting forest $F_1$,
and so on until we obtain $F'$.
Let $F = F_0, F_1, \dots, F_{k+1} = F'$ be the sequence of forests produced in
this fashion.
For $0 \le i \le k$, the exit nodes of $F_i$ are included in the set of $F_i$'s
potential exit nodes and, by Lemma~\ref{lem:same_exit}, these potential exit
nodes are included in the set of $F$'s potential exit node.
Thus, $F'$ can be obtained from $F$ by choosing an appropriate subset of
$F$'s potential exit nodes and fixing them.
Now it suffices to observe that fixing a subset of exit nodes one node at a time
produces the same forest as simultaneous cutting all edges in the union of the
paths from these exit nodes to the roots of their components in $F$.

This leads to the following simple refinement algorithm:
We mark the potential exit nodes in~$F$, which is easily done in linear time
as part of constructing $\ecyc{F}$.
Then we consider every subset of potential exit nodes.
For each such subset, we can in linear time identify the edges on the paths from
these potential exit nodes to the roots of their components, cut these edges and
suppress nodes with only one child, construct the expanded cycle graph
$\ecyc{F'}$ of the resulting forest $F'$, and test wether $F'$ has at most $k+1$
components and $\ecyc{F'}$ is acyclic.
We return ``yes'' as soon as we find a subset of potential exit nodes for which
this test suceeds.
If it fails for all subsets of potential exit nodes, we return ``no''.
If $F$ cannot be refined to an AAF with at most $k+1$ components, this test
fails for every subset of potential exit nodes.
Otherwise, as we have argued above, it will succeed for at least one subset of
potential exit nodes.
Thus, this implementation of $\balg{F, k}$~is correct.

If $F$ has $k' \le k+1$ components, there are at most $2^{2(k'-1)} \le 2^{2k} =
4^k$ subsets of potential exit nodes to test by $\balg{F, k}$.
Thus, the running time of $\balg{F, k}$ is $\OhL{4^k n}$.
As we argued at the beginning of this section, using this implementation of
$\balg{F, k}$ for the refinement phase of our MAAF algorithm results in a
running time of $\OhL{2.42^k\parens{n + 4^k n}} = \OhL{9.68^k n}$, and we
obtain the following result.

\begin{theorem}
\label{thm:naive}
For two rooted trees $T_1$ and $T_2$ and a parameter $k$, it takes
$\OhL{9.68^k n}$ time to decide whether $\aecut{T_1, T_2, T_2} \le k$.
\end{theorem}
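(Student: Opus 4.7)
The plan is to assemble the algorithm from the machinery developed earlier in this section and verify the two ingredients that drive its complexity: the total number of AFs inspected (controlled by a modified branching phase) and the cost of each refinement step $\balg{F,k}$. I would first describe the overall algorithm: run the modified MAF search $\aalg{T_1,T_2,k}$ from the beginning of \S\ref{sec:refinement}, and whenever an invocation would have returned ``yes'' in Step~\ref{case:success} of the MAF algorithm, instead call $\balg{F_2,k}$ and return its answer.

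Next I would handle correctness. If $\dhyb{T_1,T_2}\le k$, I claim that some viable invocation $\aalg{F_1,F_2,k''}$ reaches Step~\ref{case:success} with $F_2$ an AF of $T_1$ and $T_2$. As noted in the text, viability is preserved by a child invocation in every case of Step~\ref{case:spr:non-sibling}: Lemmas~\ref{lem:spr:sc} and~\ref{lem:spr:cab} handle Cases~\ref{case:spr:sc} and~\ref{case:spr:cab} directly, while Lemma~\ref{lem:hyb:cob} (together with the modification that in Case~\ref{case:spr:cob} the algorithm branches on both $\edge{b}$ and $\edge{c}$) handles Case~\ref{case:spr:cob}. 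On that AF, Corollary~\ref{cor:fix_exit} and Lemma~\ref{lem:same_exit} imply that a suitable subset of the (at most $2k$) potential exit nodes of $F_2$, when fixed, yields an MAAF with at most $k+1$ components, so $\balg{F_2,k}$ returns ``yes''. Conversely, $\balg{F_2,k}$ can return ``yes'' only when it exhibits an actual AAF with at most $k+1$ components, which bounds $\dhyb{T_1,T_2}$ by $k$.

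For the running time, I would bound the number of branching-phase invocations and the cost per refinement call separately. The recurrence in Lemma~\ref{lem:invocations} is unchanged by the extra recursive call in Case~\ref{case:spr:cob}, since $1+2I(k-1)\le 1+2I(k-1)+I(k-2)$ and Case~\ref{case:spr:cab} still dominates; hence the branching phase makes $\OhXL{(1+\sqrt2)^k}=\OhL{2.42^k}$ invocations, each performing $\OhL{n}$ work outside recursion by Lemma~\ref{lem:maf:time}. For a single call $\balg{F_2,k}$, Lemma~\ref{lem:build-ecyc} builds $\ecyc{F_2}$ in linear time and marks the at most $2k$ potential exit nodes; then for each of the at most $2^{2k}=4^k$ subsets $S$ of potential exit nodes, fixing $S$, rebuilding $\ecyc{F_2\div\text{fix}(S)}$, checking its component count, and testing acyclicity all take $\OhL{n}$ time. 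Thus $\balg{F_2,k}$ runs in $\OhL{4^k n}$ time, and summing over all branching invocations yields $\OhL{2.42^k\cdot 4^k n}=\OhL{9.68^k n}$.

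The main obstacle I expect is the bookkeeping in the correctness argument, specifically ensuring that combining Corollary~\ref{cor:fix_exit} (which says we can always make progress by fixing \emph{some} exit node of an existing cycle) with Lemma~\ref{lem:same_exit} (which says fixing a potential exit node does not introduce new potential exit nodes) actually produces the target AAF as a single simultaneous edge cut on the original $F_2$. Concretely, I would argue inductively along a sequence $F=F_0,F_1,\dots,F_{k+1}=F'$ of forests obtained by repeatedly fixing an exit node: at each step the exit node chosen is a potential exit node of $F_i$, hence of $F$ by iterating Lemma~\ref{lem:same_exit}, and fixing a set of nodes one at a time is equivalent to cutting the union of the root-paths simultaneously. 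Everything else is a routine composition of running-time bounds already proved.
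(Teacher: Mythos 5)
Your proposal is correct and follows essentially the same route as the paper: the branching phase with the extra branch in Case~\ref{case:spr:cob} still satisfies the $I(k) \le 1 + 2I(k-1) + I(k-2)$ recurrence, the composition of Corollary~\ref{cor:fix_exit} with Lemma~\ref{lem:same_exit} along a sequence of single fixes reduces the refinement step to testing the at most $4^k$ subsets of the at most $2k$ potential exit nodes, and each test is linear by Lemma~\ref{lem:build-ecyc}, giving $\OhL{2.42^k \cdot 4^k n} = \OhL{9.68^k n}$. The inductive bookkeeping you flag as the main obstacle is exactly the argument the paper makes, so nothing is missing.
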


\subsection{Halving the Number of Potential Exit Nodes}

\label{sec:marking}

In this subsection, we show how to mark half of the at most $2k$ potential exit
nodes defined in \S\ref{sec:naive_refinement} and show that it suffices to
test for every subset of \emph{marked} potential exit nodes whether fixing it
produces an AAF of $T_1$ and $T_2$ with at most $k+1$ components.
Since this reduces the number of subsets to be tested from $4^k$ to $2^k$,
the running time of the refinement step is reduced to $\OhL{2^k n}$, and
the running time of the entire MAAF algorithm is reduced to $\OhL{4.84^k n}$.

In general, the result of marking only a subset of potential exit nodes is
that we may obtain an AF $F$ of $T_1$ and $T_2$ that \emph{can}
be refined to an AAF of $T_1$ and $T_2$ with at most $k+1$ components
but \emph{cannot} be refined to such an AAF by fixing any subset of
the potential exit nodes marked in~$F$.
Intuitively, the reason why this is not a problem is that,
whenever we reach such an AF $F$ where a potential exit node $u$ should be fixed
but is not marked, there exists a branch in the branching phase's search for
AFs that cuts a subset of the edges cut to produce $F$ and then cuts $\edge{u}$.
Thus, if it is necessary to fix $u$ in $F$ to obtain an AAF $F'$ of $T_1$ and
$T_2$ with at most $k+1$ components, there exists an alternate route to
obtain the same AAF $F'$ by first producing a different AF $F''$
and then refining it.
While this is the intuition, it is in fact possible that our algorithm is not
able to produce $F'$ from $F''$ either.
What we do prove is that, if $\aecut{T_1, T_2, T_2} \le k$, then there exists a
``canonical'' AF $F_C$ produced by the branching phase of our algorithm and
which can be refined to an AAF $F_C'$ of $T_1$ and $T_2$ with at most $k+1$
components by fixing a subset of the marked potential exit nodes in $F_C$.

We accomplish the marking of potential exit nodes as follows.
The branching phase assigns a tag ``$T_1$'' or ``$T_2$'' to each component root
other than $\rho$ of each AF $F$ it produces.
After constructing $\ecyc{F}$, the refinement step marks a potential exit node
$u$ if there exists a $T_i$-hybrid edge $(u, w)$ in $\ecyc{F}$ such that $w$'s
tag is ``$T_i$''.
Finally, the refinement step checks whether an AAF of $T_1$ and $T_2$ with at
most $k+1$ components can be obtained from $F$ by fixing a subset of the marked
potential exit nodes.

To tag component roots during the branching phase of the algorithm, we augment
the three cases of Step~\ref{case:spr:non-sibling} to tag the bottom endpoints
of the edges they cut in $F_2$.
When a tagged node $x$ loses a child $l$ by cutting its parent
edge $\edge{l}$, $x$ is contracted into its other child $r$;
in this case, $r$~inherits $x$'s tag.
This ensures that at any time exactly the roots in the current forest $F_2$
are tagged.
The following is the pseudocode of the MAAF algorithm, which shows the tags
assigned to the component roots produced in Step~\ref{case:hyb:non-sibling}.
Note that Case~\ref{case:hyb:cob} has an additional branch that cuts both
$\edge{a}$ and $\edge{c}$.
This is necessary to ensure we find an AAF of $T_1$
and $T_2$ with at most $k+1$ components in spite of considering only subsets
of marked potential exit nodes in the refinement phase (if such an AAF exists).
In the description of the algorithm, we use $k$ to denote the parameter passed
to the current invocation (as in the MAF algorithm), and $k_0$ to denote the
parameter of the top-level invocation $\aalg{T_1, T_2, k_0}$.
Thus, $k_0+1$ is the number of connected components we allow the final AAF
to have.

\begin{enumerate}[leftmargin=*,widest=6]
\item \label{case:hyb:abort} (Failure) If $k < 0$, there is no subset $E$
  of at most $k$ edges of $F_2$ such that $F_2 - E$ yields an AF of
  $T_1$ and $T_2$: $\aecut{T_1, T_2, F_2} \ge 0 > k$.
  Return ``no'' in this case.
\item \label{case:hyb:success} (Refinement) If $\size{R_t}
  \le 2$, then $F_2 = \dot{F}_2 \cup F$ is an AF of
  $T_1$ and $T_2$.
  Invoke an algorithm $\balg{F_2,k_0}$ that decides whether $F_2$ can be refined
  to an AAF of $T_1$ and $T_2$ with at most $k_0+1$ components.
  Return the answer returned by $\balg{F_2, k_0}$.
\item \label{case:hyb:singleton} (Prune maximal agreeing subtrees)
  If there is a node $r \in R_t$ that is a root in $\dot{F}_2$, remove $r$
  from $R_t$ and add it to $R_d$, thereby moving the corresponding subtree
  of $\dot{F}_2$ to $F$; cut the edge $\edge{r}$ in $\dot{T}_1$
  and suppress $r$'s parent from $\dot{T}_1$; return to
  Step~\ref{case:hyb:success}.
  This does not alter $F_2$ and, thus, neither $\aecut{T_1,T_2,F_2}$.
  If no such root $r$ exists, proceed to Step~\ref{item:hyb:choose-sib-pair}.
\item \label{item:hyb:choose-sib-pair}Choose a sibling pair $(a,c)$ in
  $\dot{T}_1$ such that $a, c \in R_t$.
\item \label{case:hyb:sibling} (Grow agreeing subtrees) If $(a,c)$ is
  a sibling pair of $\dot{F}_2$, remove $a$ and $c$ from $R_t$; label
  their parent in both trees with $(a,c)$ and add it to $R_t$; return to
  Step~\ref{case:hyb:success}.
  If $(a,c)$ is not a sibling pair of $\dot{F}_2$, proceed to
  Step~\ref{case:hyb:non-sibling}.
\item \label{case:hyb:non-sibling} (Cut edges) Distinguish three
  cases:
  \begin{enumerate}[label=\theenumi.\arabic{*}.,leftmargin=*,ref=\theenumi.\arabic{*},widest=3]
  \item \label{case:hyb:sc} If $a \noreach[F_2] c$, make two recursive
    calls:
    \begin{enumerate}[label=(\roman{*}),widest=iii,leftmargin=*]
    \item $\aalg{F_1, F_2 \div \set{\edge{a}}, k-1}$ with $a$
      tagged with ``$T_2$'' in $F_2 \div \set{\edge{a}}$, and
    \item $\aalg{F_1, F_2 \div \set{\edge{c}}, k-1}$ with $c$
      tagged with ``$T_2$'' in $F_2 \div \set{\edge{c}}$.
    \end{enumerate}
  \item \label{case:hyb:cob} If $a \reach[F_2] c$ and the path from
    $a$ to $c$ in $\dot{F}_2$ has one pendant node $b$, swap the names of
    $a$ and $c$ if necessary to ensure that $b$ is $a$'s sibling.
    Then make three recursive calls (see Figure~\ref{fig:hyb:cases}):
    \begin{enumerate}[label=(\roman{*}),widest=iii,leftmargin=*]
    \item $\aalg{F_1, F_2 \div \set{\edge{b}}, k-1}$
      with $b$ tagged with ``$T_1$'' in $F_2 \div \set{\edge{b}}$,
    \item $\aalg{F_1, F_2 \div \set{\edge{c}}, k-1}$
      with $c$ tagged with ``$T_2$'' in $F_2 \div \set{\edge{c}}$, and
    \item $\aalg{F_1, F_2 \div \set{\edge{a},\edge{c}}, k-2}$
      with $c$ tagged with ``$T_1$'' and $a$ tagged with ``$T_2$'' in
      $F_2 \div \set{\edge{a}, \edge{c}}$.
    \end{enumerate}
  \item \label{case:hyb:cab} If $a \reach[F_2] c$ and the path from
    $a$ to $c$ in $\dot{F}_2$ has $q \ge 2$ pendant nodes $b_1, b_2,
    \ldots, b_q$, make three recursive calls:
    \begin{enumerate}[label=(\roman{*}),widest=iii,leftmargin=*]
    \item $\aalg{F_1, F_2 \div
        \set{\edge{b_1}, \edge{b_2}, \ldots, \edge{b_q}}, k-q}$
      with each node $b_i$, $1 \le i \le q$, tagged with ``$T_1$'' in
      $F_2 \div \set{\edge{b_1}, \edge{b_2}, \dots, \edge{b_q}}$,
    \item $\aalg{F_1, F_2 \div \set{\edge{a}}, k-1}$
      with $a$ tagged with ``$T_2$'' in $F_2 \div \set{\edge{a}}$, and
    \item $\aalg{F_1, F_2 \div \set{\edge{c}}, k-1}$
      with $c$ tagged with ``$T_2$'' in $F_2 \div \set{\edge{c}}$.
    \end{enumerate}
  \end{enumerate}
  Return ``yes'' if one of the recursive calls does; otherwise return
  ``no''.
\end{enumerate}

\begin{figure}[t]
  \centering
  \includegraphics{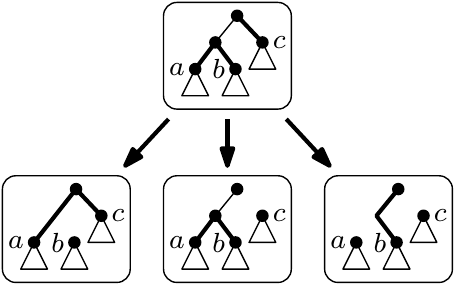}%
  \caption{Case~\ref{case:hyb:cob} of Step~\ref{case:hyb:non-sibling} of the
    rooted MAAF algorithm.
    (Cases~\ref{case:hyb:sc} and~\ref{case:hyb:cab} are as in the rooted MAF
    algorithm and are illustrated in Figure~\ref{fig:spr:cases}).
    Only $\dot{F}_2$ is shown.
    Each box represents a recursive call.}
  \label{fig:hyb:cases}
\end{figure}

To give some intuition behind the choice of tags in
Step~\ref{case:spr:non-sibling}, and as a basis for the correctness proof of
the algorithm, we consider a slightly modified algorithm that produces the same
set of forests:
When cutting an edge $\edge{x}$, $x \in \set{a, c}$,
in Step~\ref{case:hyb:non-sibling}, $x$ becomes the root of a component of $F_2$
that agrees with a subtree of $F_1$.
Hence, the first thing Step~\ref{case:hyb:singleton} of the next recursive call
does is to cut the parent edge of $x$ in $F_1$.
In the modified algorithm, we cut the parent edges of $x$ in \emph{both} $F_1$
and $F_2$ in Step~\ref{case:hyb:non-sibling} instead of postponing the cutting
of $x$'s parent edge in $F_1$ to Step~\ref{case:hyb:singleton} of the next
recursive call.

Now consider a labelled node $x$ of $F_2$, and let $y_1$ and $y_2$ be $x$'s
siblings in $F_1$ and~$F_2$, respectively.
If a case of Step~\ref{case:hyb:non-sibling} cuts the edge $\edge{x}$, $x$
becomes a root and, in the absence of further changes that eliminate $x$, $y_1$
or $y_2$ from the forest, $x$ is the head of a $T_1$-hybrid edge $(y_1,x)$ and
of a $T_2$-hybrid edge $(y_2,x)$, making $y_1$ and $y_2$ potential exit nodes
that may need to be fixed to obtain a certain AAF of $T_1$ and $T_2$.
The first step in fixing a potential exit node is to cut its parent edge, and an
alternate sequence of edge cuts that produce the same AAF starts by cutting this
parent edge instead of~$\edge{x}$.
Thus, if apart from cutting $\edge{x}$, the current case includes a branch that
cuts the parent edge of $y_1$ or $y_2$, we do not have to worry about fixing
this exit node in the branch that cuts~$\edge{x}$---there exits another branch
we explore that has the potential of leading to the same AAF.
To illustrate this idea, consider Case~\ref{case:hyb:sc}.
Here, when we cut~$\edge{a}$, $c$~becomes the tail of the $T_1$-hybrid edge
$(c,a)$ because $a$ and $c$ are siblings in~$F_1$.
Since the other branch of this case cuts $\edge{c}$, we do not have to
worry about fixing $c$ in the branch that cuts $\edge{a}$.
On the other hand, neither of the two cases considers cutting the parent edge
of $a$'s sibling $b$ in $F_2$, which is the tail of the $T_2$-hybrid edge with
head~$a$.
Thus, we need to give the refinement step the opportunity to fix $b$.
We do this by tagging $a$ with ``$T_2$'', which causes the refinement step to
mark $b$.
The same reasoning justifies the tagging of $c$ with ``$T_2$'' in the other
branch of this case and the tagging of $a$ and $c$ with ``$T_2$'' in
Case~\ref{case:hyb:cab}.
The tagging of every node $b_i$ with ``$T_1$'' in Case~\ref{case:hyb:cab}
is equally easy to justify: Cutting an edge $\edge{b_i}$ makes $b_i$ a root
in $F_2$.
Thus, either $b_i$ is itself a root of the final AF we obtain or it
is contracted into such a root $z$ after cutting additional edges; this root $z$
inherits $b_i$'s tag.
At the time we cut edge~$\edge{b_i}$, we do not know which descendant of $b_i$
will become this root $z$, nor whether any branch of our algorithm considers
cutting the parent edge of the tail of $z$'s $T_1$-hybrid edge.
On the other hand, the tail of $z$'s $T_2$-hybrid edge is either $a$ or~$c$,
and we cut their parent edges in the other two branches of
Case~\ref{case:hyb:cab}.
Unfortunately, the tagging in Case~\ref{case:hyb:cob} does not follow the same
intuition and is in fact difficult to justify intuitively.
The proof of Theorem~\ref{thm:maaf:fpt} below shows that the chosen tagging
rules lead to a correct algorithm.

We assume from here on that $\dhyb{T_1, T_2} \le k_0$
because otherwise $\balg{F, k_0}$ returns ``no'' for any AF $F$ the branching
phase may find, that is, the algorithm gives the correct answer when
$\dhyb{T_1, T_2} > k_0$.
Among the AFs of $T_1$ and $T_2$ produced by the branching phase, there may be
several that can be refined to an AAF of $T_1$ and $T_2$
with at most $k+1$ components.
We choose a \emph{canonical AF} $F_C$ from among these AFs.
The proof of
Theorem~\ref{thm:maaf:fpt} below shows that the potential exit nodes in $F_C$
that need to be fixed to obtain such an AAF are marked.
Since $F_C$ is produced by a sequence of recursive calls of procedure
$\aalg{\cdot,\cdot,\cdot}$, we can define $F_C$ by specifying the path to
take from the top-level invocation $\aalg{T_1, T_2, k_0}$ to the invocation
$\aalg{F_1, F_2, k}$ with $F_2 = F_C$.
We use $F_1^i$ and $F_2^i$ to denote the inputs to the $i$th invocation
$\aalgL{F_1^i,F_2^i,k_i}$ along this path.
We also compute an arbitrary numbering of the nodes of $T_1$ and denote
the number of $x \in T_1$ by $\nu(x)$.
This number is used as a tie breaker when choosing
the next invocation along the path of invocations that produce $F_C$.
The first invocation is of course $\aalg{T_1,T_2,k_0}$, that is,
$F_1^0 = T_1$ and $F_2^0 = T_2$.
So assume we have constructed the path up to the $i$th invocation
with inputs $F_1^i$ and $F_2^i$.
The $(i+1)$st invocation is made in Step~\ref{case:hyb:non-sibling} of the
$i$th invocation.
We say an invocation $\aalg{F_1, F_2, k}$ is a \emph{leaf invocation} if
$F_2$ is an AF of $T_1$ and~$T_2$.
Recall the definition of a viable invocation from the beginning of this section
and recall that $\aalg{T_1, T_2, k_0}$ is viable and that every viable
invocation that is not a leaf invocation has a viable child invocation.
If there is only one viable invocation made in
Step~\ref{case:hyb:non-sibling} of the $i$th invocation
$\aalgL{F_1^i, F_2^i, k_i}$, then we choose this invocation as the $(i+1)$st
invocation $\aalgL{F_1^{i+1}, F_2^{i+1}, k_{i+1}}$.
Otherwise we apply the following rules to choose
$\aalgL{F_1^{i+1}, F_2^{i+1}, k_{i+1}}$
from among the viable invocations made in Step~\ref{case:hyb:non-sibling}
of invocation $\aalgL{F_1^i, F_2^i, k_i}$.
We distinguish the three cases of Step~\ref{case:hyb:non-sibling}.

\textbf{Case~\ref{case:hyb:sc}.}
  In this case,
  $\aalgL{F_1^i \div \set{\edge{a}}, F_2^i \div \set{\edge{a}}, k_i - 1}$ and
  $\aalgL{F_1^i \div \set{\edge{c}},\break F_2^i \div \set{\edge{c}}, k_i - 1}$
  are both viable invocations.
  For $x \in \set{a, c}$, let $F_x$ be the agreement forest found by tracing
  a path from
  $\aalgL{F_1^i \div \set{\edge{x}}, F_2^i \div \set{\edge{x}}, k_i -1}$
  to a viable leaf invocation using recursive application of these rules, and
  let $E_x$ be an edge set such that $F_x = T_1 \div E_x$.
  Let $y$ be the sibling of $x$ in $F_1^i$ (i.e., $y = c$ if $x = a$ and vice
  versa).
  Now let $\phi_1(y)$ once again be the LCA in $T_1$ of all labelled
  leaves that are descendants of $y$ in $F_2^i$, and let $\phi_x(y)$ be the
  LCA in $F_x$ of all labelled leaves $l$ that are descendants of $\phi_1(y)$
  in $T_1$ and such that the path from $l$ to $\phi_1(y)$ in $T_1$ does not
  contain an edge in~$E_x$.
  In other words, $\phi_x(y)$ is the node of $F_x$ that $y$ is merged into
  by suppressing nodes during the sequence of recursive calls that produces
  $F_x$ from $F_2^i$.
  Finally, if $\phi_x(y)$ is the root of a component of $F_x$, let
  $\lambda_1(y) := \phi_1(y)$; otherwise let $\lambda_1(y)$ be the LCA in $T_1$
  of all labelled leaves that are descendants of the parent of $\phi_x(y)$
  in~$F_x$.
  In other words, if $\phi_x(y)$ is not a root in $F_x$, then $\lambda_1(y)$ is
  the node in $T_1$ where $\phi_x(y)$ and its sibling in $F_x$ are joined by an
  application of Step~\ref{case:hyb:sibling} in some recursive call on the path
  to~$F_x$.
  Now let $d_1(y) > 0$ be the distance from the root $\rho$ of $T_1$ to
  $\lambda_1(y)$ if $\lambda_1(y) \ne \phi_1(y)$, and $d_1(y) = 0$ otherwise.
  If $d_1(a) > d_1(c)$ or $d_1(a) = d_1(c)$ and $\nu(a) < \nu(c)$, we choose the
  invocation
  $\aalgL{F_1^i \div \set{\edge{a}}, F_2^i \div \set{\edge{a}}, k_i - 1}$
  as the $(i+1)$st invocation, that is, $F_C = F_a$.
  If $d_1(a) < d_1(c)$ or $d_1(a) = d_1(c)$ and $\nu(a) > \nu(c)$, we choose the
  invocation
  $\aalgL{F_1^i \div \set{\edge{c}}, F_2^i \div \set{\edge{c}}, k_i - 1}$ as the
  $(i+1)$st invocation, that is, $F_C = F_c$.
  This is illustrated in Figure~\ref{fig:f_c_sc}.

\textbf{Case~\ref{case:hyb:cob}.}
  In this case, if $\aalgL{F_1^i \div \set{\edge{a}, \edge{c}}, F_2^i \div
    \set{\edge{a}, \edge{c}}, k_i - 2}$ is viable, we choose it as the $(i+1)$st
  invocation.
  If the invocation $\aalgL{F_1^i \div \set{\edge{a}, \edge{c}},\break
  F_2^i \div \set{\edge{a}, \edge{c}}, k_i - 2}$ is not viable, then the
  invocations $\aalgL{F_1^i, F_2^i \div \set{\edge{b}}, k_i - 1}$ and
  $\aalgL{F_1^i \div \set{\edge{c}}, F_2^i \div \set{\edge{c}}, k_i - 1}$
  are both viable.
  In this case, we choose the latter as the $(i+1)$st invocation.

\textbf{Case~\ref{case:hyb:cab}.}
  Since there is more than one viable invocation in this case, at least one
  of the invocations
  $\aalgL{F_1^i \div \set{\edge{a}}, F_2^i \div \set{\edge{a}}, k_i - 1}$ and
  $\aalgL{F_1^i \div \set{\edge{c}}, F_2^i \div \set{\edge{c}},\break k_i - 1}$
  is viable.
  If exactly one of them is viable, we choose it to be the $(i+1)$st
  invocation.
  If both are viable, we define $\lambda_1(a)$ and $\lambda_1(c)$ as in
  Case~\ref{case:hyb:sc}.
  If $\lambda_1(a) \ne \lambda_1(c)$, we choose the $(i+1)$st invocation
  as in Case~\ref{case:hyb:sc}.
  If $\lambda_1(a) = \lambda_1(c)$,
  we define $\lambda_2(x)$ and $d_2(x)$, for $x \in \set{a, c}$, analogously to
  $\lambda_1(x)$ and $d_1(x)$ but using $\phi_2(\cdot)$ and $T_2$
  in place of $\phi_1(\cdot)$ and $T_1$.
  Now we choose the $(i+1)$st invocation as in Case~\ref{case:hyb:sc} but
  using $d_2(\cdot)$ instead of $d_1(\cdot)$.

\begin{figure}[t]
  \hspace*{\stretch{1}}%
  \subfigure[\unskip]{\includegraphics{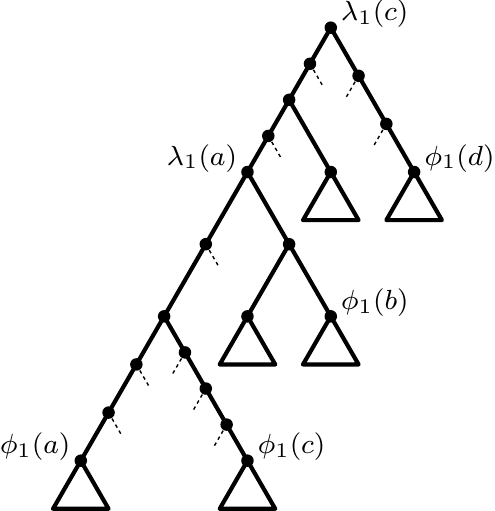}}%
  \hspace*{\stretch{1}}%
  \subfigure[\unskip]{\includegraphics{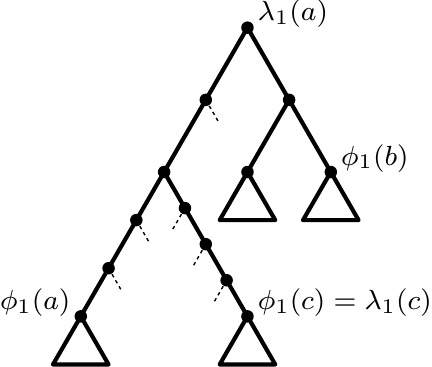}}%
  \hspace*{\stretch{1}}%
  \caption{Two applications of Case~\ref{case:hyb:sc} where we choose the
    invocation
    $\aalgL{F_1^i \div \set{\edge{a}}, F_2^i \div \set{\edge{a}},\break k_i - 1}$
    on the path to $F_C$.
    Both figures show the relevant portion of $T_1$.
    Dotted edges have been removed to obtain $F_1^i$, making $a$ and $c$
    siblings in $F_1^i$.
    The bold portion of $T_1$ yields $F_1^i$.
    Node $b$ is $a$'s sibling in $F_c$.
    In Figure~(a), $d$ is $c$'s sibling in $F_a$, and the highest node
    $\lambda_1(c)$ on the path from $\phi_1(c)$ to $\phi_1(d)$ is an
    ancestor of the highest node $\lambda_1(a)$ on the path from
    $\phi_1(a)$ to $\phi_1(b)$.
    Hence, $d_1(a) > d_1(c)$.
    In Figure (b), $\phi_a(c)$ is assumed to be a root of $F_a$.
    Hence $\phi_1(c) = \lambda_1(c)$ and $d_1(a) > 0 = d_1(c)$.}
  \label{fig:f_c_sc}
\end{figure}

\begin{lemma}
  \label{lem:f_c_refinement}
  If $\dhyb{T_1, T_2} \le k_0$, then $F_C$ can be refined to an AAF of $T_1$ and
  $T_2$ with at most $k_0+1$ components by fixing a subset of the marked
  potential exit nodes in $F_C$.
\end{lemma}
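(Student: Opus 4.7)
The plan is to show that one can refine $F_C$ to an AAF of $T_1$ and $T_2$ with at most $k_0+1$ components by iteratively fixing exit nodes, and that each exit node encountered is in fact a \emph{marked} potential exit node of $F_C$. The starting point is a combination of Corollary~\ref{cor:fix_exit} and Lemma~\ref{lem:same_exit}: if $F_C$ is not already acyclic, then $\ecyc{F_C}$ contains a cycle, and Corollary~\ref{cor:fix_exit} guarantees that for at least one essential component of that cycle, fixing its exit node reduces $\aecut{T_1,T_2,\cdot}$ by the length of the corresponding path. Lemma~\ref{lem:same_exit} then says the potential exit nodes of the resulting forest are a subset of those of $F_C$. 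So by induction on $\aecut{T_1,T_2,F_C}$, it suffices to show that whenever such a cycle exists, at least one of the essential components of that cycle has its exit node already marked in $F_C$.

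First I would reduce to the following local statement: for every component root $w$ of $F_C$ and every hybrid edge $(u,w)$ in $\ecyc{F_C}$ that participates in a cycle as the entry edge of an essential component, the edge $(u,w)$ has $w$'s tag matching the edge type (i.e., $w$ is tagged ``$T_i$'' when $(u,w)$ is $T_i$-hybrid), so that $u$ is marked. To prove this, I would trace back through the path of invocations producing $F_C$, find the unique invocation at which $w$ became a component root (this happens in some branch of Step~\ref{case:hyb:non-sibling}, or as a consequence of $\rho$), and do a case analysis on which case of Step~\ref{case:hyb:non-sibling} was used. In each case, $w$'s tag is fixed by the algorithm, and I must argue that if the wrong tag were attached to $w$ (so that $u$ is not marked), then we could either (i) use the symmetric sibling branch or the $\set{\edge{a},\edge{c}}$ branch of Case~\ref{case:hyb:cob} to reach the same final AAF, or (ii) exhibit an alternative viable invocation at that split that is strictly preferred by the canonical selection rules (via viability, via the $\lambda_i,d_i$ measures, or via the tie-breaker $\nu$), contradicting the fact that $F_C$ was built by following the canonical path.

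The workhorse of the case analysis will be the observation underlying the tagging scheme: when $\edge{x}$ is cut in Step~\ref{case:hyb:non-sibling}, the node $x$ becomes a root whose two hybrid-edge tails are its siblings in $F_1$ and in $F_2$. If the sibling on the $T_j$-side ($j \in \{1,2\}$) is itself the subject of an edge cut in a companion branch of the same case, then $x$'s exit node on that side will, in the resulting AF from that companion branch, already be a component root and thus not need to be fixed there; the tag on $x$ is set so that the refinement phase only needs to mark the $T_i$-side sibling where no such companion cut exists. The preferential choice of the $\set{\edge{a},\edge{c}}$ branch in Case~\ref{case:hyb:cob} (when viable) and the $\lambda_i/d_i$-based tie-breaker in Cases~\ref{case:hyb:sc} and~\ref{case:hyb:cab} ensure that the companion branch we might want to ``swap to'' is not itself viable, which pins down exactly which tag $w$ must carry in $F_C$.

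The hardest part, as expected, will be Case~\ref{case:hyb:cob}, where the tags chosen in the three branches do not follow the same symmetric pattern as in Cases~\ref{case:hyb:sc} and~\ref{case:hyb:cab}, and where the preference for the $\set{\edge{a},\edge{c}}$ branch must be carefully coupled with Lemma~\ref{lem:hyb:cob} to rule out the pathological scenario in which only the $\edge{b}$-branch is viable but the needed exit node of its root is the unmarked one. Handling this will require tracing $\phi_1,\phi_2$ through the suppressions triggered by cutting $\edge{b}$ and showing that the resulting essential components and exit nodes of any cycle in $\ecyc{F_C}$ are exactly those the tagging of $b$ with ``$T_1$'' was designed to mark. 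Once that case is settled, the induction closes, and the lemma follows.
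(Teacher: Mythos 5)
Your overall strategy---reduce to a statement about which hybrid-edge tails must be marked in $F_C$, then trace back along the canonical path of invocations and derive contradictions from viability and the $\lambda_i/d_i/\nu$ selection rules---is in the same spirit as the paper's argument, but your reduction target is too strong, and this is a genuine gap. You ask that the exit node $u$ itself (the tail of the hybrid edge $(u,w)$ used by the cycle) be marked. The tagging scheme assigns each component root $w$ a single tag, so only \emph{one} of $w$'s two hybrid-edge tails is ever marked on account of $w$, and nothing forces it to be the one a given cycle needs. What is actually true, and all that is needed, is the disjunctive version: the required node \emph{or its sibling in $F_C$} is marked, which suffices because fixing a node and fixing its sibling produce the same forest (both cut away the entire path from their common parent to the component root). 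The paper's proof leans on this repeatedly: e.g., in the second sub-case of Case~\ref{case:hyb:cab} it only concludes that ``$x$ or $x'$ is marked,'' and in Case~\ref{case:hyb:cob} the contradiction for branch~(iii) runs through $x=b$ being the \emph{sibling} of the tagged node $a$. Without the ``or its sibling'' relaxation several branches of your case analysis cannot be closed, so the local statement you propose to prove is, as stated, not provable.

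There is a second structural problem with the induction you set up. You fix one exit node at a time, invoking Corollary~\ref{cor:fix_exit} on the current forest $F_i$ and Lemma~\ref{lem:same_exit} to conclude that the node fixed is a potential exit node of $F_C$. But your local statement is about essential components of cycles in $\ecyc{F_C}$, whereas after the first fix the relevant cycles live in $\ecyc{F_i}$; Lemma~\ref{lem:same_exit} controls only the \emph{set} of potential exit nodes, not which of them are exit nodes of essential components of the new cycles, and the intermediate forests are not outputs of the branching phase, so they carry no tags of their own. Hence the local statement does not apply to the nodes your induction actually needs to fix at later stages. The paper avoids this by collapsing the whole iterative process up front into a single edge set $E$ (a union of exit-node-to-root paths, justified exactly by Corollary~\ref{cor:fix_exit} and Lemma~\ref{lem:same_exit}) and then proving the static claim that every edge of $E$ lies on the root-path of some node that is marked or has a marked sibling, via a minimal-counterexample argument on the last pendant edge $\edge{y}$ cut on the path between an offending node $u$ and its sibling $u'$. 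To repair your proof you would need to either adopt that global formulation or prove an invariant relating the cycles of each $\ecyc{F_i}$ back to marked nodes of $F_C$; neither is a routine addition to what you have written.
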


\begin{proof}
  Let $E$ be an edge set such that $F' := F_C \div E$ is an AAF of $T_1$ and
  $T_2$ with at most $k_0+1$ components.
  By Corollary~\ref{cor:fix_exit}, we can assume $E$ is the union of paths
  from a subset of potential exit nodes to the roots of their respective
  components in~$F_C$.
  These potential exit nodes may or may not be marked.
  Now let $M$ be the set of nodes $m \in F_C$ such that every edge on the path
  from $m$ to the root of its component in $F_C$ is in $E$ and $m$ or its
  sibling in $F_C$ is marked.
  We say an edge is \emph{marked} if it belongs to the path from a node
  $m \in M$ to the root of its component, that is, if it is removed by
  fixing this node $m$.
  Next we prove that all edges in $E$ are marked.
  Since fixing a node or its sibling in $F_C$ results in the same forest
  and every node in $m$ is itself marked or has a marked sibling, this implies
  that there exists a subset of \emph{marked}
  potential exit nodes such that fixing them produces $F'$, that is,
  the refinement step applied to $F_C$ finds $F'$.

  Assume for the sake of contradiction that there is an unmarked edge in
  $E$.
  Since all ancestor edges of a marked edge are themselves marked,
  this implies that there exists a potential exit node $u \in \ecyc{F_C}$ whose
  parent edge $\edge{u}$ belongs to $E$ but is not marked, which in turn implies
  that neither $u$ nor its sibling $u'$ in $F_C$ is marked.
  The sequence of invocations that produce $F_C$ from $T_1$ and $T_2$ gives rise
  to a sequence of edges the algorithm cuts to produce $F_C$.
  For a step that cuts more than one edge, we cut these edges one by one.
  For Step~\ref{case:hyb:singleton} and branch~(i) of Case~\ref{case:hyb:cab},
  this ordering is chosen arbitrarily.
  For every branch of Step~\ref{case:hyb:non-sibling} that cuts an edge
  $\edge{x}$ with $x \in \set{a,c}$, we choose the ordering so that the parent
  edge of $x$ in $F_1$ is cut immediately after cutting $\edge{x}$ in $F_2$.
  Finally, in branch~(iii) of Case~\ref{case:hyb:cob}, we cut $\edge{c}$
  after $\edge{a}$.
  In the remainder of this proof, we use $F_1^i$ and $F_2^i$ to refer to the
  forests obtained from $T_1$ and $T_2$ after cutting the first $i$ edges.
  (This is a slight change of notation from the definition of~$F_C$,
  where we used $F_1^i$ and $F_2^i$ to denote the forests passed as arguments
  to the $i$th invocation.)
  Since $F_C$ is a refinement of $F_1^i$ and $F_2^i$, every node $x \in F_C$
  maps to the lowest node $y$ in $F_j^i$ such that the labelled descendant
  leaves of $x$ in $F_C$ are descendants of $y$ in $F_j^i$.
  This is analogous to the mappings $\phi_1(\cdot)$ and $\phi_2(\cdot)$ from
  $F_C$ to $T_1$ and $T_2$.
  To avoid excessive notation, we refer to the nodes in $F_1^i$ and $F_2^i$
  a node $x \in F_C$ maps to simply as~$x$.

  With this notation, the common parent $\parent{u}$ of $u$ and $u'$ in $F_C$
  is the lowest common ancestor of both nodes in any forest $F_j^i$.
  Since $u$ is a potential exit node of $F_C$, there is at least one
  hybrid edge in $\ecyc{F_C}$ induced by cutting a pendant edge of the path
  from $u$ to $p_u$ in some forest $F_j^i$.
  There may also be a hybrid edge induced by cutting a pendant edge of the
  path from $u'$ to $p_u$ in some forest~$F_j^i$.
  Either of these two types of edges are pendant to the path from $u$ to
  $u'$ in~$F_j^i$.
  Let $i$ be the highest index such that the $i$th edge we cut is pendant to
  the path from $u$ to $u'$ in $F_1^{i-1}$ or~$F_2^{i-1}$,
  and let $\edge{y}$ be this edge.
  Let $j \in \set{1, 2}$ so that we cut $\edge{y}$ in $F_j^{i-1}$.
  Since $u$ and $u'$ are siblings in $F_C$,
  the choice of index $i$ implies that $u$ and $u'$ are siblings in $F_1^i$
  and $F_2^i$.
  In particular, $y$ is the only pendant edge of the path from $u$ to $u'$ in
  $F_j^{i-1}$ and either $u$ or $u'$ is $y$'s sibling in~$F_j^{i-1}$.
  We use $x$ to refer to this sibling, and $x'$ to refer to $x$'s sibling in
  $F_C$ (that is, $x' = u'$ if $x = u$ and vice versa).
  We make two observations about $x$, $x'$, and $y$:
  \begin{enumerate}[label=(\roman{*}),leftmargin=0pt,itemindent=42pt]
  \item Since fixing a node in $F_C$ or its sibling produces the same forest,
    $F'$ can be obtained from $F_C$ by fixing a subset of nodes that
    includes $x$ or $x'$.
    In particular, $\aecutL{T_1, T_2, F_j^i \div \set{\edge{x}}}
    = \aecutL{T_1, T_2, F_j^i \div \set{\edge{x'}}}
    = \aecutL{T_1, T_2, F_j^i} - 1$, for $j \in \set{1, 2}$.
  \item Since edge $\edge{u}$ is not marked, neither $u$ nor $u'$ is marked,
    that is, $x$ is not marked in $F_C$ and, hence, $y$ is not tagged with
    ``$T_j$'' in $F_C$.
  \end{enumerate}
  Now we examine each of the steps that can cut $\edge{y}$ and prove that
  these observations lead to a contradiction.
  Thus, $E$ cannot contain an unmarked edge, and the lemma follows.

  First assume $\edge{y}$ belongs to $F_1^{i-1}$.
  Then $\edge{y}$ is cut by an application of Step~\ref{case:hyb:singleton} or
  $\edge{y}$ is the parent edge in $F_1^{i-1}$ of a node $y \in \set{a, c}$
  whose parent edge in $F_2^{i-2}$ is the $(i-1)$st edge we cut.
  First assume the former.
  Then $y$ is a root in $F_2^{i-1}$, which implies that there exists an $i' < i$
  such that the $i'$th edge we cut is an edge $\edge{z}$ in $F_2^{i'-1}$ such
  that $z$ is an ancestor of $y$ in $F_2^{i'-1}$ and $z$ is a node $b$ or $b_i$
  in this application of Step~\ref{case:hyb:non-sibling}.
  We choose the maximal such $i'$.
  This implies that no edge on the path from $y$ to $z$ is cut by any subsequent
  step.
  Indeed, if we cut such an edge in a forest $F_2^{i''-1}$, for $i' < i'' < i$,
  it would have to be an edge $\edge{z'}$ with $z' \in \set{a, c}$, by the
  choice of $i'$.
  If $z' = y$, then $\edge{y}$ would be cut in Step~\ref{case:hyb:non-sibling};
  if $z' \ne y$, then $\edge{y}$ would belong to a subtree of $F_1^{i''-1}$
  whose root is the member of a sibling pair, and $\edge{y}$ would never be cut.
  In either case, we obtain a contradiction.
  Now observe that any case of Step~\ref{case:hyb:non-sibling} that cuts an
  edge $\edge{b}$ or $\edge{b_i}$ tags $b$ or $b_i$ with
  ``$T_1$'', that is, $z$~is tagged with ``$T_1$'' immediately after cutting
  $\edge{z}$.
  Since we have just argued that no edges are cut on the path from $z$ to $y$
  and $y$ is a root in $F_2^{i-1}$, our rules for maintaining tags when
  suppressing nodes imply that $y$ inherits $z$'s ``$T_1$'' tag, a
  contradiction.

  Now suppose $\edge{y}$ belongs to $F_1^{i-1}$ and is cut in
  Step~\ref{case:hyb:non-sibling}, that is, $y \in \set{a, c}$.
  Since Step~\ref{case:hyb:non-sibling} cuts an edge in $F_1$ immediately
  after cutting the corresponding edge in $F_2$, the $(i-1)$st edge
  we cut is $y$'s parent edge in $F_2^{i-2}$.
  If $\edge{y}$ is cut by an application of Case~\ref{case:hyb:sc},
  assume w.l.o.g.\ that $y = c$ and, hence, $x = a$.
  Since the invocation $\aalgL{F_1^{i-2}, F_2^{i-2}, k}$ that cuts $\edge{y}$
  is viable and
  $\aecutL{T_1, T_2, F_2^{i-2} \div \set{\edge{x}}} =
  \aecutL{T_1, T_2, F_2^{i-2}} - 1$, the invocation
  $\aalgL{F_1^{i-2} \div \set{\edge{x}}, F_2^{i-2} \div \set{\edge{x}}, k - 1}$
  is also viable.
  Since we apply Case~\ref{case:hyb:sc}, $x$ and $x'$ are siblings in
  $F_C$, and $F_C$ is a refinement of $F_2^{i-2}$, we have
  $x' \reach[F_2^{i-2}] x \noreach[F_2^{i-2}] y$.
  Since $F_x$ is also a refinement of $F_2^{i-2}$,
  this implies that $x' \noreach[F_x] y$.
  In particular, $x'$ and $y$ are not siblings in $F_x$.
  Since $\edge{y}$ is the only pendant edge of the path from $x$ to $x'$
  in~$F_1^{i-2}$, this implies that either $y$ is a root
  in $F_x$ or its parent in $F_x$ is a proper ancestor in $F_1^{i-2}$ of the
  common parent of $x$ and $x'$ in $F_y = F_C$.
  In both cases, $d_1(y) < d_1(x)$, contradicting that we chose the invocation
  $\aalgL{F_1^{i-2} \div \set{\edge{y}}, F_2^{i-2} \div \set{\edge{y}}, k-1}$
  instead of the invocation $\aalgL{F_1^{i-2} \div \set{\edge{x}},
    F_2^{i-2} \div \set{\edge{x}}, k-1}$ on the path to~$F_C$.

  If $\edge{y}$ is cut by an application of Case~\ref{case:hyb:cob}, $y$ is
  tagged with ``$T_1$'' unless $y = a$ and we apply the third branch of this
  case, or $y = c$ and we apply the second branch of this case.
  If $y = a$ and we apply the third branch, then $x = c$ and
  the $(i+2)$nd edge we cut is edge $\edge{c}$ in $F_1^{i+1}$, which contradicts
  that $x = c$ has a sibling in $F_C$.
  If $y = c$ and we apply the second branch of this case, then $x = a$.
  However,
  since $\aecut{T_1, T_2, F_C \div \set{\edge{x}}} = \aecut{T_1, T_2, F_C} - 1$
  and we cut edge $\edge{y}$ to obtain $F_C$ from $F_2^{i-2}$, we have in fact
  $\aecutL{T_1, T_2, F_2^{i-2} \div \set{\edge{a}, \edge{c}}} =
  \aecutL{T_1, T_2, F_2^{i-2}} - 2$, that is, the invocation
  $\aalgL{F_1^{i-2} \div \set{\edge{a}, \edge{c}}, F_2^{i-2} \div
    \set{\edge{a}, \edge{c}}, k-2}$ is viable.
  This contradicts that we chose the invocation
  $\aalgL{F_1^{i-2} \div \set{\edge{c}}, F_2^{i-2} \div \set{\edge{c}}, k-1}$ as
  the next invocation on the path to~$F_C$.

  Finally, suppose $\edge{y}$ is cut by an application of
  Case~\ref{case:hyb:cab}.
  If $x'$ and $y$ are not siblings in $F_x$, then the same argument as for
  Case~\ref{case:hyb:sc} leads to a contradiction to the choice of $F_C$.
  So assume that $x'$ and $y$ are siblings in $F_x$, that is, that
  $d_1(x) = d_1(y)$.
  Since $\edge{y}$ is the last pendant edge of the path from $x$ to $x'$
  in either of the two forests $F_1$ and $F_2$, $x$ and $x'$
  are siblings in $F_2^{i-1}$.
  This implies that either $x$ and $x'$ are siblings also in $F_2^{i-2}$
  or $\edge{y}$ is the only pendant edge of the path from $x$ to $x'$
  in $F_2^{i-2}$.
  In the first case, we have $d_2(y) < d_2(x)$, contradicting that we chose
  the invocation
  $\aalgL{F_1^{i-2} \div \set{\edge{y}}, F_2^{i-2} \div \set{\edge{y}}, k-1}$ on
  the path to $F_C$, even though the invocation
  $\aalgL{F_1^{i-2} \div \set{\edge{x}}, F_2^{i-2} \div {\edge{x}}, k - 1}$ is
  viable.
  In the second case, cutting $\edge{y}$ in $F_2^{i-2}$ tags $y$ with
  ``$T_2$''.
  Since $y$ is the sibling of $x$ or $x'$ in $F_2^{i-2}$, this implies that
  $x$ or $x'$ is marked in~$F_C$, again a contradiction.

  Finally, assume $\edge{y}$ belongs to $F_2^{i-1}$.
  Then $\edge{y}$ is cut by an application of Case~\ref{case:hyb:cob} or
  Case~\ref{case:hyb:cab} because Case~\ref{case:hyb:sc} tags the bottom
  endpoint of each edge it cuts with ``$T_2$'', contradicting that $y$ is
  not tagged with~``$T_2$''.

  In Case~\ref{case:hyb:cob}, $\edge{y}$ is either $\edge{b}$ or $\edge{c}$
  because, when edge $\edge{a}$ is cut, $a$ is tagged with ``$T_2$''.
  First suppose $\edge{y} = \edge{b}$.
  Since $\edge{y}$ is the last pendant edge of the path from $x$ to $x'$ we cut
  in either of the two forests $F_1$ and $F_2$, we have $x=a$ and $x'=c$.
  However, since the current invocation $\aalgL{F_1^{i-1}, F_2^{i-1}, k}$
  is viable and $\aecutL{T_1, T_2, F_2^{i-1} \div \set{\edge{c}}} =
  \aecutL{T_1, T_2, F_2^{i-1} \div \set{\edge{x'}}} =
  \aecutL{T_1, T_2, F_2^{i-1}} - 1$, the invocation
  $\aalgL{F_1^{i-1} \div \set{\edge{c}}, F_2^{i-1} \div \set{\edge{c}}, k-1}$ is
  also viable, which contradicts that we chose the invocation
  $\aalgL{F_1^{i-1}, F_2^{i-1} \div \set{\edge{b}}, k-1}$ as the next invocation
  on the path to~$F_C$.

  If $\edge{y} = \edge{c}$, it must be an application of branch~(iii) of
  Case~\ref{case:hyb:cob} that cuts $\edge{y}$ because branch~(ii) tags $c$
  with ``$T_2$''.
  In this case, $x = b$ because we cut $\edge{a}$ before $\edge{c}$.
  Then, however, $b$ is $a$'s sibling in $F_2^{i-3}$ and the tail of $a$'s
  $T_2$-hybrid edge.
  Since $a$ is tagged with ``$T_2$'' in this case, this implies that $x = b$ is
  marked in $F_C$, a contradiction.

  In Case~\ref{case:hyb:cab} we tag $a$ or $c$ with ``$T_2$''.
  So $\edge{y}$ must be $\edge{b_h}$, for some pendant edge $\edge{b_h}$ of the
  path from $a$ to $c$ in $F_2^{i-q}$.
  Along with the fact that $\edge{y}$ is the last pendant edge of the path from
  $x$ to $x'$ we cut, this implies that $x=c$ or $x'=c$.
  Since the invocation $\aalgL{F_1^{i-q}, F_2^{i-q}, k}$ that cuts edges
  $b_1, b_2, \dots, b_q$ is viable and
  $\aecutL{T_1, T_2, F_2^{i-q} \div \set{\edge{x}}} =
  \aecutL{T_1, T_2, F_2^{i-q} \div \set{\edge{x'}}} =
  \aecutL{T_1, T_2, F_2^{i-q}} - 1$,
  the invocation $\aalgL{F_1^{i-q} \div \set{\edge{c}}, F_2^{i-q} \div
    \set{\edge{c}}, k-1}$ is also viable, contradicting that we chose the
  invocation $\aalgL{F_1^{i-q}, F_2^{i-q} \div
    \set{\edge{b_1}, \edge{b_2}, \dots, \edge{b_q}}, k-q}$
  as the next invocation on the path to~$F_C$.\qquad
\end{proof}

By Lemma~\ref{lem:f_c_refinement}, the algorithm returns ``yes'' if
$\dhyb{T_1, T_2} \le k_0$, and it cannot return ``yes'' if
$\dhyb{T_1, T_2} > k_0$.
Thus, our MAAF algorithm is correct.
Case~\ref{case:hyb:cob} makes an additional recursive call compared to the
algorithm from \S\ref{sec:naive_refinement}, but the number of recursive
calls in this case is still given by the recurrence $I(k) = 2I(k-1) + I(k-2)$,
which is also the worst case of Case~\ref{case:hyb:cab} in the MAF algorithm
(see Lemma~\ref{lem:invocations}).
Thus, the number of recursive calls made during the branching phase of the
algorithm remains $\OhL{2.42^{k_0}}$.
Since at most $k_0$ of the potential exit nodes of an AF $F$ found during the
branching phase are marked (one per root of $F$ other than $\rho$),
$\balg{F, k_0}$ takes $\OhL{2^{k_0} n}$ time to test whether fixing any subset
of these marked potential exit nodes yields an AAF of $T_1$ and $T_2$ with at
most $k_0 + 1$ components.
Thus, the total running time of the algorithm is
$\OhL{2.42^k\parensL{n + 2^k n}} = \OhL{4.84^k n}$, and we obtain the following
theorem.

\begin{theorem}
  \label{thm:maaf:fpt}
  For two rooted $X$-trees $T_1$ and $T_2$ and a parameter $k_0$, it
  takes $\OhL{4.84^{k_0} n}$ time to decide whether $\aecut{T_1, T_2, T_2} \le
  k_0$.
\end{theorem}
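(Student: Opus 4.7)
The plan is to assemble the ingredients already established earlier in this section, since almost all the work has been done in Lemmas~\ref{lem:hyb:cob}, \ref{lem:cyc=ecyc}, \ref{lem:fix_exit}, Corollary~\ref{cor:fix_exit}, Lemma~\ref{lem:same_exit}, and Lemma~\ref{lem:f_c_refinement}. What remains is a correctness statement plus a running-time accounting. First, I would argue correctness. If $\aecut{T_1, T_2, T_2} > k_0$, then no sequence of at most $k_0$ edge cuts in $T_2$ produces an AAF, so every $\balg{F_2, k_0}$ invocation returns ``no'' and the overall algorithm correctly returns ``no''. If $\aecut{T_1, T_2, T_2} \le k_0$, the viability argument preceding Lemma~\ref{lem:hyb:cob} (together with Lemmas~\ref{lem:spr:sc}, \ref{lem:spr:cab}, and~\ref{lem:hyb:cob} for the three cases of Step~\ref{case:hyb:non-sibling}) guarantees that the branching phase reaches a leaf invocation in which $F_2$ is the canonical AF $F_C$; Lemma~\ref{lem:f_c_refinement} then ensures that $\balg{F_C, k_0}$ finds a subset of marked potential exit nodes whose fixing yields an AAF of $T_1$ and $T_2$ with at most $k_0+1$ components, so the algorithm returns ``yes''.

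Second, I would bound the branching phase. The only structural change from the MAF algorithm of \S\ref{sec:fpt} is that Case~\ref{case:hyb:cob} now spawns three recursive calls (with parameter drops of $1$, $1$, $2$) rather than one. Letting $I(k)$ denote the number of invocations of $\aalg{\cdot,\cdot,\cdot}$ spawned by an invocation with parameter $k$, inspection of the three cases gives the same worst-case recurrence as in Lemma~\ref{lem:invocations},
\begin{equation*}
I(k) \le 1 + 2 I(k-1) + I(k-2),
\end{equation*}
dominated by Case~\ref{case:hyb:cab}, which solves to $I(k_0) = \OhXL{(1+\sqrt{2})^{k_0}} = \OhL{2.42^{k_0}}$. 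Each such invocation still runs in linear time: the tagging and the extra branch in Case~\ref{case:hyb:cob} add only constant work per invocation, so Lemma~\ref{lem:maf:time} applies unchanged.

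Third, I would bound the refinement phase. For each leaf invocation, we compute $\ecyc{F_2}$ in linear time by Lemma~\ref{lem:build-ecyc}, identify the potential exit nodes as the tails of hybrid edges, and mark those whose mate head carries a matching tag. Because each component root other than $\rho$ contributes at most one marked potential exit node and $F_2$ has at most $k_0+1$ components, at most $k_0$ nodes are marked. For each of the $\le 2^{k_0}$ subsets of marked nodes we fix the subset, suppress degree-2 vertices, rebuild $\ecyc{}$, and check acyclicity together with the component count; by Lemma~\ref{lem:build-ecyc} and Lemma~\ref{lem:cyc=ecyc} each such test takes $\Oh{n}$ time. Thus $\balg{F_2, k_0}$ runs in $\OhL{2^{k_0} n}$ time, and by Lemma~\ref{lem:f_c_refinement} it returns ``yes'' on at least the canonical $F_C$ whenever $\aecut{T_1, T_2, T_2} \le k_0$.

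Combining the two phases, the total running time is
\begin{equation*}
\OhL{2.42^{k_0}\bigl(n + 2^{k_0} n\bigr)} = \OhL{(2.42 \cdot 2)^{k_0} n} = \OhL{4.84^{k_0} n},
\end{equation*}
as claimed. The only delicate point in this assembly is the correctness of the marking scheme in the refinement phase; that subtlety, however, is entirely encapsulated in Lemma~\ref{lem:f_c_refinement}, so the proof of Theorem~\ref{thm:maaf:fpt} itself reduces to the routine bookkeeping above.
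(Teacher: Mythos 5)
Your proposal is correct and follows essentially the same route as the paper: correctness via the viability argument and Lemma~\ref{lem:f_c_refinement}, the branching phase bounded by the same recurrence $I(k) \le 1 + 2I(k-1) + I(k-2)$ giving $\OhL{2.42^{k_0}}$ invocations, and the refinement step bounded by $\OhL{2^{k_0} n}$ via the at most $k_0$ marked potential exit nodes. The combination into $\OhL{4.84^{k_0} n}$ matches the paper's own argument.
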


\subsection{Improved Refinement and Analysis}

\label{sec:combinations}

The algorithm we have developed so far finds a set of agreement
forests with marked potential exit nodes such that at least one of
these AFs $F$ can be refined to an MAAF $F'$ by fixing a subset of the
marked exit nodes in $F$. The algorithm then fixes every subset of these
marked potential exit nodes for each agreement forest it finds.
If $k'$ is the number of edges we cut to obtain $F$, there are $k'$ marked
potential exit nodes and $2^{k'}$ subsets of marked potential exit nodes to
check.
When $k'$ is small, the resulting time bound of $\OhL{2^{k'}n}$ for the
refinement step is substantially better than the bound of $\OhL{2^k n}$ obtained
using the naive upper bound of $k' \le k$ we used so far.
For large values of~$k'$, we observe that $F$ has $k'+1$
components because we always cut edges in a fully contracted forest (i.e., a
forest without degree-2 vertices other than its component roots).
When fixing a set of $k''$ potential exit nodes in the refinement step,
we cut at least $k''$ edges, and this increases the number of connected
components by at least $k''$, again because we cut edges along paths in fully
contracted forests.
Thus, if $k' + k'' > k$, we cannot possibly obtain an AAF with at most
$k+1$ components: the refinement step applied to $F$ needs to consider
only subsets of at most $k'' := k - k'$ potential exit nodes.
Since there are $k'$ marked potential exit nodes to choose from, this reduces
the running time of the refinement step applied to such a forest $F$ to
$\OhXL{\sum_{j=0}^{k''} \parensXL{\!\genfrac{}{}{0pt}{}{k'}{j}\!}n}$.
For large values of $k'$, $k''$~is small and the sum is significantly
less than $\OhL{2^{k'} n} = \OhL{2^k n}$.
Thus, we obtain a substantial improvement of the running time of the refinement
step also in this case, without affecting its correctness.
In summary, the only change to the MAAF algorithm from \S\ref{sec:marking}
we make in this section is to inspect all subsets of at most $k''$ marked
potential exit nodes in the refinement step, where $k'' := \min(k', k - k')$.

To analyze the running time of our algorithm using this improved refinement
step, we split each refinement step into several refinement steps.
A refinement step that tries all subsets of between $0$ and $k''$
marked potential exit nodes is replaced with $k''+1$ refinement steps:
for $0 \le j \le k''$, the $j$th such refinement step tries all subsets of
exactly $j$ marked potential exit nodes.
Its running time is therefore
$\OhXL{\parensXL{\!\genfrac{}{}{0pt}{}{k'}{j}\!} n}$, and the total cost
of all refinement steps remains unchanged.
Now we partition the refinement steps invoked for the different AFs
found during the branching phase into $k+1$ groups.
For $0 \le h \le k$, the $h$th group contains a refinement step applied to
an agreement forest $F$ if the number $k'$ of edges cut to obtain $F$ and
the size $j$ of the subsets of marked potential exit nodes the refinement
step tries satisfy $k' + j = h$.
We prove that the total running time of all refinement steps in the $h$th group
is $\OhL{3.18^h n}$.
Hence, the total running time of all refinement steps is
$\OhXL{\sum_{h=0}^k 3.18^h n} = \OhL{3.18^k n}$, which dominates the
$\OhL{2.42^k n}$ time bound of the branching phase, that is, the running time
of the entire MAAF algorithm is $\OhL{3.18^k n}$.

Now consider the tree of recursive calls made in the branching phase.
Since a given invocation $\aalg{F_1, F_2, k''}$ spawns further recursive calls
only if $F_2$ is not an AF of $T_1$ and $T_2$, and we invoke
the refinement step on $F_2$ only if $F_2$ is an AF of $T_1$ and
$T_2$, refinement steps are invoked only from the leaves of this recursion tree.
Moreover, since every refinement step in the $h$th group satisfies $k' + j = h$
and, hence, $k' \le h$, refinement steps in the $h$th group can be invoked
only for agreement forests that can be produced by cutting at most $h$
edges in $T_2$.
Thus, to bound the running time of the refinement steps in the $h$th group,
we can restrict our attention to the subtree of the recursion tree
containing all recursive calls $\aalg{F_1, F_2, k''}$ such that $F_2$
can be obtained from $T_2$ by cutting at most $h$ edges, that is,
$k'' \ge d := k - h$.
Since we want to obtain an upper bound on the cost of the refinement steps
in the $h$th group, we can assume that the shape of this subtree and the
set of refinement steps invoked from its leaves are such that the total cost
of the refinement steps is maximized.
We construct such a worst-case recursion tree for the refinement steps in
the $h$th group in two steps.

First we construct a recursion tree without refinement
steps and such that, for each $d \le k'' \le k$, the number of invocations
with parameter $k''$ in this tree is maximized.
As in the proof of Lemma~\ref{lem:invocations}, this is the case if each
recursive call with parameter $k'' \ge d + 2$ makes three recursive calls, two
with parameter $k''-1$ and one with parameter $k''-2$, and each recursive
call with parameter $k'' = d+1$ makes two recursive calls with parameter
$k''-1$.
As in the proof of Lemma~\ref{lem:invocations}, this implies that every
recursive call with parameter $k''$ has a tree of
$\ThehtaXL{\parensL{1 + \sqrt{2}}^{k''-d}}$ recursive calls below it,
and the size of the entire tree is
$\OhXL{\parensL{1 + \sqrt{2}}^{k-d}} = \OhXL{\parensL{1 + \sqrt{2}}^h}$.
The second step is to choose a subset of recursive calls in this tree
for which we invoke the refinement step \emph{instead of} spawning further
recursive calls, thereby turning them into leaves.
In effect, for each such node with parameter $k''$, we replace its subtree
of $\ThehtaXL{\parensL{1 + \sqrt{2}}^{k''-d}}$ recursive calls with a single
refinement step of cost $\OhXL{\parensXL{\!\genfrac{}{}{0pt}{}{k'}{j}\!}n}$,
where $k' := k - k'' = h + d - k''$ and $j := h - k' = k'' - d$.
By charging the cost of this refinement step equally to the nodes in the removed
subtree, each node in this subtree is charged a cost of
$\ThehtaXL{\parensXL{\!\genfrac{}{}{0pt}{}{k'}{j}\!}n/\parensL{1 + \sqrt{2}}^{k''-d}} =
\ThehtaXL{\parensXL{\!\genfrac{}{}{0pt}{}{k'}{j}\!}n/\parensL{1 + \sqrt{2}}^j}$.
The total running time of all refinement steps in the $h$th group is the sum of
the charges of all nodes removed from the recursion tree.
Since we can remove at most $\OhXL{\parensL{1+\sqrt{2}}^h}$ nodes from the tree,
the cost of all refinement steps in the $h$th group is therefore
\begin{equation}
  \label{eq:rt1}
  \OhV{\parensL{1+\sqrt{2}}^h\frac{\parensXL{\!\genfrac{}{}{0pt}{}{k'}{j}\!}n}{\parensL{1+\sqrt{2}}^{j}}}
  = \OhV{\parensL{1+\sqrt{2}}^{k'} \binom{k'}{j}n},
\end{equation}
where $k'$ and $j$ are chosen so that
$\parensXL{\!\genfrac{}{}{0pt}{}{k'}{j}\!}/\parensL{1+\sqrt{2}}^j$ is
maximized subject to the constraints $0 \le j \le k'$ and
$k' + j = h$.
It remains to bound this expression by $\OhL{3.18^h n}$.
First assume that $k' \le 2h/3$.
Then we can bound $\parensXL{\!\genfrac{}{}{0pt}{}{k'}{j}\!}$ by $2^{k'}$, and
$\parensL{1+\sqrt{2}}^{k'} \cdot \parensXL{\!\genfrac{}{}{0pt}{}{k'}{j}\!}$ by
$\parensL{2 + 2\sqrt{2}}^{k'} \le 4.84^{2h/3} \le 2.87^h$, that is,
(\ref{eq:rt1})~is bounded by $\OhL{2.87^h n}$.
For $k' = h$, we have $j = 0$ and, hence, (\ref{eq:rt1}) is bounded
by $\OhL{2.42^h n}$ in this case.
To bound (\ref{eq:rt1}) for $2h/3 < k' < h$, we make use of the following
observation.

\begin{observation}
  \label{obs:choose}
  $\displaystyle\binom{x}{y} =
  \OhV{\parensV{\frac{x}{y}}^y \parensV{\frac{x}{x-y}}^{x-y}}.$
\end{observation}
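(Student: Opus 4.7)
The plan is to establish the stronger pointwise inequality
$\binom{x}{y} \le (x/y)^y (x/(x-y))^{x-y}$,
which trivially implies the claimed $O$-bound. The main idea is a standard probabilistic argument based on the binomial theorem: for any $p \in [0,1]$,
\[
  1 = (p + (1-p))^x = \sum_{k=0}^{x} \binom{x}{k} p^k (1-p)^{x-k},
\]
so in particular every single term on the right is at most $1$. Setting $p := y/x$, the term with $k = y$ gives
\[
  \binom{x}{y} \left(\frac{y}{x}\right)^{y}\!\left(\frac{x-y}{x}\right)^{x-y} \le 1,
\]
and rearranging yields the claimed inequality. Since the right-hand side of the bound we obtain is exactly the quantity inside the $O(\cdot)$ in the observation's statement, the observation follows with constant~$1$.

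The only thing to be careful about is the boundary cases $y=0$ and $y=x$, where the expression involves an indeterminate factor of the form $(x/0)^0$. These are handled by the standard convention $0^0 = 1$; under this convention, both sides of the inequality equal $1$, so the bound remains valid. Thus, no nontrivial case analysis or extra polynomial prefactors are needed, which is somewhat cleaner than going through Stirling's approximation (an alternative route that would produce the same bound up to a $\sqrt{x/(y(x-y))}$ factor, harmless inside $O(\cdot)$ but slightly fiddlier at the boundary).

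I do not anticipate any significant obstacle here: the whole proof is one application of the binomial theorem followed by a single algebraic rearrangement. The only step that requires even a moment's thought is verifying the boundary behaviour, and that is immediate under the usual $0^0 = 1$ convention. Consequently, the proof should fit comfortably in a few lines.
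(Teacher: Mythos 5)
Your proof is correct: the binomial-theorem argument with $p = y/x$ gives the pointwise bound $\binom{x}{y} \le (x/y)^y\,(x/(x-y))^{x-y}$ with constant $1$, which is exactly what the observation asserts (and more). The paper states this observation without proof, treating it as a standard fact, so there is no competing argument to compare against; your derivation, including the $0^0 = 1$ handling of the boundary cases, is the standard and complete justification.
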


Observation~\ref{obs:choose} allows us to bound (\ref{eq:rt1}) by
\begin{multline*}
  \OhV{\parensL{1+\sqrt{2}}^{k'}
  \parensV{\frac{k'}{j}}^j \parensV{\frac{k'}{k' - j}}^{k' - j} n}
  =\\
  \OhV{\parensV{\parensL{1+\sqrt{2}}^\alpha
  \parensV{\frac{\alpha}{1-\alpha}}^{1 - \alpha}
  \parensV{\frac{\alpha}{2\alpha-1}}^{2\alpha-1}}^h n},
\end{multline*}
where $\alpha := k'/h$ and, hence, $k' = \alpha h$ and $j = (1 - \alpha)h$.
It remains to determine the value of $\alpha$ such that $2/3 < \alpha < 1$
and the function
\begin{equation*}
  b(\alpha) = \parensL{1 + \sqrt{2}}^\alpha \parensV{\frac{\alpha}{1-\alpha}}^{1-\alpha}
  \parensV{\frac{\alpha}{2\alpha-1}}^{2\alpha-1}
\end{equation*}
is maximized.
Taking the derivative and setting to zero, we obtain that $b(\alpha)$ is
maximized for
$\alpha = \frac{1}{2} + \frac{\sqrt{7 + 6\sqrt{2}}}{10 + 2\sqrt{2}}$, which
gives $b(\alpha) \le 3.18$.
This finishes the proof that the total cost of the refinement steps in the
$h$th group is $\OhL{3.18^h n}$, which, as we argued already, implies that
the running time of the entire algorithm is $\OhL{3.18^k n}$.
Thus, we have the following theorem.

\begin{theorem}
  \label{thm:final-hyb}
  For two rooted $X$-trees $T_1$ and $T_2$ and a parameter $k$, it takes
  $\OhL{3.18^k n}$ time to decide whether $\aecut{T_1,T_1,T_2} \le k$.
\end{theorem}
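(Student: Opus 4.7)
The plan is to analyze the algorithm of Section~\ref{sec:marking} with one modification: in each refinement step invoked on an AF $F$ obtained by cutting $k'$ edges in $T_2$, I would try only subsets of at most $k'' := \min(k', k - k')$ marked potential exit nodes. Correctness is immediate: since every cut in the branching phase and every fixing in the refinement phase occurs in a fully contracted forest, fixing $j$ marked potential exit nodes adds at least $j$ components, so any subset with $k' + j > k$ produces a forest with more than $k+1$ components and thus cannot yield a target AAF. Since the branching phase still makes $\OhL{2.42^k}$ recursive calls, it suffices to show the \emph{total} cost of all refinement steps is $\OhL{3.18^k n}$; this dominates the $\OhL{2.42^k n}$ cost of the branching phase.

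To analyze the refinement cost, I would split each refinement step into sub-steps by subset size $j$, where the sub-step of size $j$ on an AF obtained by cutting $k'$ edges costs $\OhXL{\binom{k'}{j} n}$, then partition these sub-steps into groups indexed by $h := k' + j \in \set{0, 1, \ldots, k}$. The goal is to show the total cost of the $h$th group is $\OhL{3.18^h n}$, whereupon summing a geometric series gives $\OhL{3.18^k n}$. Only invocations $\aalg{F_1, F_2, k''}$ with $k'' \ge d := k - h$ contribute to the $h$th group, so I restrict attention to the subtree of the recursion tree with parameters at least $d$. As in the proof of Lemma~\ref{lem:invocations}, the maximum-size such subtree comes from Case~\ref{case:hyb:cab}'s branching $I(k'') = 2I(k''-1) + I(k''-2)$, yielding $\ThehtaXL{\parensL{1+\sqrt{2}}^{k''-d}}$ nodes below any invocation of parameter $k''$ and $\OhXL{\parensL{1+\sqrt{2}}^h}$ leaves overall. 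Charging the $\OhXL{\binom{k'}{j} n}$ cost of a sub-step uniformly to the $\ThehtaXL{\parensL{1+\sqrt{2}}^j}$ nodes of the corresponding removed subtree, the total cost of the $h$th group is bounded by $\OhXL{\parensL{1+\sqrt{2}}^{k'}\binom{k'}{j} n}$ with $k' + j = h$.

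The main obstacle is to verify that this expression is $\OhL{3.18^h n}$ uniformly over admissible $(k', j)$. I would split into three ranges of $\alpha := k'/h$. For $\alpha \le 2/3$, bounding $\binom{k'}{j} \le 2^{k'}$ gives $\parensL{2+2\sqrt{2}}^{k'} \le 4.84^{2h/3} \le 3.18^h$. For $\alpha = 1$ (i.e., $j = 0$), the bound is $\parensL{1+\sqrt{2}}^h \le 2.42^h$. The genuinely tight range $2/3 < \alpha < 1$ needs the Stirling-style estimate $\binom{x}{y} = \OhXL{(x/y)^y (x/(x-y))^{x-y}}$ of Observation~\ref{obs:choose}, which rewrites the bound as $b(\alpha)^h$ with
\begin{equation*}
  b(\alpha) = \parensL{1+\sqrt{2}}^\alpha \parensV{\frac{\alpha}{1-\alpha}}^{1-\alpha} \parensV{\frac{\alpha}{2\alpha-1}}^{2\alpha-1}.
\end{equation*}
Taking logarithmic derivative, setting $b'(\alpha) = 0$, and solving yields the critical point $\alpha^* = \tfrac{1}{2} + \tfrac{\sqrt{7 + 6\sqrt{2}}}{10 + 2\sqrt{2}}$, and a direct numerical evaluation gives $b(\alpha^*) \le 3.18$. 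Combining the three ranges establishes the group bound $\OhL{3.18^h n}$ and hence the theorem.
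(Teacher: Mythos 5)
Your proposal is correct and follows essentially the same route as the paper's own argument: the same restriction to subsets of at most $\min(k', k-k')$ marked potential exit nodes, the same grouping of refinement sub-steps by $h = k' + j$, the same charging scheme against the worst-case recursion subtree of size $\OhL{(1+\sqrt{2})^h}$, and the same three-range optimization of $b(\alpha)$ with the identical critical point. The only (harmless) difference is that in the range $\alpha \le 2/3$ the paper tightens $4.84^{2h/3}$ to $2.87^h$ rather than stopping at $3.18^h$.
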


As with the MAF algorithms, we can use known
kernelization rules~\cite{bordewich07chn}
to transform the trees $T_1$ and $T_2$ into two trees $T_1'$ and $T_2'$
of size $\Oh{\ecut{T_1, T_2, T_2}}$.
However, unlike the kernelization rules used for SPR distance, these
kernelization rules produce trees that do \emph{not} have the same hybridization
number as $T_1$ and~$T_2$.
One of these rules, the \emph{Chain Reduction}, replaces a chain of leaves
$a_1, a_2, \ldots$ with a pair of leaves $a,b$.
Bordewich and Semple~\cite{bordewich07chn} showed that in an MAAF of the
resulting two trees, either $a$ and $b$ are both isolated or neither is.
A corresponding MAAF of $T_1$ and $T_2$ can be obtained by cutting the parent
edges of $a_1, a_2, \ldots$ in the first case or replacing $a$ and $b$ with
the sequence of leaves $a_1, a_2, \ldots$ in the second case.
The difference in size between these two MAAFs is captured by assigning
the number of leaves removed by the reduction as a weight to the pair $(a,b)$.
The weight of an AAF of the two reduced trees $T_1'$ and $T_2'$ then is
the number of components of the AAF plus the weights of all such pairs $(a,b)$
such that $a$ and $b$ are isolated in the AAF.
This weight equals the size of the corresponding AAF of $T_1$ and $T_2$.

It is not difficult to incorporate these weights into our MAAF algorithm.
Whenever the refinement algorithm would return ``yes'',
we first add the sum of the weights of isolated pairs to the
number of components in the found AAF.
If, and only if, this total is less than or equal to~$k_0$, we return ``yes''.
Any AF $F$ of $T_1'$ and $T_2'$ with weight $w(F) = \aecut{T_1,T_2,T_2}$
has at most $w(F)$ components and thus will be examined by this strategy.
Similarly, the depth of the recursion is bounded by the number of components,
and thus by $k_0$.
Thus, we obtain the following corollary.

\begin{corollary}
  \label{cor:maaf}
  For two rooted $X$-trees $T_1$ and $T_2$ and a parameter $k$, it
  takes $\OhL{3.18^k k + n^3}$ time to decide whether $\aecut{T_1, T_2,
    T_2} \le k$.
\end{corollary}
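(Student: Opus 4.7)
The plan is to mirror the derivation of Corollary~\ref{cor:maf} from Theorem~\ref{thm:maf:fpt}, but with the additional wrinkle that the kernelization rules for hybridization number do not preserve $\aecut{\cdot,\cdot,\cdot}$ exactly; they preserve it only up to bookkeeping for chain weights. So the proof has two parts: (i)~invoke the existing $\Oh{n^3}$-time kernelization of Bordewich and Semple~\cite{bordewich07chn} to reduce $T_1$ and $T_2$ to trees $T_1'$ and $T_2'$ of total size $\Oh{k}$ together with a weight on each chain pair, and (ii)~run a weight-aware variant of the $\OhL{3.18^k n}$-time algorithm from Theorem~\ref{thm:final-hyb} on $T_1'$ and $T_2'$.

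First, I would check that the kernelization runs in $\Oh{n^3}$ time and that if $T_1'$ or $T_2'$ has more than $ck$ vertices for the constant $c$ from~\cite{bordewich07chn}, we can immediately answer ``no'' because then $\aecut{T_1,T_2,T_2}>k$. Otherwise $\size{T_1'},\size{T_2'} = \Oh{k}$, and any execution of the $\OhL{3.18^k n}$-time algorithm on $T_1'$ and $T_2'$ spends $\OhL{3.18^k k}$ time, giving the total running time of $\OhL{3.18^k k + n^3}$ promised in the corollary.

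The substantive step is explaining why the weight-aware algorithm is correct. Following the discussion preceding the corollary, I would modify the acceptance test inside $\balg{F,k_0}$ so that when a candidate AAF $F'$ of $T_1'$ and $T_2'$ is found, we accept it only when the number of components of $F'$ plus the sum of the weights of those chain pairs $(a,b)$ that are isolated in $F'$ is at most $k_0+1$. By the Bordewich--Semple chain-reduction argument (in an optimal AAF of the reduced trees, $a$ and $b$ are either both isolated or both nonisolated, and the two cases correspond exactly to cutting or not cutting the chain in $T_1$ and $T_2$), this weighted count equals the size of a corresponding AAF of $T_1$ and $T_2$, so the test is correct. Moreover, any weighted-optimal AAF $F$ of $T_1'$ and $T_2'$ has at most $w(F)\le k_0+1$ components, so it is still reachable within the branching depth used by our algorithm; the invocations $\aalg{\cdot,\cdot,\cdot}$ never need a parameter exceeding $k_0$.

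The only mild obstacle is ensuring that the combinatorial analysis of \S\ref{sec:combinations} still applies in the weighted setting. This follows because that analysis only used the number of edge cuts and the number of marked potential exit nodes; the chain weights enter solely through the final numerical test at the end of a refinement step and do not alter the branching or refinement structure. With these pieces in place, the corollary follows by combining the kernelization (cost $\Oh{n^3}$) with the weight-aware algorithm on the $\Oh{k}$-size kernel (cost $\OhL{3.18^k k}$).
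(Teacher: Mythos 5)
Your proposal is correct and follows essentially the same route as the paper: apply the Bordewich--Semple kernelization, handle the chain-reduction weights by adding the weights of isolated pairs to the component count in the refinement step's acceptance test, and observe that the branching depth and the analysis of the refinement phase are unaffected. The only (harmless) divergence is that you compare the weighted total against $k_0+1$ where the paper's text compares against $k_0$; your version is the one consistent with the convention that an AAF witnessing $\aecut{T_1,T_2,T_2}\le k_0$ has at most $k_0+1$ components.
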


\section{Conclusions}

\label{sec:concl}

The algorithms presented in this paper are the theoretically fastest algorithms
for computing SPR distances and hybridization numbers of rooted phylogenies.
The most important open problem is extending our approach to computing maximum
agreement forests and maximum acyclic agreement forests for multifurcating trees
and for more than two trees.
Evolutionary biologists often construct phylogenetic trees using methods that
assign a measure of statistical support to each edge of the tree.
Contracting edges with poor statistical support eliminates bipartitions that
may be artifacts of the manner in which the tree was constructed
but the resulting trees are multifurcating trees.
If we can extend our methods to support multifurcating trees, the comparisons
of binary phylogenies our new algorithms make possible can be applied also
to multifurcating trees.
The kernelization results of Linz and Semple~\cite{linz09hnt}
for maximum acyclic agreement forests apply to such trees.
Extending our bounded search tree approach to computing agreement forests
of multifurcating trees is currently the focus of ongoing efforts on our part.

A first step towards comparing \emph{multiple} phylogenies over a set of species
could be to identify groups of species whose pattern of relatedness is the same
in all trees, which is exactly what a maximum agreement forest of all
the trees in the given set would represent.
The $8$-approximation algorithm by Chataigner~\cite{chataigner05} for computing
an MAF of two or more unrooted phylogenies and the FPT algorithm by Chen
and Wang~\cite{chen12} for computing all MAAFs of a set of rooted phylogenies
are important steps in this direction.
We believe that some of the ideas in this paper may lead to improvements
of the latter result.

While the theoretical results presented in this paper are interesting in their
own right, as they shed further light on the complexity of computing agreement
forests, experimental results indicate that our algorithms also
perform very well in practice.
In~\cite{whidden2010fast}, we evaluated the practical performance
of our algorithms for rooted SPR distance and demonstrated that they
are an order of magnitude faster than the currently best exact alternatives
\cite{wu2009practical,bonet2009efficiently} based on reductions to
integer linear programming and satisfiability testing, respectively.
The implementation and its source code are publicly available~\cite{rspr}.
The largest distances reported using implementations of previous
methods are a hybridization number of 19 on 46 taxa~\cite{wu2010fast}
and an SPR distance of 19 on 46 taxa~\cite{wu2009practical}.
In contrast, our method took less than 5
hours to compute SPR distances of up to 46 on trees with 144 taxa and
99 on synthetic 1000-leaf trees and required less than one second
on average to compute SPR distances of up to 19 on 144 taxa.
This represents a major step forward towards tools that can infer
reticulation scenarios for the thousands of genomes that have been
fully sequenced to date.


\bibliographystyle{siam}
\bibliography{rspr}

\end{document}